\newtheorem{theorem}{Theorem}
\newtheorem{assumption}{Assumption}
\newtheorem{lemma}{Lemma}
\newtheorem{definition}{Definition}
\newtheorem{proposition}{Proposition}
\crefname{assumption}{Assumption}{Assumptions}
\newcommand{\E}{\mathbb{E}}
\DeclareMathOperator*{\argmin}{arg\,min}
\title{Measure Transport with Kernel Stein Discrepancy}
\author{Matthew A. Fisher$^1$, Tui Nolan$^{2,3}$, Matthew M. Graham$^{1,4}$, \\ Dennis Prangle$^1$, Chris. J. Oates$^{1,4}$ \\
\small $^1$Newcastle University, UK \\
\small $^2$Cornell University, US \\
\small $^3$University of Technology Sydney, Australia \\
\small $^4$Alan Turing Institute, UK }
\begin{document}

\maketitle

\begin{abstract}
Measure transport underpins several recent algorithms for posterior approximation in the Bayesian context, wherein a transport map is sought to minimise the Kullback--Leibler divergence (KLD) from the posterior to the approximation. 
The KLD is a strong mode of convergence, requiring absolute continuity of measures and placing restrictions on which transport maps can be permitted.
Here we propose to minimise a kernel Stein discrepancy (KSD) instead, requiring only that the set of transport maps is dense in an $L^2$ sense and demonstrating how this condition can be validated.
The consistency of the associated posterior approximation is established and empirical results suggest that KSD is competitive and more flexible alternative to KLD for measure transport.
\end{abstract}

\section{Introduction}
\label{sec:introduction}

A popular and constructive approach to approximation of complicated distributions is to learn a transformation from a simpler reference distribution.
Within machine learning, neural networks are often used to provide flexible families of transformations which can be optimised by stochastic gradient descent on a suitable objective, with \emph{variational autoencoders} \citep{kingma2013,rezende2014stochastic}, \emph{generative adversarial networks} \citep{goodfellow2014generative}, \emph{generative moment matching networks} \citep{li2015generative,dziugaite2015training} and \emph{normalizing flows}
\citep{rezende2015variational, kingma, dinh2016density, papamakarios2019normalizing, Kobyzev_2020} all fitting in this framework.
The principal application for such generative models is \emph{distribution estimation};
samples are provided from the target distribution and the task is to fit a distribution to these samples. Parallel developments within applied mathematics view the transformation as a \emph{transport map} performing \emph{measure transport} \citep{Marzouk_2016, parno2018transport}. The principal application for measure transport is \emph{posterior approximation}; an un-normalised density function defines the complicated distribution and the task is to approximate it.
In this paper we study posterior approximation, noting that the flexible transformations developed in the machine learning literature can also be applied to this task.

Measure transport provides a powerful computational tool for Bayesian inference in settings that can be challenging for standard approaches, such as Markov chain Monte Carlo (MCMC) or mean field variational inference.
For example, even sophisticated MCMC methods can fail when a posterior is concentrated around a sub-manifold of the parameter space \citep{livingstone2019robustness, au2020manifold}, while it can be relatively straight-forward to define a transport map whose image is the sub-manifold \citep{parno2018transport, brehmer2020flows}.
Likewise, mean field variational inference methods can perform poorly in this context, since independence assumptions can be strongly violated \citep{blei2017variational}.

Let $\mathcal{Y}$ be a measurable space equipped with a probability measure $P$, representing the posterior to be approximated.
The task that we consider in this paper is to elicit a second measurable space $\mathcal{X}$, equipped with a probability measure $Q$, and a measurable function $T : \mathcal{X} \rightarrow \mathcal{Y}$, such that the push-forward $T_\# Q$ (i.e.~the measure produced by applying $T$ to samples from $Q$) approximates $P$, in a sense to be specified. It is further desired that $Q$ should be a ``simple'' distribution that is easily sampled. In contrast to the literature on normalising flows, it is \emph{not} stipulated that $T$ should be a bijection, since we wish to allow for situations where $\mathcal{X}$ and $\mathcal{Y}$ have different cardinalities or where $P$ is supported on a sub-manifold.

A natural starting point is a notion of \emph{discrepancy} $\mathcal{D}(P_1,P_2)$ between two probability measures, $P_1$ and $P_2$, on $\mathcal{Y}$, with the property that $\mathcal{D}(P_1,P_2) = 0$ if and only if $P_1$ and $P_2$ are equal. Then one selects a measurable space $\mathcal{X}$ and associated probability measure $Q$ and seeks a solution to
\begin{equation} \label{eq:minproblem1}
\argmin_{T \in \mathcal{T}} \mathcal{D}(P,T_\# Q),
\end{equation}
over a suitable set $\mathcal{T}$ of measurable functions from $\mathcal{X}$ to $\mathcal{Y}$. A popular choice of $\mathcal{D}$ is the Kullback-Leibler divergence (KLD), giving rise to \emph{variational inference} \citep{blei2017variational}, but other discrepancies can be considered \citep{ranganath2016operator}. The problem in \eqref{eq:minproblem1} can be augmented to include also the selection of $\mathcal{X}$ and $Q$, if desired.

The solution of \eqref{eq:minproblem1} provides an approximation to $P$ whose quality will depend on the set $\mathcal{T}$ and the discrepancy $\mathcal{D}$. This motivates us to consider the choice of $\mathcal{T}$ and $\mathcal{D}$, taking into account considerations that go beyond computational tractability. For example, a desirable property would be that, for a sequence of probability measures $(P_n)_{n\in\mathbb{N}}$, if $\mathcal{D}(P,P_n) \rightarrow 0$ then $P_n \rightarrow P$ in some suitable sense. For $\mathcal{D} = \mathcal{D}_{\text{KL}}$, the KLD\footnote{We use the notation $\mathcal{D}_{\text{KL}}(P,Q) \coloneqq \text{KL}(Q||P)$.}, it holds that $\mathcal{D}_{\text{KL}}(P,P_n) \rightarrow 0$ implies $P_n$ converges to $P$ in \emph{total variation}, from Pinsker's inequality \citep{tsybakov2008nonparametric}. This is a strong mode of convergence, requiring absolute continuity of measures that may be difficult to ensure when the posterior is concentrated near to a sub-manifold. Accordingly, the use of KLD for measure transport places strong and potentially impractical restrictions on which maps $T$ are permitted \citep[e.g.][required that $T$ is a diffeomorphism with $\text{det}\nabla T > 0$ on $\mathcal{X}$]{Marzouk_2016,parno2018transport}. This motivates us in this paper to consider the use of an alternative discrepancy $\mathcal{D}$, corresponding to a weaker mode of convergence, for posterior approximation using measure transport. The advantage of discrepancy measures inducing weaker modes of convergence has also motivated recent developments in generative adversarial networks \citep{arjovsky2017wasserstein}.

Our contributions are as follows:

\begin{itemize} 
	\item We propose kernel Stein discrepancy (KSD) as an alternative to KLD for posterior approximation using measure transport, showing that KSD renders \eqref{eq:minproblem1} tractable for standard stochastic optimisation methods (\Cref{prop: gradients}).
	\item Using properties of KSD we are able to establish consistency under explicit and verifiable assumptions on $P$, $Q$ and $\mathcal{T}$ (\Cref{theorem: convergence theorem}). 
	\item Our theoretical assumptions are weak -- we do not even require $T$ to be a bijection -- and are verified for a particular class of neural network (\Cref{prop: ReLU DNN}).
	In particular, we do not require $Q$ and $P$ to be defined on the same space, allowing quite flexible mappings $T$ to be constructed.
	\item Empirical results support KSD as a competitive alternative to KLD for measure transport.
\end{itemize}

Earlier work on this topic appears limited to \cite{hu2018stein}, who trained a neural network with KSD.
Here we consider general transport maps and we establish consistency of the method, which these earlier authors did not.
We note also that gradient flows provide an alternative (implicit) approach to measure transport \citep{liu2016stein}.

\paragraph{Outline:}
The remainder of the paper is structure as follows: \Cref{sec:methodology} introduces measure transport using KSD, \Cref{sec: theory} contains theoretical analysis for this new method, \Cref{sec:experiments} presents a detailed empirical assessment and \Cref{sec: discuss} contains a discussion of our main findings. 

\section{Methods} \label{sec:methodology}

This section introduces measure transport using KSD.
In \Cref{subsec: measure transport} and \Cref{subsec: RKHS} we recall mathematical definitions from measure transport and Hilbert spaces, respectively; in \Cref{subsec: KSD} we recall the definition and properties of KSD; in \Cref{subsec: methodology} we formally define our proposed method, and in \Cref{subsec: transport maps section} we present some parametric families $\mathcal{T}$ that can be employed.

\paragraph{Notation:}
The set of probability measures on a measurable space $\mathcal{X}$ is denoted $\mathcal{P}(\mathcal{X})$ and a point mass at $x \in \mathcal{X}$ is denoted $\delta(x) \in \mathcal{P}(\mathcal{X})$. For $P \in \mathcal{P}(\mathcal{X})$ let $L^q(P) := \{f : \mathcal{X} \rightarrow \mathbb{R} : \int f^q \mathrm{d}P < \infty\}$. For $P \in \mathcal{P}(\mathbb{R}^d)$ and $(P_n)_{n \in \mathbb{N}} \subset \mathcal{P}(\mathbb{R}^d)$, let $P_n \Rightarrow P$ denote weak convergence of the sequence of measures $(P_n)_{n \in \mathbb{N}}$ to $P$. The Euclidean norm on $\mathbb{R}^n$ is denoted $\|\cdot\|$. Partial derivatives are denoted $\partial_{x}$. For a function $f:\mathbb{R}^n\rightarrow\mathbb{R}$ the gradient is defined as $[\nabla f]_{i} = \partial_{x_i} f$. For a function $f = (f_1,\ldots,f_m) : \mathbb{R}^n \rightarrow \mathbb{R}^m$, the divergence is defined as $\nabla \cdot f = \sum_{i=1}^n \partial_{x_i} f_i$.

Our main results in this paper concern the Euclidean space $\mathbb{R}^d$, but in some parts of the paper, such as \Cref{subsec: measure transport}, it is possible to state definitions at a greater level of generality at no additional effort - in such situations we do so.

\subsection{Measure Transport} \label{subsec: measure transport}

A \emph{Borel space} $\mathcal{X}$ is a topological space equipped with its Borel $\sigma$-algebra, denoted $\Sigma_\mathcal{X}$.
Throughout this paper we restrict attention to Borel spaces $\mathcal{X}$ and $\mathcal{Y}$. Let $Q \in \mathcal{P}(\mathcal{X})$ and $P \in \mathcal{P}(\mathcal{Y})$. In the parlance of measure transport,  $Q$ is the \emph{reference} and $P$ the \emph{target}. Let $T : \mathcal{X} \rightarrow \mathcal{Y}$ be a measurable function and define the \emph{pushforward} of $Q$ through $T$ as the probability measure $T_\# Q \in \mathcal{P}(\mathcal{Y})$ that assigns mass $(T_\# Q)(A) = Q(T^{-1}(A))$ to each $A\in\Sigma_\mathcal{Y}$. Here $T^{-1}(A) = \{x \in \mathcal{X} : T(x) \in A\}$ denotes the pre-image of $A$ under $T$.
Such a function $T$ is called a \emph{transport map} from $Q$ to $P$ if $T_\# Q = P$. 

Faced with a complicated distribution $P$, if one can express $P$ using a transport map $T$ and a distribution $Q$ that can be sampled, then samples from $P$ can be generated by applying $T$ to samples from $Q$. This idea underpins elementary methods for numerical simulation of random variables \citep{devroye2013}. However, in posterior approximation it will not typically be straightforward to identify a transport map and at best one can seek an \emph{approximate} transport map, for which $T_\# Q$ approximates $P$ in some sense to be specified. In this paper we seek approximations in the sense of KSD, which is formally introduced in \Cref{subsec: KSD} and requires concepts in \Cref{subsec: RKHS}, next.

\subsection{Hilbert Spaces} \label{subsec: RKHS}

A Hilbert space $\mathcal{H}$ is a complete inner product space; in this paper we use subscripts, such as $\langle \cdot , \cdot \rangle_{\mathcal{H}}$, to denote the associated inner product. Given two Hilbert spaces $\mathcal{G}$, $\mathcal{H}$, the \emph{Cartesian product} $\mathcal{G} \times \mathcal{H}$ is again a Hilbert space equipped with the inner product $\langle (g_1,h_1), (g_2,h_2) \rangle_{\mathcal{G} \times \mathcal{H}} := \langle g_1, g_2 \rangle_{\mathcal{G}} + \langle h_1, h_2 \rangle_{\mathcal{H}}$.
In what follows we let $\mathcal{B}(\mathcal{H}) := \{ h \in \mathcal{H} : \langle h , h \rangle_{\mathcal{H}} \leq 1 \}$ denote the unit ball in a Hilbert space $\mathcal{H}$.

From the Moore--Aronszajn theorem \citep{aronszajn1950theory}, any symmetric positive definite function $k : \mathcal{Y} \times \mathcal{Y} \rightarrow \mathbb{R}$ defines a unique \emph{reproducing kernel Hilbert space} of real-valued functions on $\mathcal{Y}$, denoted $\mathcal{H}_k$ and with inner-product denoted $\langle \cdot , \cdot \rangle_{\mathcal{H}_k}$. Indeed, $\mathcal{H}_k$ is a Hilbert space characterised by the properties (i) $k(\cdot,y) \in \mathcal{H}_k$ for all $y \in \mathcal{Y}$, (ii) $\langle h , k(\cdot,y) \rangle_{\mathcal{H}_k} = h(y)$ for all $h \in \mathcal{H}_k$, $y \in \mathcal{Y}$. Reproducing kernels are central to KSD, as described next.

\subsection{Kernel Stein Discrepancy} \label{subsec: KSD}

Stein discrepancies were introduced in \cite{gorham2015measuring} to provide a notion of discrepancy that is computable in the Bayesian statistical context. In this paper we focus on so-called \emph{kernel Stein discrepancy} \cite[KSD;][]{liu2016kernelized,chwialkowski2016kernel,gorham2017measuring} since this has lower computational overhead compared to the original proposal of \cite{gorham2015measuring}.

The construction of KSD relies on Stein's method \citep{stein1972} where, for a possibly complicated probability measure $P \in \mathcal{P}(\mathcal{Y})$ of interest, one identifies a \emph{Stein set} $\mathcal{F}$ and a \emph{Stein operator} $\mathcal{A}_P$, such that $\mathcal{A}_P$ acts on elements $f \in \mathcal{F}$ to return functions $\mathcal{A}_P f : \mathcal{Y} \rightarrow \mathbb{R}$ with the property that
\begin{equation}
P' = P \quad \text{iff} \quad \E_{Y \sim P'}[(\mathcal{A}_P f)(Y)] = 0 \; \forall f\in\mathcal{F} \label{eq: stein identity}
\end{equation}
for all $P' \in \mathcal{P}(\mathcal{Y})$.
A \emph{Stein discrepancy} uses the extent to which \eqref{eq: stein identity} is violated to quantify the discrepancy between $P'$ and $P$:
\begin{align*}
\mathcal{D}_{\text{S}}(P,P') := \sup_{f \in \mathcal{F}} |\E_{Y\sim P'}[(\mathcal{A}_P f)(Y)]|
\end{align*}
Note that $\mathcal{D}_{\text{S}}$ is not symmetric in its arguments.
For $\mathcal{Y} = \mathbb{R}^d$ and suitably regular $P$, which admits a positive and differentiable density function $p$, \cite{liu2016kernelized,chwialkowski2016kernel} showed that one may take $\mathcal{F}$ to be a set of smooth vector fields $f : \mathbb{R}^d \rightarrow \mathbb{R}^d$ and $\mathcal{A}_P$ to be a carefully chosen differential operator on $\mathbb{R}^d$. More precisely, and letting $s_p := \nabla \log p$, we have \Cref{theorem:gorham inverse multi quadric} below, which is due to \citet[][Theorem 7]{gorham2017measuring}:

\begin{definition}[\cite{Eberle_2015}]
	A probability measure $P \in \mathcal{P}(\mathbb{R}^d)$ is called \emph{distantly dissipative} if $\liminf_{r\rightarrow \infty} \kappa(r) > 0$, where
	\begin{equation*} 
	%\kappa(r) \coloneqq \inf_{\|x-y\| = r} - \frac{\langle s_p(x) -s_p(y),x-y\rangle}{\|x-y\|^2}.
	\textstyle \kappa(r) \coloneqq -r^{-2} \inf_{\|x-y\| = r} \langle s_p(x) -s_p(y),x-y\rangle. 
	\end{equation*}
\end{definition}

\begin{theorem}
	\label{theorem:gorham inverse multi quadric} Suppose that $P \in \mathcal{P}(\mathbb{R}^d)$ is distantly dissipative. 
	For some $c>0$, $\ell > 0$ and $\beta \in (-1,0)$, let
	\begin{eqnarray}
	\hspace{-5pt} \mathcal{F} := \textstyle \mathcal{B} \big( \prod_{i=1}^d \mathcal{H}_k \big), \quad k(x,y) := (c^2 + \|\frac{x-y}{\ell}\|^2)^\beta \label{eq: IMQ} \\
	\mathcal{A}_P f := f\cdot\nabla \log p + \nabla\cdot f  . \hspace{60pt}
	\end{eqnarray} 
	Then \eqref{eq: stein identity} holds.
	Moreover, if $\mathcal{D}_{\textsc{S}}(P,P_n)\rightarrow 0$, then $P_n \Rightarrow P$. 
\end{theorem}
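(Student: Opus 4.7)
The plan is to separate the statement into two independent claims: (a) the Stein identity \eqref{eq: stein identity}, which comprises a direct (integration-by-parts) half and a converse (no ``false zero'') half; and (b) the convergence-detection property $\mathcal{D}_{\textsc{S}}(P,P_n) \to 0 \Rightarrow P_n \Rightarrow P$. I would treat these essentially independently, since the second is where distant dissipativity really does its work.

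For the forward half of (a), I would first establish the regularity of an arbitrary $f \in \mathcal{F}$. Each coordinate $f_i \in \mathcal{H}_k$ for the smooth IMQ kernel, so the reproducing property yields pointwise bounds $|f_i(y)| \le \|f_i\|_{\mathcal{H}_k}\, k(y,y)^{1/2}$ and analogous bounds for $|\partial_j f_i(y)|$ via $\partial_{y_j}\partial_{x_j} k(x,y)|_{x=y}$. Since the IMQ kernel is bounded with bounded derivatives, every $f\in\mathcal{F}$ is globally bounded and $C^1$ with bounded gradient. This reduces Stein's identity to the computation
\begin{equation*}
\int_{\mathbb{R}^d} (\mathcal{A}_P f)(y)\, p(y)\, \mathrm{d}y \;=\; \int_{\mathbb{R}^d} \nabla\cdot(f p)(y)\, \mathrm{d}y \;=\; 0,
\end{equation*}
where the boundary term in the divergence theorem vanishes because distant dissipativity forces $p$ to decay fast enough at infinity to kill the bounded factor $f$. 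For the converse half, I would invoke the fact that the RKHS $\prod_{i=1}^d \mathcal{H}_k$ of the IMQ kernel is $C_0$-universal, so vanishing of $\E_{P'}[\mathcal{A}_P f]$ for every $f\in\mathcal{F}$ forces the signed measure $(p' - p)\,\mathrm{d}x$ (or, more precisely, its push under $\mathcal{A}_P$) to vanish, hence $P' = P$.

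For (b), I would adopt the Stein-equation route. Fix a test-function class $\mathcal{U}$ whose uniform convergence of expectations metrises weak convergence on $\mathbb{R}^d$ (e.g.\ bounded Lipschitz). For each $h \in \mathcal{U}$, solve the Poisson equation $\mathcal{A}_P u_h = h - \E_P[h]$ using the Langevin diffusion with infinitesimal generator $\mathcal{A}_P$ and invariant measure $P$; distant dissipativity delivers geometric ergodicity of this diffusion and hence uniform control on $u_h$, $\nabla u_h$ and $\nabla^2 u_h$ with constants depending only on the Lipschitz data of $h$. These derivative bounds, combined with the pointwise control on the IMQ kernel and its derivatives, let me embed (a rescaling of) $u_h$ into $\mathcal{B}\bigl(\prod_{i=1}^d \mathcal{H}_k\bigr)$, so that
\begin{equation*}
|\E_{P_n}[h] - \E_{P}[h]| \;=\; |\E_{P_n}[\mathcal{A}_P u_h]| \;\le\; C(h)\, \mathcal{D}_{\textsc{S}}(P,P_n) \;\longrightarrow\; 0,
\end{equation*}
uniformly enough in $h\in\mathcal{U}$ to conclude $P_n \Rightarrow P$.

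The main obstacle is the embedding step: the IMQ kernel decays only polynomially, so one cannot hope that $u_h$ itself sits in $\mathcal{H}_k$ for arbitrary bounded Lipschitz $h$ without some work. The delicate task is to match the growth of the Stein-equation solution $u_h$ (which is typically Lipschitz on $\mathbb{R}^d$, hence grows at infinity) against the tail behaviour of $k$; this is precisely where the restriction $\beta \in (-1,0)$ is used, since it balances enough smoothness to put $u_h$ in the RKHS against enough heaviness at infinity to accommodate its polynomial growth. Verifying this balance under the distant-dissipativity assumption is the technical heart of the argument.
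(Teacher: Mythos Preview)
The paper does not actually prove this theorem: it is stated as a quotation of \citet[][Theorem 7]{gorham2017measuring} and no argument is given in the text or in the appendix. So there is no ``paper's own proof'' to compare against; the relevant benchmark is the original argument in Gorham and Mackey.

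Your treatment of part (a) is standard and essentially correct. The gap is in part (b). Your plan is to solve the Stein (Poisson) equation $\mathcal{A}_P u_h = h - \E_P[h]$ for bounded Lipschitz $h$ and then place a rescaling of $u_h$ inside $\mathcal{B}\bigl(\prod_{i=1}^d \mathcal{H}_k\bigr)$. This embedding step fails. Under distant dissipativity one obtains uniform bounds on $u_h$, $\nabla u_h$, $\nabla^2 u_h$, but boundedness and smoothness are not sufficient for membership in the IMQ RKHS; that space is a specific Bessel-potential--type space, and a generic bounded $C^2$ vector field need not lie in it with finite norm. You flag this yourself as ``the main obstacle,'' but it is not an obstacle that can be overcome by tuning $\beta$: the constraint $\beta \in (-1,0)$ does not make the RKHS large enough to contain all Langevin--Stein solutions, and no uniform bound of the form $|\E_{P_n}[h]-\E_P[h]| \le C\,\mathcal{D}_{\textsc S}(P,P_n)$ over bounded Lipschitz $h$ is known for the IMQ KSD.

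The argument in \citet{gorham2017measuring} takes a different route that avoids this embedding entirely. They first show that $\mathcal{D}_{\textsc S}(P,P_n)\to 0$ forces the sequence $(P_n)$ to be uniformly tight, by exhibiting explicit coercive test functions built from the IMQ kernel (this is precisely where $\beta\in(-1,0)$ enters: slow enough decay to detect escaping mass, fast enough for the Stein operator image to stay integrable under distant dissipativity). They then use lower semicontinuity of KSD under weak convergence together with the separating property from part (a) to conclude that every subsequential weak limit equals $P$, hence $P_n\Rightarrow P$. If you want to supply a proof here, that tightness-plus-separation argument is the one to reproduce.
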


The kernel $k$ appearing in \eqref{eq: IMQ} is called the \emph{inverse multi-quadric} kernel.
It is known that the elements of $\mathcal{H}_k$ are smooth functions, which justifies the application of the differential operator.
The last part of \Cref{theorem:gorham inverse multi quadric} clarifies why KSD is useful; convergence in KSD controls the standard notion of weak convergence of measures to $P$. 

KSD, in contrast to KLD, is well-defined when the approximating measure $P'$ and the target $P$ differ in their support.
Moreover, in some situations KSD can be exactly computed: from \citet[][Theorem 3.6]{liu2016kernelized} or equivalently \citet[][Theorem 2.1]{chwialkowski2016kernel},
\begin{align}
\mathcal{D}_{\text{S}}(P,P') & = \sqrt{ \E_{Y,Y' \sim P'}[u_p(Y,Y')] } \label{eq:KSD compute} \\
u_p(y,y') & := s_p(y)^\top k(y,y') s_p(y') + s_p(y)^\top \nabla_{y'} k(y,y') \nonumber \\
& \hspace{-5pt} + \nabla_y k(y,y')^\top s_p(y') + \nabla_y\cdot\nabla_{y'} k(y,y'). \label{eq: up formula}
\end{align}
It follows that KSD can be exactly computed whenever $P'$ has a finite support and $s_p$ can be evaluated on this support:
\begin{align} \textstyle
\mathcal{D}_{\text{S}}\left(P, \frac{1}{n} \sum_{i=1}^n \delta(y_i) \right) = \sqrt{\frac{1}{n^2} \sum_{i,j=1}^n u_p(y_i,y_j)} . \label{eq: KSD final}
\end{align}
Computation of \eqref{eq: KSD final} can proceed with $p$ available up to an unknown normalisation constant, facilitating application in the Bayesian context. Now we are in a position to present our proposed method.

\subsection{Measure Transport with KSD} \label{subsec: methodology}

Our proposed method for posterior approximation is simply stated at a high level; we attempt to solve \eqref{eq:minproblem1} with $\mathcal{D} = \mathcal{D}_{\text{S}}$ and over a set $\mathcal{T}$ of candidate functions $T^\theta : \mathcal{X} \rightarrow \mathcal{Y}$ indexed by a finite-dimensional parameter $\theta \in \Theta$. That is, we aim to solve
\begin{equation} \label{eq:minproblem2}
\argmin_{\theta \in \Theta} \mathcal{D}_{\text{S}}(P,T_\#^\theta Q).
\end{equation}
Discussion of the choice of $\mathcal{T}$ is deferred until \Cref{subsec: transport maps section}.
Compared to previous approaches to measure transport using KLD \citep{rezende2015variational,kingma,Marzouk_2016,parno2018transport}, KSD is arguably more computationally and theoretically tractable; the computational aspects will now be described.

The solution of \eqref{eq:minproblem2} is equivalent to minimisation of the function $F(\theta) := \mathcal{D}_{\text{S}}(P,T_\#^\theta Q)^2$ over $\theta \in \Theta$. In order to employ state-of-the-art algorithms for stochastic optimisation, an unbiased estimator for the gradient $\nabla_\theta F(\theta)$ is required. A naive starting point would be to differentiate the expression for the KSD of an empirical measure in \eqref{eq: KSD final}, however the resulting \emph{V-statistic} is biased. Under weak conditions, we establish instead the following unbiased estimator (a \emph{U-statistic}) for the gradient:

\begin{proposition} \label{prop: gradients}
	Let $\Theta \subseteq \mathbb{R}^p$ be an open set.
	Assume that $\forall \theta \in \Theta$
	\begin{enumerate}[leftmargin=25pt]
		\item[(A1)] $(x,x') \mapsto u_p(T^\theta(x),T^\theta(x'))$ is measurable;
		\item[(A2)] $\mathbb{E}_{X,X' \sim Q}\left[ |u_p(T^\theta(X),T^\theta(X'))|\right] < \infty$;
		\item[(A3)] $\mathbb{E}_{X,X' \sim Q}\left[ \|\nabla_{\theta} u_p(T^\theta(X),T^\theta(X'))\|\right] < \infty$;
	\end{enumerate}
	and that $\forall x,x' \in \mathcal{X}$, 
	\begin{enumerate}[leftmargin=25pt]
		\item[(A4)] $\theta \mapsto \nabla_\theta u_p(T^\theta(x),T^\theta(x'))$ is continuous.
	\end{enumerate}
	Then $\forall \theta \in \Theta$
	\begin{align*} 
	\nabla_\theta F(\theta)
	= \textstyle \E\Big[ \frac{1}{n(n-1)} \sum\limits_{i \neq j} \nabla_\theta u_p(T^\theta(x_i),T^\theta(x_j)) \Big] ,
	\end{align*}
	where the expectation is taken with respect to independent samples $x_1,\dots,x_n \sim Q$.
\end{proposition}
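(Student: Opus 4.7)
The plan is to first rewrite $F(\theta)$ as a double integral against $Q \otimes Q$, then justify differentiation under the integral, then convert the resulting expectation into the $U$-statistic form via independence.

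First I would note that, by the explicit formula for KSD in \eqref{eq:KSD compute} applied to the pushforward $T_\#^\theta Q$,
\begin{equation*}
F(\theta) \;=\; \mathcal{D}_{\text{S}}(P,T_\#^\theta Q)^2 \;=\; \E_{X,X' \sim Q}\bigl[u_p(T^\theta(X),T^\theta(X'))\bigr],
\end{equation*}
where $X,X'$ are independent draws from $Q$. To write this iterated expectation as a genuine integral against the product measure $Q\otimes Q$, I would invoke Fubini's theorem: joint measurability is supplied by (A1), and the integrability required by Fubini is exactly (A2). This step is routine.

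Next I would differentiate under the expectation. Since $\Theta$ is open, fix $\theta_0 \in \Theta$ and choose a closed Euclidean ball $B \subset \Theta$ around $\theta_0$. For each fixed $(x,x')$, (A4) says $\theta \mapsto \nabla_\theta u_p(T^\theta(x),T^\theta(x'))$ is continuous, hence bounded on $B$; the mean value inequality then gives, for any $\theta,\theta' \in B$,
\begin{equation*}
\bigl| u_p(T^{\theta}(x),T^{\theta}(x')) - u_p(T^{\theta'}(x),T^{\theta'}(x')) \bigr| \;\leq\; \|\theta-\theta'\| \sup_{\tilde\theta \in B} \bigl\| \nabla_{\tilde\theta} u_p(T^{\tilde\theta}(x),T^{\tilde\theta}(x')) \bigr\|.
\end{equation*}
Combining this Lipschitz bound with the continuity in (A4) and the integrability in (A3), a standard dominated convergence argument (writing the derivative as the limit of difference quotients along a sequence $\theta_n \to \theta_0$ in $B$, and invoking Fatou/DCT to pass the limit through the expectation) yields
\begin{equation*}
\nabla_\theta F(\theta) \;=\; \E_{X,X' \sim Q}\bigl[ \nabla_\theta u_p(T^\theta(X),T^\theta(X')) \bigr].
\end{equation*}
This interchange of gradient and expectation is the main technical obstacle: (A3) provides integrability only pointwise in $\theta$, so one must exploit the openness of $\Theta$ together with (A4) to obtain a local dominating function and apply DCT.

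Finally, for the $U$-statistic representation, let $x_1,\dots,x_n$ be i.i.d.\ draws from $Q$. For any ordered pair $i \neq j$, the pair $(x_i,x_j)$ has law $Q\otimes Q$ by independence, so
\begin{equation*}
\E\bigl[ \nabla_\theta u_p(T^\theta(x_i),T^\theta(x_j)) \bigr] \;=\; \E_{X,X'\sim Q}\bigl[ \nabla_\theta u_p(T^\theta(X),T^\theta(X')) \bigr] \;=\; \nabla_\theta F(\theta).
\end{equation*}
Averaging over the $n(n-1)$ ordered pairs with $i \neq j$ and using linearity of expectation preserves this identity, giving the displayed formula. The symmetry of $u_p$ is not needed for the argument to go through, though it would allow one to reduce to unordered pairs if desired.
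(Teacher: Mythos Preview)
Your proof follows essentially the same route as the paper's: rewrite $F(\theta)$ as an expectation over $Q\otimes Q$ via the reparametrisation in \eqref{eq:KSD compute}, justify the interchange of gradient and expectation by combining the continuity assumption (A4) with the pointwise integrability assumptions (A2)--(A3), and then recognise the result as the mean of a $U$-statistic over i.i.d.\ pairs. The only cosmetic difference is packaging: the paper abstracts the interchange step into two auxiliary lemmas on \emph{locally uniformly integrably bounded} Carath\'eodory functions (\Cref{lem: LUIB,lem: interchange}), whereas you argue directly with the mean value inequality and dominated convergence, but the underlying logic---using (A4) on a compact neighbourhood to manufacture a local dominating function whose integrability is drawn from (A3)---is identical.
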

All proofs are contained in \Cref{sec: proofs}.
The assumptions on $u_p$ amount to assumptions on $T$, $p$ and $k$, by virtue of \eqref{eq: up formula}.
It is not difficult to find explicit assumptions on $T$, $p$ and $k$ that imply (A1-4), but these may be stronger than required and we prefer to present the most general result.

Armed with an unbiased estimator of the gradient, we can employ a stochastic optimisation approach, such as stochastic gradient descent \citep[SGD;][]{robbins1951} or adaptive moment estimation \citep[Adam;][]{kingma2014adam}. 
See \cite{kushner2003stochastic, ruder2016overview}.
For the results reported in the main text we used Adam, with $\theta$ initialised as described in \Cref{subsec: initialise}, but other choices were investigated (see \Cref{subsec: stochastic optimisation}).

\subsection{Parametric Transport Maps} \label{subsec: transport maps section}

In this section we describe some existing classes of transport map $T : \mathcal{X} \rightarrow \mathcal{Y}$ that are compatible with KSD measure transport. 
From \Cref{prop: gradients} we see that measure transport using KSD does not impose strong assumptions on the transport map.
Indeed, compared to KLD \citep{rezende2015variational,kingma,Marzouk_2016,parno2018transport} we do not require that $T$ is a diffeomorphism ($T$ need not even be continuous, nor a bijection), making our framework considerably more general.
This additional flexibility may allow measure to be transported more efficiently, using simpler maps.
That being said, if one wishes to compute the density of $T_\# Q$ (in addition to sampling from $T_\# Q$), then a diffeomorphism, along with the usual change-of-variables formula, should be used.

\paragraph{Triangular Maps:}
\cite{rosenblatt1952remarks} and \cite{knothe1957contributions} observed that, for $P,Q \in \mathcal{P}(\mathbb{R}^d)$ admitting densities, a transport map $T:\mathbb{R}^d\rightarrow\mathbb{R}^d$ can without loss of generality be sought in the \emph{triangular form}
\begin{equation} \label{eq:lowertriangular}
T(x) = (T_1(x_1), T_2(x_1,x_2), \dots, T_d(x_1,\ldots,x_d) ),
\end{equation}
where each $T_i:\mathbb{R}^i \rightarrow \mathbb{R}$ and $x = (x_1,\ldots,x_d)$ \citep[][Lemma 2.1]{bogachev2005triangular}. The triangular form was used in \cite{Marzouk_2016,parno2018transport}, since the Jacobian determinant, that is required when using KLD (but not KSD), can exploit the fact that $\nabla T$ is triangular to maintain linear complexity in $d$. 

\paragraph{Maps from Measure Transport:}
In the context of a triangular map $T = (T_1,\ldots,T_d)$, \cite{Marzouk_2016} and \cite{parno2018transport} considered several parametric models for the components $T_i$, including polynomials, radial basis functions and monotone parameterisations of the form
\begin{align*}
T_i(x_1,\ldots,x_i) & = f_i(x_1,\ldots,x_{i-1}) \\
& \qquad + \textstyle \int_0^{x_i} \exp(g_i(x_1,\ldots,x_{i-1},y))\,\mathrm{d}y,
\end{align*}
for functions $f_i:\mathbb{R}^{i-1}\rightarrow\mathbb{R}$ and $g_i:\mathbb{R}^i\rightarrow\mathbb{R}$. %(for $i=1$, $f_i$ is a constant). 
The monotone parameterisation ensures that $\text{det}\nabla T > 0$ on $\mathbb{R}^d$, which facilitates computation of the density of $T_\#P$, as required for KLD\footnote{For polynomials and radial basis functions, these authors only enforced $\text{det}\nabla T > 0$ locally, introducing an additional approximation error in evaluation of KLD; such issues do not arise with KSD.}.

\paragraph{Maps from Normalising Flows:}
The principal application of normalising flows is density estimation \citep{papamakarios2019normalizing,Kobyzev_2020}, but the parametric families of transport map used in this literature can also be used for posterior approximation \citep{rezende2015variational}. A normalising flow is required to be a diffeomorphism $T : \mathbb{R}^d \rightarrow \mathbb{R}^d$ with the property that the density of $T_\#Q$ can be computed. A popular choice that exploits the triangular form \eqref{eq:lowertriangular} is an \textit{autoregressive flow} $T_i(x) = \tau(c_i(x_1,\ldots,x_{i-1}),x_i)$, where $\tau$ is a monotonic transformation of $x_i$ parameterised by $c_i$, e.g.~an affine transformation $T_i(x) = \alpha_i x_i + \beta_i$ where $c_i$ outputs $\alpha_i \neq 0$ and $\beta_i$. For instance, \cite{kingma} proposed \textit{inverse autoregressive flows} (IAF), taking $T(x) = \mu + \exp(\sigma) \odot x$.
Here $\odot$ is elementwise multiplication and $\mu$ and $\sigma$ are vectors output by an autoregressive neural network: one designed so that $\mu_i, \sigma_i$ depend on $x$ only through $x_j$ for $j<i$. In \cite{huang2018neural}, $\tau$ was the output of a monotonic neural network and the resulting flow was called a \textit{neural autoregressive flow} (NAF). Compositions of normalising flows can also be considered, of the form
\begin{equation}
T = T^{(n)} \circ \dots \circ T^{(1)}
\label{eq: NF compose}
\end{equation}
where each $T^{(i)}$ is itself a normalising flow e.g.~a IAF. For instance, \cite{dinh2014nice} proposed using \textit{coupling layers} of the form $T^{(i)}(x) = (h(x_1,\ldots,x_r),x_{r+1},\ldots,x_d)$, where $r < d$ and $h:\mathbb{R}^r\rightarrow\mathbb{R}^r$ is a bijection. These only update the first $r$ components of $x$, so they are typically composed with permutations.

Regardless of the provenance of a transport map $T$, all free parameters of $T$ are collectively denoted $\theta$, and are to be estimated. The suitability of a parametric set of candidate maps in combination with KSD is studied both empirically in \Cref{sec:experiments} and theoretically, next.

\section{Theoretical Assessment} \label{sec: theory}

In \Cref{subsec: exist} we affirm basic conditions on $P$ and $Q$ for a transport map to exist. In \Cref{subsec: consistent} we establish sufficient conditions for the consistency of our method and in \Cref{subsec: validate} we consider a particular class of transport maps based on neural networks, to demonstrate how our conditions on the transport map can be explicitly validated.

\subsection{Existence of an $L^2$ Transport Map} \label{subsec: exist}

For a complete separable metric space $\mathcal{X}$, recall that the \emph{Wasserstein space} of order $p \geq 1$ is defined by taking some $x_0 \in \mathcal{X}$ and
\begin{equation*}
\mathcal{P}_p(\mathcal{X}) \coloneqq \textstyle \left\{P\in\mathcal{P}(\mathcal{X}) :  \int \text{dist}(x,x_0)^p\,\mathrm{d}P(x) < \infty \right\} ,
\end{equation*}
where the definition is in fact independent of the choice of $x_0 \in \mathcal{X}$ \citep[][Definition 6.4]{villani_2009}.
For existence of a transport map, we make the following assumptions on $P$ and $Q$:

\begin{assumption}[Assumptions on $Q$] \label{assumption reference}
	The reference measure $Q\in\mathcal{P}(\mathcal{X})$, where $\mathcal{X}$ is a complete separable metric space, and $Q(\{x\}) = 0$ for all $x\in\mathcal{X}$.
\end{assumption}

\begin{assumption}[Assumptions on $P$] \label{assumption target}
	The target measure $P\in\mathcal{P}_2(\mathbb{R}^d)$ has a strictly positive density $p$ on $\mathbb{R}^d$. 
\end{assumption}

These assumptions guarantee the existence of a transport map with $L^2$ regularity, as shown in the following result:

\begin{proposition} \label{prop: transport map properties}
	If \Cref{assumption reference,assumption target} hold, then there exists a transport map $T \in \prod_{i=1}^d L^2(Q)$ such that $T_\# Q = P$.
\end{proposition}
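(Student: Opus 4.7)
The plan is to construct $T$ as the composition of two measure-preserving maps: first a map carrying $(\mathcal{X}, Q)$ onto $([0,1]^d, \lambda^d)$, where $\lambda^d$ denotes Lebesgue measure on the unit cube, and then a Knothe--Rosenblatt rearrangement carrying $([0,1]^d, \lambda^d)$ onto $(\mathbb{R}^d, P)$. Once such a $T$ is in hand, the claimed $L^2$ regularity will drop out automatically from the second-moment assumption in \Cref{assumption target}.

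For the first map, since $\mathcal{X}$ is a Polish space it is a standard Borel space, and since $Q$ is atomless by \Cref{assumption reference}, Rokhlin's isomorphism theorem yields a measurable $\psi : \mathcal{X} \to [0,1]$ with $\psi_\# Q = \lambda$; composing with the standard bit-interleaving measure isomorphism $\chi : [0,1] \to [0,1]^d$, which satisfies $\chi_\# \lambda = \lambda^d$, one obtains $\Phi := \chi \circ \psi$ with $\Phi_\# Q = \lambda^d$. For the second map, \Cref{assumption target} gives a strictly positive density $p$, so the marginal and successive conditional distribution functions $F_i(\,\cdot \mid y_1, \ldots, y_{i-1})$ of $P$ are continuous and strictly increasing on $\mathbb{R}$, and the Knothe--Rosenblatt construction produces a triangular Borel map $S : [0,1]^d \to \mathbb{R}^d$, defined component-wise by $S_i(u) = F_i^{-1}(u_i \mid S_1(u), \ldots, S_{i-1}(u))$, with $S_\# \lambda^d = P$; this is exactly the content of the result of \citet{bogachev2005triangular} already invoked in \Cref{subsec: transport maps section}. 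Setting $T := S \circ \Phi$ then gives a measurable $T : \mathcal{X} \to \mathbb{R}^d$ with $T_\# Q = P$.

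It remains to verify that $T_i \in L^2(Q)$ for each $i$. By the change-of-variables formula for pushforward measures,
\begin{equation*}
\sum_{i=1}^d \int_{\mathcal{X}} T_i(x)^2 \, \mathrm{d}Q(x) = \int_{\mathcal{X}} \|T(x)\|^2 \, \mathrm{d}Q(x) = \int_{\mathbb{R}^d} \|y\|^2 \, \mathrm{d}P(y),
\end{equation*}
and the right-hand side is finite because $P \in \mathcal{P}_2(\mathbb{R}^d)$ by \Cref{assumption target}. The main obstacle is purely measure-theoretic bookkeeping: one has to invoke the right isomorphism result to convert the abstract atomlessness hypothesis on $Q$ into a concrete measure-preserving map into the unit cube; once this is accepted, the Knothe--Rosenblatt step is a standard triangular construction and the $L^2$ bound is immediate.
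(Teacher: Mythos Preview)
Your proof is correct and follows the same overall scheme as the paper: construct a measurable $T$ with $T_\# Q = P$ by composing measure isomorphisms, then obtain the $L^2$ regularity from the second-moment hypothesis in \Cref{assumption target} via the change-of-variables identity $\int_{\mathcal{X}}\|T\|^2\,\mathrm{d}Q=\int_{\mathbb{R}^d}\|y\|^2\,\mathrm{d}P$. That final step is verbatim the paper's argument.

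The construction of $T$ differs in detail. The paper applies the isomorphism theorem for measures (Kechris, Theorem~17.41) \emph{twice}, producing Borel isomorphisms $f:\mathcal{X}\to[0,1]$ and $g:\mathbb{R}^d\to[0,1]$ with $f_\# Q=g_\# P=m|_{[0,1]}$, and sets $T=g^{-1}\circ f$. You instead factor through $[0,1]^d$: one application of the isomorphism theorem gives $\psi:\mathcal{X}\to[0,1]$, bit-interleaving lifts this to $[0,1]^d$, and a Knothe--Rosenblatt map handles the $\mathbb{R}^d$ side. Both are valid; the paper's route is slightly more economical (one tool, used twice), whereas your route yields a more explicit, triangular map on the target side, which dovetails nicely with the triangular parametrisations discussed in \Cref{subsec: transport maps section}. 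Your invocation of \citet{bogachev2005triangular} is a mild overcitation---what you actually use is the elementary inverse-CDF construction on $[0,1]^d$, which is immediate once $p>0$ guarantees strictly increasing conditional CDFs---but this does not affect correctness.
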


Of course, such a transport map will not be unique in general.

\subsection{Consistent Posterior Approximation} \label{subsec: consistent}

The setting for our theoretical analysis considers a sequence $(\mathcal{T}_n)_{n \in \mathbb{N}}$ of parametric classes of transport map, where intuitively $\mathcal{T}_n$ provides a more flexible class of map as $n$ is increased.
For example, $\mathcal{T}_n$ could represent the class of triangular maps comprising of $n$th order polynomials, or a class of normalising flows comprising of $n$ layers in \eqref{eq: NF compose}.

\begin{assumption}[Assumptions on $\mathcal{T}_n$] \label{assum: G contains T}
	There exists a subset $\textgoth{T} \subseteq \prod_{i=1}^d L^2(Q)$ containing an element $T\in\textgoth{T}$ for which $T_\# Q=P$.
	The sequence $(\mathcal{T}_n)_{n \in \mathbb{N}}$ satisfies $\mathcal{T}_n \subseteq \textgoth{T}$ with $\mathcal{T}_n\subseteq\mathcal{T}_m$ for $n \leq m$ and $\mathcal{T}_\infty := \lim_{n \rightarrow \infty} \mathcal{T}_n$ is a dense set in $\textgoth{T}$.
\end{assumption} 

\Cref{prop: transport map properties} provides sufficient conditions for the set $\textgoth{T}$ in \Cref{assum: G contains T} to exist; the additional content of \Cref{assum: G contains T} ensures that $\mathcal{T}_\infty$ is rich enough to consistently approximate an exact transport map, in principle at least.
Next, we state our consistency result:

\begin{theorem}\label{theorem: convergence theorem}
	Let \Cref{assumption reference,assumption target,assum: G contains T} hold.
	Further suppose that $P$ is distantly dissipative, with $\nabla \log p$ Lipschitz and $\E_{X\sim P}[\|\nabla \log p(X)\|^2] < \infty$. 
	Suppose that $T_n \in \mathcal{T}_n$ satisfies
	\begin{equation}
	\mathcal{D}_{\textsc{S}}(P,(T_n)_\#Q) - \inf_{T\in\mathcal{T}_n} \mathcal{D}_{\textsc{S}}(P,T_\#Q) \stackrel{n\rightarrow \infty}{\rightarrow} 0 , \label{eq: main result}
	\end{equation}
	with $\mathcal{D}_{\textsc{S}}$ defined  in \Cref{theorem:gorham inverse multi quadric}.
	Then $(T_n)_\#Q \Rightarrow P$.
\end{theorem}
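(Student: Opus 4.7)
The plan is to reduce the weak convergence claim to convergence of kernel Stein discrepancy and then exhibit an approximating sequence realising $\mathcal{D}_{\textsc{S}} \to 0$ over $\bigcup_n \mathcal{T}_n$. Since $P$ is distantly dissipative, Theorem \ref{theorem:gorham inverse multi quadric} reduces the goal $(T_n)_\#Q \Rightarrow P$ to proving $\mathcal{D}_{\textsc{S}}(P, (T_n)_\#Q) \to 0$, which by the near-optimality hypothesis \eqref{eq: main result} reduces further to
\[
\inf_{T \in \mathcal{T}_n} \mathcal{D}_{\textsc{S}}(P, T_\#Q) \;\longrightarrow\; 0.
\]

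To realise this infimum, I would use \Cref{prop: transport map properties} to obtain a transport map $T^\star \in \prod_{i=1}^d L^2(Q)$ with $T^\star_\# Q = P$, hence $\mathcal{D}_{\textsc{S}}(P, T^\star_\# Q) = 0$ by the Stein identity \eqref{eq: stein identity}. By \Cref{assum: G contains T}, $T^\star$ lies in the closure of $\mathcal{T}_\infty$ in $\prod_{i=1}^d L^2(Q)$, so I may choose $T_n' \in \mathcal{T}_n$ with $T_n' \to T^\star$ in this norm. The remaining task is then to show
\[
\mathcal{D}_{\textsc{S}}(P, (T_n')_\#Q)^2 \;=\; \E_{X,X' \sim Q}\bigl[u_p(T_n'(X), T_n'(X'))\bigr] \;\longrightarrow\; 0,
\]
whereupon $\inf_{T \in \mathcal{T}_n} \mathcal{D}_{\textsc{S}}(P, T_\#Q) \leq \mathcal{D}_{\textsc{S}}(P, (T_n')_\#Q) \to 0$ and the proof concludes.

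The main obstacle is the latter limit, which amounts to continuity of $T \mapsto \E_{X,X'}[u_p(T(X), T(X'))]$ with respect to $\prod_{i=1}^d L^2(Q)$ convergence. I would combine three ingredients. The IMQ kernel in \eqref{eq: IMQ} has $k$, $\nabla_y k$, $\nabla_{y'} k$, and $\nabla_y \cdot \nabla_{y'} k$ all uniformly bounded on $\mathbb{R}^d \times \mathbb{R}^d$, while the Lipschitz hypothesis on $\nabla \log p$ gives the affine bound $\|s_p(y)\| \leq \|s_p(0)\| + L\|y\|$; plugging both into \eqref{eq: up formula} yields an envelope $|u_p(y,y')| \lesssim 1 + \|y\|^2 + \|y'\|^2$. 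Next, $T_n' \to T^\star$ in $L^2(Q)$ implies that $\|T_n'(\cdot)\|^2$ converges in $L^1(Q)$ and is therefore uniformly integrable, and the envelope propagates this to uniform integrability of $\{u_p(T_n'(X), T_n'(X'))\}$ under $Q \otimes Q$. Finally, along a subsequence $T_n' \to T^\star$ holds $Q$-a.e., so by continuity of $u_p$ the integrands converge $Q \otimes Q$-a.e., and Vitali's convergence theorem delivers convergence of the expectations to $\E[u_p(T^\star(X), T^\star(X'))] = \mathcal{D}_{\textsc{S}}(P,P)^2 = 0$; the standard subsequence-of-subsequence argument then upgrades convergence to the full sequence.

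Minor bookkeeping — that $\nabla \log p$ lies in $L^2(P)$ so that $\mathcal{D}_{\textsc{S}}(P, T^\star_\#Q)$ is finite, and that the Stein identity forces $\mathcal{D}_{\textsc{S}}(P,P) = 0$ — follows from the theorem's hypotheses together with the material in \Cref{subsec: KSD}. Chaining all steps yields $\mathcal{D}_{\textsc{S}}(P, (T_n)_\#Q) \to 0$, and a final application of \Cref{theorem:gorham inverse multi quadric} gives $(T_n)_\#Q \Rightarrow P$, as required.
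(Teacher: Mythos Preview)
Your proof is correct but takes a genuinely different route from the paper's. Both arguments share the initial reduction via \Cref{theorem:gorham inverse multi quadric} and \eqref{eq: main result} to showing $\inf_{T\in\mathcal{T}_n}\mathcal{D}_{\textsc{S}}(P,T_\#Q)\to 0$. From there, the paper inserts the Wasserstein distance as an intermediate: it invokes a result of Gorham and Mackey that $W_1$ convergence implies KSD convergence (under exactly the Lipschitz-$\nabla\log p$ and $L^2$-score hypotheses you are given, together with boundedness of the IMQ derivatives), and then proves the elementary estimate $W_1(S_\#Q,T_\#Q)\le \|S-T\|_{\prod L^2(Q)}$ to pass from $L^2(Q)$ approximation of a true transport map to $W_1$ convergence. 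You instead bypass Wasserstein entirely and establish $L^2(Q)$-continuity of $T\mapsto \E[u_p(T(X),T(X'))]$ directly, via the quadratic envelope $|u_p(y,y')|\lesssim 1+\|y\|^2+\|y'\|^2$ (which indeed follows from bounded IMQ derivatives and the affine growth of $s_p$), uniform integrability from $L^2$ convergence, and Vitali. The paper's route is more modular---it reuses an off-the-shelf Wasserstein-to-KSD implication---while yours is more self-contained and makes the role of the Lipschitz hypothesis on $\nabla\log p$ completely explicit through the envelope.

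One small bookkeeping point: you invoke \Cref{prop: transport map properties} to obtain $T^\star$ and then claim \Cref{assum: G contains T} places it in the closure of $\mathcal{T}_\infty$, but \Cref{assum: G contains T} only asserts density of $\mathcal{T}_\infty$ in $\textgoth{T}$, not in all of $\prod_{i=1}^d L^2(Q)$, and separately guarantees the existence of some $T\in\textgoth{T}$ with $T_\#Q=P$. You should take $T^\star$ to be that element of $\textgoth{T}$ directly (making the appeal to \Cref{prop: transport map properties} superfluous). Similarly, the passage from ``$\mathcal{T}_\infty$ dense, so pick $T_n'\in\mathcal{T}_n$ with $T_n'\to T^\star$'' hides a short diagonal/monotonicity argument (the paper writes this out), but it is routine. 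Neither point affects the substance of your argument.
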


The statement in \eqref{eq: main result} accommodates the reality that, although finding the global optimum $T \in \mathcal{T}_n$ will typically be impractical, one can realistically expect to find an element $T_n$ that achieves an almost-as-low value of KSD, e.g. using a stochastic optimisation method.
To our knowledge, no comparable consistency guarantees exist for measure transport using KLD.

\subsection{Validating our Assumptions on $\mathcal{T}_n$} \label{subsec: validate}

Recall that earlier work on measure transport placed strong restrictions on the set of maps $\mathcal{T}_n$, requiring each map to be a diffeomorphism with non-vanishing Jacobian determinant.
In contrast, our assumptions on $\mathcal{T}_n$ are almost trivial; we do not require smoothness and there is not a bijection requirement.
Our assumptions can be satisfied \emph{in principle} whenever $\mathcal{X}$ is a complete separable metric space, since then $\prod_{i=1}^d L^2(Q)$ is separable \citep[][Proposition 3.4.5]{cohn2013measure} and admits a Schauder basis $\{\phi_i\}_{i \in \mathbb{N}}$, so we may take $\mathcal{T}_n = \text{span}\{\phi_1,\dots,\phi_n\}$ for \Cref{assum: G contains T} to hold. 
In practice we are able to verify \Cref{assum: G contains T} for quite non-trivial classes of map $\mathcal{T}_n$.
To demonstrate, one such example is presented next:

We consider deep neural networks with multi-layer perceptron architecture and ReLU activation functions
Let $\mathcal{R}_{l,n}(\mathbb{R}^p\rightarrow\mathbb{R}^d)$ denote the set of such \emph{ReLU neural networks} $f : \mathbb{R}^p \rightarrow \mathbb{R}^d$ with $l$ layers and width at most $n$. See \Cref{def: DNN} in \Cref{app: DNN} for a formal definition.

\begin{proposition} \label{prop: ReLU DNN}
	Let \Cref{assumption reference,assumption target} hold.
	Let $Q$ admit a positive, continuous and bounded density on $\mathcal{X} = \mathbb{R}^p$.
	Let $\mathcal{T}_n = \mathcal{R}_{l,n}(\mathbb{R}^p\rightarrow\mathbb{R}^d)$ with $l := \lceil \log_2(p+1) \rceil$. 
	Then \Cref{assum: G contains T} holds.
\end{proposition}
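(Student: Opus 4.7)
The plan is to take $\textgoth{T} := \prod_{i=1}^d L^2(Q)$ and verify the three clauses of \Cref{assum: G contains T} in turn. Existence of an element of $\textgoth{T}$ that pushes $Q$ to $P$ is immediate from \Cref{prop: transport map properties}. Nestedness $\mathcal{R}_{l,n}\subseteq\mathcal{R}_{l,n+1}$ holds by padding each hidden layer of a narrower network with an additional neuron whose incoming and outgoing weights are set to zero, which does not change the computed function. The inclusion $\mathcal{R}_{l,n} \subseteq \prod_{i=1}^d L^2(Q)$ follows from the standard estimate $\|T(x)\| \leq C_T(1+\|x\|)$ for a ReLU network $T$, combined with second-moment integrability of $Q$, which will be available either directly or after a mild restriction of the weight set inside the formal definition in \Cref{app: DNN}.

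The substantive content is density of $\mathcal{T}_\infty := \bigcup_{n} \mathcal{T}_n$ in $\textgoth{T}$. I would carry this out in two successive approximations. Given any $F \in \textgoth{T}$ and $\varepsilon>0$, I first approximate $F$ by a continuous compactly supported $g \in C_c(\mathbb{R}^p;\mathbb{R}^d)$ in $\prod L^2(Q)$ norm, using density of $C_c$ in $L^2(Q)$; this density holds because $Q$ has a bounded Lebesgue density, so $L^2(Q)$ inherits $C_c$-density from $L^2(\mathbb{R}^p)$. I then approximate $g$ uniformly on a large ball $K$ containing $\operatorname{supp}(g)$ by a ReLU network of depth exactly $l = \lceil \log_2(p+1)\rceil$, invoking a universal approximation theorem for deep narrow ReLU networks at this tight minimal depth.

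Passing from uniform approximation on $K$ to $L^2(Q)$ approximation on $\mathbb{R}^p$ requires a tail estimate: although $g$ vanishes outside $\operatorname{supp}(g)$, the approximating network $T_n$ need not. I would handle this by enlarging $K$ so that $Q(K^c)$ is small, and by clipping the output of $T_n$ to a bounded range using a ReLU $\min/\max$ gadget; the clipped network still approximates $g$ uniformly on $K$ because $g$ is bounded, and its globally bounded output gives a controllable $L^2(Q)$ tail contribution. The clipping gadget has constant depth and width and can be absorbed into the final layer without breaking the depth budget $l$.

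The principal obstacle is the universal approximation theorem at precisely the depth $l = \lceil \log_2(p+1)\rceil$, which is not the classical Cybenko/Hornik result but an architecture-specific statement about narrow deep ReLU networks from the recent deep learning theory literature; this is the step on which the stated value of $l$ turns. The remaining ingredients -- $C_c$-density in $L^2(Q)$, ReLU clipping, and tail estimation -- are all routine analysis, though they must be threaded together so that the final construction lies in $\mathcal{R}_{l,n}$ for some finite $n$.
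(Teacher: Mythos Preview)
Your skeleton matches the paper's: same choice $\textgoth{T}=\prod_{i=1}^d L^2(Q)$, same appeal to \Cref{prop: transport map properties} for the exact transport map, and nestedness is indeed trivial. The density argument, however, diverges from the paper's in a way that creates a real difficulty. You approximate uniformly on a compact $K$ and then \emph{clip} the network output to control the $L^2(Q)$ tail on $K^c$. But a ReLU clipping gadget $y\mapsto \mathrm{ReLU}(y+M)-\mathrm{ReLU}(y-M)-M$ necessarily inserts at least one additional activation layer \emph{after} the final linear map $F_{l+1}$, so the clipped network lives in $\mathcal{R}_{l+1,\cdot}$, not $\mathcal{R}_{l,\cdot}$. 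Since the proposition fixes the depth at the tight value $l=\lceil\log_2(p+1)\rceil$, this breaks the budget; it cannot be ``absorbed into the final layer''. And without clipping, uniform closeness on $K$ gives you no handle on $\int_{K^c}\|T_n\|^2\,\mathrm{d}Q$, because a ReLU network can grow linearly there regardless of how small $Q(K^c)$ is.

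The paper sidesteps this entirely by working in $L^2(\mathbb{R}^p)$ with respect to Lebesgue measure rather than uniformly on compacts. It truncates $T$ to a box $[-c,c]^p$, uses positivity and continuity of the density $q$ to show the truncation lies in $\prod_{i=1}^d L^2(\mathbb{R}^p)$, and then invokes the $L^2(\mathbb{R}^p)$ density of $\mathcal{R}_{l,\infty}$ at exactly this depth (the result of Arora et~al.\ you allude to, stated as \Cref{prop: dense NN}). The transfer back to $L^2(Q)$ is a one-line estimate $\int|f-g|^2\,\mathrm{d}Q\le q_{\max}\int|f-g|^2\,\mathrm{d}x$ using the assumed boundedness of $q$; no tail surgery on the network is needed because the $L^2(\mathbb{R}^p)$ approximation already controls the global behaviour. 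This is precisely where the hypothesis that $q$ is bounded does its work, which your route does not exploit.
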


The maps in \Cref{prop: ReLU DNN} are not bijections, illustrating the greater flexibility of KSD compared to KLD for measure transport.
This completes our theoretical discussion, and our attention now turns to empirical assessment.

\section{Empirical Assessment} \label{sec:experiments}

The purpose of this section is to investigate whether KSD is competitive with KLD for measure transport. 
\Cref{subsec: toy examples} compares both approaches using a variety of transport maps and a synthetic test-bed. 
Then, in \Cref{subsec: biochemical oxygen,subsec: lotka-volterra} we consider more realistic posterior approximation problems arising from, respectively, a biochemical oxygen model and a parametric differential equation model.

In all experiments we used the kernel \eqref{eq: IMQ} with $c=1$, $\ell = 0.1$, $\beta = -1/2$ (other choices were investigated in \Cref{subsec: misspecified lengthscale}), the stochastic optimiser Adam with batch size $n=100$ and learning rate $0.001$ (other choices were investigated in \Cref{subsec: stochastic optimisation}), and the reference distribution $Q$ was taken to be a standard Gaussian on $\mathbb{R}^p$ (other choices were considered in \Cref{subsec: other Q}).
Code to reproduce these results is available at \url{https://github.com/MatthewAlexanderFisher/MTKSD}.

\subsection{Synthetic Test-Bed} \label{subsec: toy examples}

\begin{figure*}[t!] 
	\centering
	% high resolution version (slower to render)
	\includegraphics[width=1.\textwidth]{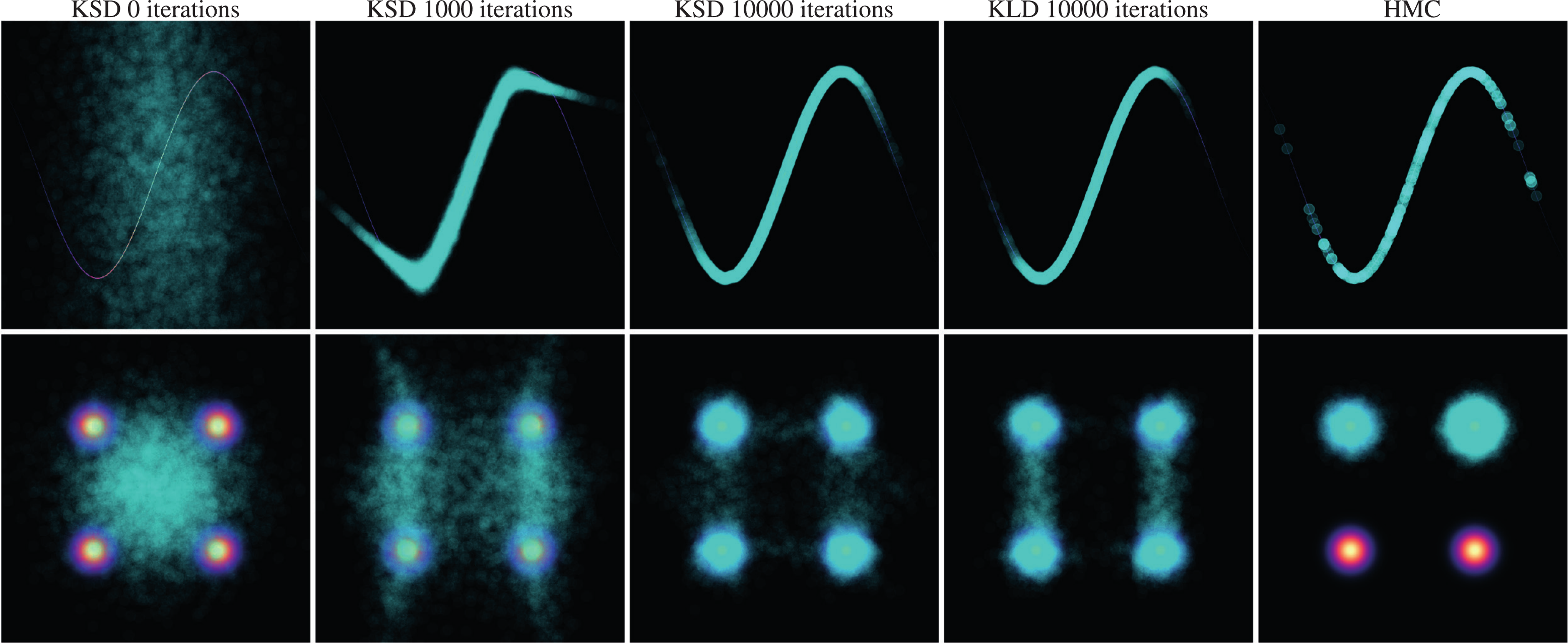}
	
	%low resolution version (quicker to render)
	%\includegraphics[width=1.\textwidth]{low_res_figures/low_res_main_fig.png} 
	\caption{
		Measure transport with KSD, versus KLD and HMC.
		The top row reports results for approximation of a sinusoidal density using an inverse autoregressive flow, while the bottom row reports analogous results for a multimodal density and a neural autoregressive flow. 
		The first three columns display convergence of the KSD-based method as the number of iterations of stochastic optimisation is increased.
		The remaining columns compare the output of the KLD-based method and HMC for an identical computational budget.
		% remember to change to full resolution plot
	}
	\label{fig: toy densities}
\end{figure*}

First we consider a set of synthetic examples that have previously been used to motivate measure transport as an alternative to MCMC.
Three targets were considered; $p_1$ is a sinusoidal density, $p_2$ is a banana density and $p_3$ is multimodal; these are formally defined in \Cref{subsec: details of toy models}. 
Results for $p_1$ and $p_3$ are displayed in \Cref{fig: toy densities}. 
The convergence of the approximation to the target is shown for KSD and the corresponding approximation after $10^4$ iterations of Adam is shown for KLD.
Since, for both objectives, one iteration requires $10^2$ evaluations of $\log p_i$ or its gradient, this represents a total of $10^6$ calls to $\log p_i$ or its gradient.
The corresponding approximation produced using an adaptive Hamiltonian Monte Carlo (HMC) algorithm \citep{hoffman2014no,betancourt2017conceptual} is shown, where the HMC chains were terminated once $10^6$ evaluations of $\log p_i$ or its gradient had been performed. 
%\chris{Could Matt please recommend a reference for HMC?}
Both $p_1$ and $p_3$ present challenges for HMC that, to some extent, can be overcome using measure transport.

The results in \Cref{fig: toy densities} are for a fixed class of transport map, but now we report a systematic comparison of KSD and KLD.
The majority of maps that we consider are diffeomorphic (in order that KLD can be used), implemented in Pyro \citep{bingham2018pyro}.
Since KSD does not place such requirements on the transport map, we also report results for a (non-bijective) ReLU neural network.
Our performance measure is an estimate of the Wasserstein-1 distance between the target and approximate distributions computed using $10^4$ samples (see \Cref{subsec: performance metrics} for details). Results are detailed in \Cref{table: toy density results}. Overall, there is no clear sense in which KSD out-performs KLD or \textit{vice versa}; KSD performed best on $p_1$, KLD performed best on $p_2$, and for $p_3$ the results were mixed. We conclude that these objectives offer similar performance for measure transport. However, KLD cannot be applied to the ReLU neural network (denoted N/A in \Cref{table: toy density results}) due to the strong constraints on the mapping that are required by KLD. 
\begin{table*}[t!] 
	\newcolumntype{Y}{>{\raggedright\arraybackslash}X}

\begin{tabularx}{\textwidth}{@{}llYYYYYY@{}} \toprule
        &  & \multicolumn{2}{c}{Sinusoidal}  & \multicolumn{2}{c}{Banana} & \multicolumn{2}{c}{Multimodal} \\
      \cmidrule(lr){3-4} \cmidrule(l){5-6} \cmidrule(l){7-8} 
     {Transport Map} & {$N$} & {KSD} & {KLD} & {KSD} & {KLD} &  {KSD} & {KLD} \\ \midrule
     IAF  & $10^4$                & $\textbf{0.38}$  & $0.52$ & 0.20 & \textbf{0.07}  & \textbf{0.67} & 1.1 \\
     IAF (stable) & $10^4$        & $\textbf{0.35}$ & $0.39$ & 0.16  & \textbf{0.11} & \textbf{0.61}  & 0.62 \\ 
     NAF   & $10^4$               & $\textbf{0.55}$ & $0.64$ & 0.39  & \textbf{0.025} &  \textbf{0.095} & 0.11 \\ 
     SAF  & $10^4$                & $\textbf{0.23}$ & $0.58$ & 0.20  & \textbf{0.18} & \textbf{0.30}  & 0.48 \\ 
     B-NAF  & $10^4$                & $\textbf{0.78}$ & $1.2$ & 0.70  & \textbf{0.18} & 1.0   & \textbf{0.99}  \\ 
     Polynomial (cubic)  & $10^4$ & $\textbf{0.40}$ & $0.84$ & 0.25  & \textbf{0.059} & 0.51  & \textbf{0.43} \\ 
     IAF mixture  & $3{\times}10^4$        & $1.29$ & $\textbf{0.61}$ & 0.19  & \textbf{0.14} & 0.037  & \textbf{0.036} \\ 
     ReLU network & $5{\times} 10^4$        & $\textbf{0.71}$ & N/A & \textbf{0.43}  & N/A & \textbf{0.22}  & N/A \\\bottomrule
\end{tabularx}

	\caption{Results from the synthetic test-bed. 
		The first column indicates which parametric class of transport map was used; full details for each class can be found in \Cref{subsec: details of toy models}.
		A map-dependent number of iterations of stochastic optimisation, $N$, are reported - this is to ensure that all optimisers approximately converged. 
		The main table reports the (first) Wasserstein distance between the approximation $T_\#Q$ and the target $P$. 
		Bold values indicate which of KSD or KLD performed best.
	}
	\label{table: toy density results}
\end{table*}

Two discussion points are now highlighted:
First, it is known that certain normalising flows can capture multiple modes due to their flexibility, however others cannot \citep{huang2018neural}. One solution is to consider a mixture of transport maps; i.e. $\sum_{i=1}^dw_i T^{(i)}_\# Q_i$ with reference distribution $Q_1 \times \dots \times Q_d$ and mixing weights $w_i>0$ satisfying $\sum_i w_i = 1$. This idea has been explored recently in \cite{pires2020variational}. In \Cref{table: toy density results} we report results using mixtures of \textit{inverse autoregressive flows} (IAF). As one might hope, these approximations were successful in finding each of the modes in $p_3$, but fared relatively worse for $p_1$ and $p_2$. Second, since in Adam we are using a Monte Carlo estimator of the gradient, it is natural to ask whether a quasi Monte Carlo estimator would offer an improvement \citep{wenzel2018quasi}. This was investigated and our results are reported in \Cref{subsec: QMC investigation}.

\subsection{Biochemical Oxygen Demand Model} \label{subsec: biochemical oxygen}

Next we reproduce an experiment that was used to illustrate measure transport using KLD in \cite{parno2018transport}. 
The task is parameter inference in a $d=2$ dimensional oxygen demand model, of the form $B(t) = \alpha_1(1 - \exp( - \alpha_2 t))$, where $B(t)$ is the biochemical oxygen demand at time $t$, a measure of the consumption of oxygen in a given water column sample due to the decay of organic matter \citep{sullivan2010biochemical}. The parameters to be inferred are $\alpha_1 , \alpha_2 > 0$. Full details of the prior and the likelihood are contained in \Cref{subsec: details of biochemical model}. 

For our experiment, we trained a \textit{block neural autoregressive flow}\footnote{This class of transport map was experimentally observed to outperform the other classes we considered.} using $N = 30,000$ iterations of Adam. Results are presented in \Cref{fig: bod results}. Unlike the synthetic experiments, we no longer have a closed form for the target $P$; however, this problem was amenable to MCMC and a long run of HMC ($10^6$ iterations, thinned by a factor of 100) provided a gold standard, allowing us to approximate the Wasserstein-1 distance from $T_\#Q$ to $P$ as in \Cref{subsec: toy examples}. For the KSD-based method, we obtained a Wasserstein-1 distance of $0.069$, while KLD achieved $0.015$. Although the Wasserstein-1 distance for KSD is larger than that for KLD, both values are close to the \textit{noise floor} for our approximation of the Wasserstein-1 distance; two independent runs of HMC ($10^6$ iterations, thinned by a factor of 100), differed in Wasserstein-1 distance by 0.022.
We therefore conclude that KSD and KLD performed comparably on this task.

\begin{figure*}[h!] 
	\centering
	\includegraphics[width=0.9\textwidth]{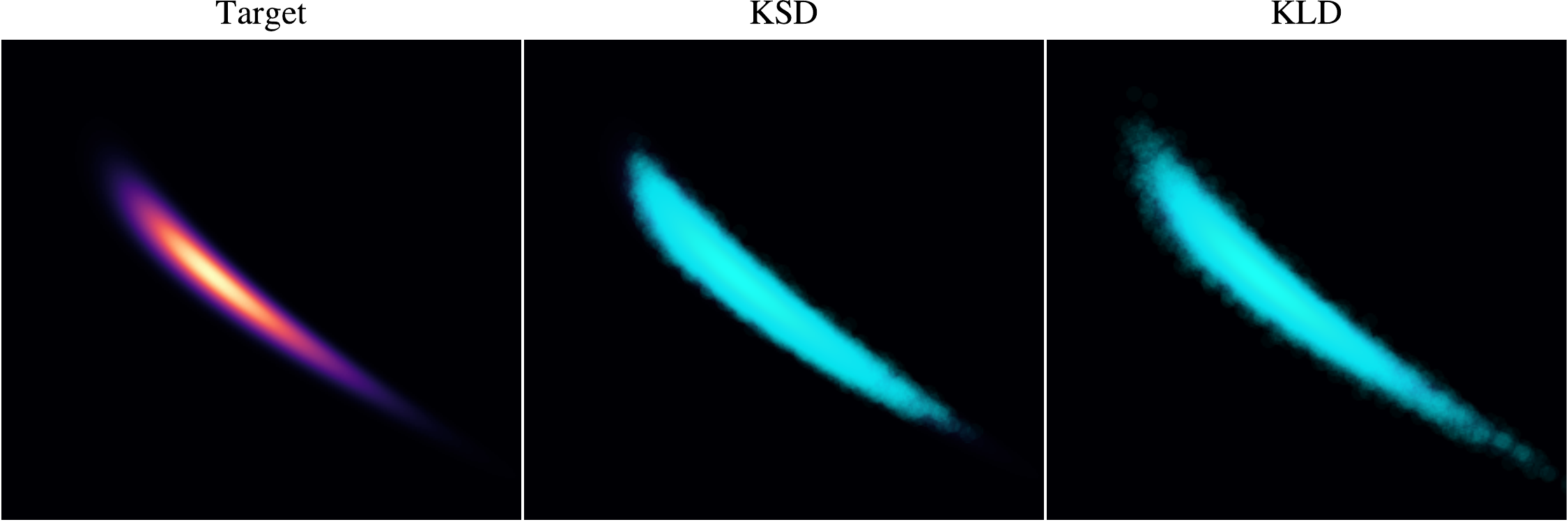} 
	\caption{
		Results for the biochemical oxygen demand model. 
		The leftmost panel is the target distribution, while the central and rightmost panels show samples generated from the output of the methods based, respectively, on KSD and KLD.
		% remember to change to full resolution plot
	}
	\label{fig: bod results}
\end{figure*}

\subsection{Generalised Lotka-Volterra Model} \label{subsec: lotka-volterra}

Our final experiment is a realistic inference problems involving a non-trivial likelihood. 
Following \cite{parno2018transport}, we consider parameter inference for a generalised Lotka--Volterra model
\begin{align}
\label{eq: LV}
\begin{split}
\textstyle \frac{\mathrm{d}p}{\mathrm{d}t}(t) &= \textstyle rp(t)\big(1 - \frac{p(t)}{k}\big) - s\frac{p(t)q(t)}{a + p(t)} , \\
\textstyle \frac{\mathrm{d}q}{\mathrm{d}t}(t) &= \textstyle u\frac{p(t)q(t)}{a + p(t)} - vq(t), 
\end{split}
\end{align}
where $p(t),q(t) > 0$ are the predator and prey populations respectively at time $t$ and $r, k, s, u, a$ and $v$, along with the initial conditions $p(0) = p_0$ and $q(0) = q_0$, are parameters to be inferred. 
Together, these $d=8$ parameters were inferred from a noisy dataset, with the prior and likelihood reported in \Cref{subsec: details of lotka volterra}.
This task is realistic and yet amenable to MCMC; the latter is an essential requirement to allow us to provide a gold standard against which to assess KSD and KLD, and we again used an extended run of HMC.

For this experiment, the B-NAF class and $N = 5 \cdot 10^4$ iterations of Adam were used.
The gradients, required both for HMC and KSD measure transport, were computed using automatic differentiation through the numerical integrator used to solve \eqref{eq: LV}, implemented in the \texttt{torchdiffeq} Python package \citep{chen2018neural}.

For the KSD-based method, we obtained an approximate Wasserstein-1 distance from $T_\#Q$ to $P$ of $0.130$, while KLD achieved $0.110$.
The noise floor for our approximation of the Wasserstein-1 distance in this case was $0.107$. 
We therefore conclude that KSD and KLD also performed comparably on this more challenging task.

\section{Discussion} \label{sec: discuss}

This paper proposed and studied measure transport using KSD, which can be seen as an instance of \emph{operator variational inference} \citep{ranganath2016operator}.
Our findings suggest that KSD is a suitable variational objective for measure transport; we observed empirical performance comparable with that of KLD, yet only minimal and verifiable conditions on the map $T$ were required.

There are three potential limitations of KSD compared to KLD: First, the parameters of the kernel must be specified, and a poor choice of kernel parameters can result in poor approximation; see \Cref{subsec: misspecified lengthscale}. It would be interesting to explore whether adversarial maximisation of KSD with respect to the kernel parameters, while minimising KSD over the choice of transport map, offers a solution \citep{grathwohl2020learning}. Second, while only first order derivatives are required for KLD, gradient-based optimisation of KSD requires second order derivatives of $p$. In most automatic differentiation frameworks, and for most models, this is possible at little extra computational cost, but sometimes this will present difficulties e.g. for models with differential equations involved. Third, it is known that score-based variational objectives can sometimes exhibit pathologies \citep{wenliang2020blindness}; some of these are illustrated in \Cref{subsec: pathologies}.

Several recent works explored the possibility of combining measure transport with Monte Carlo \citep{salimans2015markov,wolf2016variational,hoffman2017learning,caterini2018hamiltonian,prangle2019distilling,thin2020metflow} and it would also be interesting to consider the use of KSD in that context. Related, for both KSD and KLD there is freedom to select the space $\mathcal{X}$ and the reference distribution $Q$. This could also be handled within the optimisation framework, but further work would be needed to determine how these additional degrees of freedom should be parametrised.

\vspace{5pt}
\noindent
\textbf{Acknowledgements:}
MAF was supported by the EPSRC Centre for Doctoral Training in Cloud Computing for Big Data EP/L015358/1 at Newcastle University, UK.
THN was supported by a Fulbright scholarship, an American Australian Association scholarship and a Roberta Sykes scholarship.
MMG and CJO were supported by the Lloyd's Register Foundation programme on data-centric engineering at the Alan Turing Institute, UK.
The authors thank Onur Teymur for helpful comments on the manuscript.

\bibliographystyle{plainnat}
\bibliography{bibliography}

\begin{thebibliography}{74}
\providecommand{\natexlab}[1]{#1}
\providecommand{\url}[1]{\texttt{#1}}
\expandafter\ifx\csname urlstyle\endcsname\relax
  \providecommand{\doi}[1]{doi: #1}\else
  \providecommand{\doi}{doi: \begingroup \urlstyle{rm}\Url}\fi

\bibitem[Aliprantis and Burkinshaw(1998)]{aliprantis1998principles}
Charalambos~D Aliprantis and Owen Burkinshaw.
\newblock \emph{Principles of Real Analysis}.
\newblock Academic Press, 1998.

\bibitem[Arjovsky et~al.(2017)Arjovsky, Chintala, and
  Bottou]{arjovsky2017wasserstein}
Martin Arjovsky, Soumith Chintala, and L{\'e}on Bottou.
\newblock Wasserstein generative adversarial networks.
\newblock In \emph{Proceedings of the 34th International Conference on Machine
  Learning-Volume 70}, pages 214--223, 2017.

\bibitem[Aronszajn(1950)]{aronszajn1950theory}
Nachman Aronszajn.
\newblock Theory of reproducing kernels.
\newblock \emph{Transactions of the American Mathematical Society}, 68\penalty0
  (3):\penalty0 337--404, 1950.

\bibitem[Arora et~al.(2018)Arora, Basu, Mianjy, and
  Mukherjee]{arora2016understanding}
Raman Arora, Amitabh Basu, Poorya Mianjy, and Anirbit Mukherjee.
\newblock Understanding deep neural networks with rectified linear units.
\newblock In \emph{Proceedings of the 6th International Conference on Learning
  Representations}, 2018.

\bibitem[Au et~al.(2020)Au, Graham, and Thiery]{au2020manifold}
Khai~Xiang Au, Matthew~M Graham, and Alexandre~H Thiery.
\newblock Manifold lifting: {S}caling {MCMC} to the vanishing noise regime.
\newblock \emph{arXiv:2003.03950}, 2020.

\bibitem[Betancourt(2017)]{betancourt2017conceptual}
Michael Betancourt.
\newblock A conceptual introduction to {H}amiltonian {M}onte {C}arlo.
\newblock \emph{arXiv:1701.02434}, 2017.

\bibitem[Betancourt et~al.(2014)Betancourt, Byrne, and
  Girolami]{betancourt2014optimizing}
MJ~Betancourt, Simon Byrne, and Mark Girolami.
\newblock Optimizing the integrator step size for {H}amiltonian {M}onte
  {C}arlo.
\newblock \emph{arXiv:1411.6669}, 2014.

\bibitem[Billingsley(1979)]{billingsley1979probability}
Patrick Billingsley.
\newblock \emph{Probability and Measure}.
\newblock John Wiley and Sons, 1979.

\bibitem[Bingham et~al.(2018)Bingham, Chen, Jankowiak, Obermeyer, Pradhan,
  Karaletsos, Singh, Szerlip, Horsfall, and Goodman]{bingham2018pyro}
Eli Bingham, Jonathan~P. Chen, Martin Jankowiak, Fritz Obermeyer, Neeraj
  Pradhan, Theofanis Karaletsos, Rohit Singh, Paul Szerlip, Paul Horsfall, and
  Noah~D. Goodman.
\newblock Pyro: {D}eep universal probabilistic programming.
\newblock \emph{Journal of Machine Learning Research}, 20\penalty0
  (18):\penalty0 403, 2018.

\bibitem[Blei et~al.(2017)Blei, Kucukelbir, and McAuliffe]{blei2017variational}
David~M. Blei, Alp Kucukelbir, and Jon~D. McAuliffe.
\newblock Variational inference: A review for statisticians.
\newblock \emph{Journal of the American Statistical Association}, 112\penalty0
  (518):\penalty0 859--877, 2017.

\bibitem[Bogachev et~al.(2005)Bogachev, Kolesnikov, and
  Medvedev]{bogachev2005triangular}
V~I Bogachev, A~V Kolesnikov, and K~V Medvedev.
\newblock Triangular transformations of measures.
\newblock \emph{Sbornik: Mathematics}, 196\penalty0 (3):\penalty0 309--335, apr
  2005.
\newblock \doi{10.1070/sm2005v196n03abeh000882}.
\newblock URL \url{https://doi.org/10.1070\%2Fsm2005v196n03abeh000882}.

\bibitem[Border(2016)]{border2016differentiating}
KC~Border.
\newblock Differentiating an integral: {L}eibniz' rule.
\newblock Technical report, Caltech Division of the Humanities and Social
  Sciences, 2016.

\bibitem[Brehmer and Cranmer(2020)]{brehmer2020flows}
Johann Brehmer and Kyle Cranmer.
\newblock Flows for simultaneous manifold learning and density estimation.
\newblock \emph{arXiv:2003.13913}, 2020.

\bibitem[Buchholz et~al.(2018)Buchholz, Wenzel, and Mandt]{alex2018quasimonte}
Alexander Buchholz, Florian Wenzel, and Stephan Mandt.
\newblock Quasi-{M}onte {C}arlo variational inference, 2018.

\bibitem[Cao et~al.(2019)Cao, Titov, and Aziz]{decao2019block}
Nicola~De Cao, Ivan Titov, and Wilker Aziz.
\newblock Block neural autoregressive flow.
\newblock In \emph{Proceedings of the 35th Conference on Uncertainty in
  Artificial Intelligence}, 2019.

\bibitem[Carpenter et~al.(2017)Carpenter, Gelman, Hoffman, Lee, Goodrich,
  Betancourt, Brubaker, Guo, Li, and Riddell]{carpenter2017stan}
Bob Carpenter, Andrew Gelman, Matthew~D Hoffman, Daniel Lee, Ben Goodrich,
  Michael Betancourt, Marcus Brubaker, Jiqiang Guo, Peter Li, and Allen
  Riddell.
\newblock Stan: A probabilistic programming language.
\newblock \emph{Journal of Statistical Software}, 76\penalty0 (1), 2017.

\bibitem[Caterini et~al.(2018)Caterini, Doucet, and
  Sejdinovic]{caterini2018hamiltonian}
Anthony~L Caterini, Arnaud Doucet, and Dino Sejdinovic.
\newblock Hamiltonian variational auto-encoder.
\newblock In \emph{Proceedings of the 32nd Conference on Neural Information
  Processing Systems}, pages 8167--8177, 2018.

\bibitem[Chen et~al.(2018)Chen, Rubanova, Bettencourt, and
  Duvenaud]{chen2018neural}
Ricky T.~Q. Chen, Yulia Rubanova, Jesse Bettencourt, and David Duvenaud.
\newblock Neural ordinary differential equations.
\newblock In \emph{Proceedings of the 32nd Conference on Neural Information
  Processing Systems}, 2018.

\bibitem[Chwialkowski et~al.(2016)Chwialkowski, Strathmann, and
  Gretton]{chwialkowski2016kernel}
Kacper Chwialkowski, Heiko Strathmann, and Arthur Gretton.
\newblock A kernel test of goodness of fit.
\newblock In \emph{Proceedings of the 33rd International Conference on Machine
  Learning}, 2016.

\bibitem[Cohn(2013)]{cohn2013measure}
Donald~L Cohn.
\newblock \emph{Measure Theory}.
\newblock Springer, 2013.

\bibitem[Devroye(2013)]{devroye2013}
Luc Devroye.
\newblock \emph{Non-Uniform Random Variable Generation}.
\newblock Springer, 2013.

\bibitem[Dinh et~al.(2014)Dinh, Krueger, and Bengio]{dinh2014nice}
Laurent Dinh, David Krueger, and Yoshua Bengio.
\newblock {NICE}: {N}on-linear independent components estimation.
\newblock \emph{arXiv:1410.8516}, 2014.

\bibitem[Dinh et~al.(2016)Dinh, Sohl-Dickstein, and Bengio]{dinh2016density}
Laurent Dinh, Jascha Sohl-Dickstein, and Samy Bengio.
\newblock Density estimation using real {NVP}.
\newblock \emph{arXiv:1605.08803}, 2016.

\bibitem[Dolatabadi et~al.(2020)Dolatabadi, Erfani, and
  Leckie]{dolatabadi2020invertible}
Hadi~M. Dolatabadi, Sarah Erfani, and Christopher Leckie.
\newblock Invertible generative modeling using linear rational splines.
\newblock \emph{arXiv:2001.05168}, 2020.

\bibitem[Durkan et~al.(2019)Durkan, Bekasov, Murray, and
  Papamakarios]{durkan2019neural}
Conor Durkan, Artur Bekasov, Iain Murray, and George Papamakarios.
\newblock Neural spline flows.
\newblock In \emph{Proceedings of the 33rd Conference on Neural Information
  Processing Systems}, 2019.

\bibitem[Dziugaite et~al.(2015)Dziugaite, Roy, and
  Ghahramani]{dziugaite2015training}
Gintare~Karolina Dziugaite, Daniel~M Roy, and Zoubin Ghahramani.
\newblock Training generative neural networks via maximum mean discrepancy
  optimization.
\newblock In \emph{Proceedings of the 31st Conference on Uncertainty in
  Artificial Intelligence}, pages 258--267, 2015.

\bibitem[Eberle(2015)]{Eberle_2015}
Andreas Eberle.
\newblock Reflection couplings and contraction rates for diffusions.
\newblock \emph{Probability Theory and Related Fields}, 166\penalty0
  (3-4):\penalty0 851–886, Oct 2015.
\newblock ISSN 1432-2064.

\bibitem[Flamary and Courty(2017)]{flamary2017pot}
R\'{e}mi Flamary and Nicolas Courty.
\newblock {POT}: {P}ython {O}ptimal {T}ransport library.
\newblock \texttt{https://pythonot.github.io/}, 2017.

\bibitem[Garreau et~al.(2018)Garreau, Jitkrittum, and Kanagawa]{garreau_2018}
Damien Garreau, Wittawat Jitkrittum, and Motonobu Kanagawa.
\newblock {Large sample analysis of the median heuristic}, 2018.

\bibitem[Germain et~al.(2015)Germain, Gregor, Murray, and
  Larochelle]{germain2015made}
Mathieu Germain, Karol Gregor, Iain Murray, and Hugo Larochelle.
\newblock {MADE}: {M}asked autoencoder for distribution estimation.
\newblock In \emph{Proceedings of the 32nd International Conference on Machine
  Learning}, 2015.

\bibitem[Glynn(1986)]{glynn1986stochastic}
Peter~W Glynn.
\newblock Stochastic approximation for {M}onte {C}arlo optimization.
\newblock In \emph{Proceedings of the 18th Winter Simulation Conference}, pages
  356--365, 1986.

\bibitem[Goodfellow et~al.(2014)Goodfellow, Pouget-Abadie, Mirza, Xu,
  Warde-Farley, Ozair, Courville, and Bengio]{goodfellow2014generative}
Ian Goodfellow, Jean Pouget-Abadie, Mehdi Mirza, Bing Xu, David Warde-Farley,
  Sherjil Ozair, Aaron Courville, and Yoshua Bengio.
\newblock Generative adversarial nets.
\newblock In \emph{Proceedings of the 28th Conference on Neural Information
  Processing Systems}, pages 2672--2680, 2014.

\bibitem[Gorham and Mackey(2015)]{gorham2015measuring}
Jackson Gorham and Lester Mackey.
\newblock Measuring sample quality with {S}tein's method.
\newblock In \emph{Proceedings of the 29th Conference on Neural Information
  Processing Systems}, 2015.

\bibitem[Gorham and Mackey(2017)]{gorham2017measuring}
Jackson Gorham and Lester Mackey.
\newblock Measuring sample quality with kernels.
\newblock In \emph{Proceedings of the 34th International Conference on Machine
  Learning}, 2017.

\bibitem[Graham(2020)]{graham2020mici}
Matthew~M Graham.
\newblock Mici: {P}ython implementations of manifold {MCMC} methods.
\newblock \texttt{https://github.com/matt-graham/mici}, November 2020.

\bibitem[Grathwohl et~al.(2020)Grathwohl, Wang, Jacobsen, Duvenaud, and
  Zemel]{grathwohl2020learning}
Will Grathwohl, Kuan-Chieh Wang, J{\"o}rn-Henrik Jacobsen, David Duvenaud, and
  Richard Zemel.
\newblock Learning the {S}tein discrepancy for training and evaluating
  energy-based models without sampling.
\newblock In \emph{Proceedings of the 37th International Conference on Machine
  Learning}, 2020.

\bibitem[Hoffman(2017)]{hoffman2017learning}
Matthew~D Hoffman.
\newblock Learning deep latent {G}aussian models with {M}arkov chain {M}onte
  {C}arlo.
\newblock In \emph{Proceedings of the 34th International Conference on Machine
  Learning}, pages 1510--1519, 2017.

\bibitem[Hoffman and Gelman(2014)]{hoffman2014no}
Matthew~D Hoffman and Andrew Gelman.
\newblock The {No-U-Turn} sampler: {A}daptively setting path lengths in
  {H}amiltonian {M}onte {C}arlo.
\newblock \emph{Journal of Machine Learning Research}, 15\penalty0
  (1):\penalty0 1593--1623, 2014.

\bibitem[Hu et~al.(2018)Hu, Chen, Sun, Bai, Ye, and Cheng]{hu2018stein}
Tianyang Hu, Zixiang Chen, Hanxi Sun, Jincheng Bai, Mao Ye, and Guang Cheng.
\newblock Stein neural sampler.
\newblock \emph{arXiv:1810.03545}, 2018.

\bibitem[Huang et~al.(2018)Huang, Krueger, Lacoste, and
  Courville]{huang2018neural}
Chin-Wei Huang, David Krueger, Alexandre Lacoste, and Aaron Courville.
\newblock Neural autoregressive flows.
\newblock In \emph{Proceedings of the 35th International Conference on Machine
  Learning}, 2018.

\bibitem[Izmailov et~al.(2020)Izmailov, Kirichenko, Finzi, and
  Wilson]{izmailov2019semisupervised}
Pavel Izmailov, Polina Kirichenko, Marc Finzi, and Andrew~Gordon Wilson.
\newblock Semi-supervised learning with normalizing flows.
\newblock In \emph{Proceedings of the 37th International Conference on Machine
  Learning}, 2020.

\bibitem[Kechris(1995)]{kechrisclassicaldescriptivesettheory}
Alexander Kechris.
\newblock \emph{Classical Descriptive Set Theory}.
\newblock Springer, 1995.

\bibitem[Kingma and Welling(2013)]{kingma2013}
D.~P. Kingma and M.~Welling.
\newblock Auto-encoding variational {B}ayes.
\newblock In \emph{Proceedings of the 2nd International Conference on Learning
  Representations}, 2013.

\bibitem[Kingma and Ba(2015)]{kingma2014adam}
Diederik~P. Kingma and Jimmy Ba.
\newblock Adam: A method for stochastic optimization.
\newblock In \emph{Proceedings of the 3rd International Conference on Learning
  Representations}, 2015.

\bibitem[Kingma et~al.(2016)Kingma, Salimans, Jozefowicz, Chen, Sutskever, and
  Welling]{kingma}
Durk~P Kingma, Tim Salimans, Rafal Jozefowicz, Xi~Chen, Ilya Sutskever, and Max
  Welling.
\newblock Improved variational inference with inverse autoregressive flow.
\newblock In \emph{Proceedings of the 30th Conference on Neural Information
  Processing Systems}, 2016.

\bibitem[Knothe et~al.(1957)]{knothe1957contributions}
Herbert Knothe et~al.
\newblock Contributions to the theory of convex bodies.
\newblock \emph{The Michigan Mathematical Journal}, 4\penalty0 (1):\penalty0
  39--52, 1957.

\bibitem[Kobyzev et~al.(2020)Kobyzev, Prince, and Brubaker]{Kobyzev_2020}
Ivan Kobyzev, Simon Prince, and Marcus Brubaker.
\newblock Normalizing flows: An introduction and review of current methods.
\newblock \emph{IEEE Transactions on Pattern Analysis and Machine
  Intelligence}, 2020.
\newblock ISSN 1939-3539.
\newblock URL \url{http://dx.doi.org/10.1109/TPAMI.2020.2992934}.
\newblock To appear.

\bibitem[Kushner and Yin(2003)]{kushner2003stochastic}
Harold Kushner and G.~George Yin.
\newblock \emph{Stochastic Approximation and Recursive Algorithms and
  Applications}.
\newblock Springer Science \& Business Media, 2003.

\bibitem[L’Ecuyer(1995)]{ecuyer1995}
Pierre L’Ecuyer.
\newblock Note: On the interchange of derivative and expectation for likelihood
  ratio derivative estimators.
\newblock \emph{Management Science}, 41:\penalty0 738--747, 04 1995.
\newblock \doi{10.1287/mnsc.41.4.738}.

\bibitem[Li et~al.(2015)Li, Swersky, and Zemel]{li2015generative}
Yujia Li, Kevin Swersky, and Rich Zemel.
\newblock Generative moment matching networks.
\newblock In \emph{Proceedings of the 32nd International Conference on Machine
  Learning}, pages 1718--1727, 2015.

\bibitem[Liu and Wang(2016)]{liu2016stein}
Qiang Liu and Dilin Wang.
\newblock Stein variational gradient descent: {A} general purpose {B}ayesian
  inference algorithm.
\newblock In \emph{Proceedings of the 30th Conference on Neural Information
  Processing Systems}, 2016.

\bibitem[Liu et~al.(2016)Liu, Lee, and Jordan]{liu2016kernelized}
Qiang Liu, Jason~D. Lee, and Michael~I. Jordan.
\newblock A kernelized {S}tein discrepancy for goodness-of-fit tests and model
  evaluation.
\newblock In \emph{Proceedings of the 33rd International Conference on Machine
  Learning}, 2016.

\bibitem[Livingstone and Zanella(2019)]{livingstone2019robustness}
Samuel Livingstone and Giacomo Zanella.
\newblock On the robustness of gradient-based {MCMC} algorithms.
\newblock \emph{arXiv:1908.11812}, 2019.

\bibitem[Marzouk et~al.(2016)Marzouk, Moselhy, Parno, and
  Spantini]{Marzouk_2016}
Youssef Marzouk, Tarek Moselhy, Matthew Parno, and Alessio Spantini.
\newblock {Sampling via Measure Transport: An Introduction}.
\newblock \emph{Handbook of Uncertainty Quantification}, page 1–41, 2016.

\bibitem[Papamakarios et~al.(2019)Papamakarios, Nalisnick, Rezende, Mohamed,
  and Lakshminarayanan]{papamakarios2019normalizing}
George Papamakarios, Eric Nalisnick, Danilo~Jimenez Rezende, Shakir Mohamed,
  and Balaji Lakshminarayanan.
\newblock Normalizing flows for probabilistic modeling and inference.
\newblock \emph{arXiv:1912.02762}, 2019.

\bibitem[Parno and Marzouk(2018)]{parno2018transport}
Matthew~D Parno and Youssef~M Marzouk.
\newblock Transport map accelerated {M}arkov chain {M}onte {C}arlo.
\newblock \emph{SIAM/ASA Journal on Uncertainty Quantification}, 6\penalty0
  (2):\penalty0 645--682, 2018.

\bibitem[Pires and Figueiredo(2020)]{pires2020variational}
Guilherme G. P.~Freitas Pires and Mário A.~T. Figueiredo.
\newblock Variational mixture of normalizing flows.
\newblock \emph{arXiv:2009.00585}, 2020.

\bibitem[Prangle(2019)]{prangle2019distilling}
Dennis Prangle.
\newblock Distilling importance sampling.
\newblock \emph{arXiv:1910.03632}, 2019.

\bibitem[Ranganath et~al.(2016)Ranganath, Tran, Altosaar, and
  Blei]{ranganath2016operator}
Rajesh Ranganath, Dustin Tran, Jaan Altosaar, and David Blei.
\newblock Operator variational inference.
\newblock In \emph{Proceedings of the 30th Annual Conference on Neural
  Information Processing Systems}, pages 496--504, 2016.

\bibitem[Rezende and Mohamed(2015)]{rezende2015variational}
Danilo~Jimenez Rezende and Shakir Mohamed.
\newblock Variational inference with normalizing flows.
\newblock 2015.

\bibitem[Rezende et~al.(2014)Rezende, Mohamed, and
  Wierstra]{rezende2014stochastic}
Danilo~Jimenez Rezende, Shakir Mohamed, and Daan Wierstra.
\newblock Stochastic backpropagation and variational inference in deep latent
  {G}aussian models.
\newblock In \emph{Proceedings of the 31st International Conference on Machine
  Learning}, 2014.

\bibitem[Robbins and Monro(1951)]{robbins1951}
Herbert Robbins and Sutton Monro.
\newblock A stochastic approximation method.
\newblock \emph{The Annals of Mathematical Statistics}, 22\penalty0
  (3):\penalty0 400--407, 09 1951.
\newblock URL \url{https://doi.org/10.1214/aoms/1177729586}.

\bibitem[Rockwood(2015)]{rockwood2015ecology}
Larry~L. Rockwood.
\newblock \emph{Introduction to Population Ecology}.
\newblock Wiley-Blackwell, 2015.
\newblock ISBN 978-1-118-94757-9.

\bibitem[Rosenblatt(1952)]{rosenblatt1952remarks}
Murray Rosenblatt.
\newblock Remarks on a multivariate transformation.
\newblock \emph{The Annals of Mathematical Statistics}, 23\penalty0
  (3):\penalty0 470--472, 1952.

\bibitem[Ruder(2016)]{ruder2016overview}
Sebastian Ruder.
\newblock An overview of gradient descent optimization algorithms.
\newblock \emph{arXiv:1609.04747}, 2016.

\bibitem[Salimans et~al.(2015)Salimans, Kingma, and
  Welling]{salimans2015markov}
Tim Salimans, Diederik Kingma, and Max Welling.
\newblock Markov chain {M}onte {C}arlo and variational inference: Bridging the
  gap.
\newblock In \emph{Proceedings of the 32nd International Conference on Machine
  Learning}, pages 1218--1226, 2015.

\bibitem[Stein(1972)]{stein1972}
Charles Stein.
\newblock A bound for the error in the normal approximation to the distribution
  of a sum of dependent random variables.
\newblock In \emph{Proceedings of the Sixth Berkeley Symposium on Mathematical
  Statistics and Probability, Volume 2: Probability Theory}, pages 583--602,
  Berkeley, Calif., 1972. University of California Press.
\newblock URL \url{https://projecteuclid.org/euclid.bsmsp/1200514239}.

\bibitem[Sullivan et~al.(2010)Sullivan, Snyder, and
  Rounds]{sullivan2010biochemical}
Annett~B. Sullivan, Dean~M. Snyder, and Stewart~A. Rounds.
\newblock Controls on biochemical oxygen demand in the upper {K}lamath river,
  {O}regon.
\newblock \emph{Chemical Geology}, 269\penalty0 (1):\penalty0 12 -- 21, 2010.
\newblock ISSN 0009-2541.
\newblock \doi{https://doi.org/10.1016/j.chemgeo.2009.08.007}.
\newblock URL
  \url{http://www.sciencedirect.com/science/article/pii/S0009254109003404}.

\bibitem[Thin et~al.(2020)Thin, Kotelevskii, Denain, Grinsztajn, Durmus, Panov,
  and Moulines]{thin2020metflow}
Achille Thin, Nikita Kotelevskii, Jean-Stanislas Denain, Leo Grinsztajn, Alain
  Durmus, Maxim Panov, and Eric Moulines.
\newblock {MetFlow}: A new efficient method for bridging the gap between
  {M}arkov chain {M}onte {C}arlo and variational inference.
\newblock \emph{arXiv:2002.12253}, 2020.

\bibitem[Tsybakov(2009)]{tsybakov2008nonparametric}
Alexandre~B. Tsybakov.
\newblock \emph{Introduction to Nonparametric Estimation}.
\newblock Springer Series in Statistics. Springer, 2009.
\newblock ISBN 978-0-387-79052-7.

\bibitem[Villani(2009)]{villani_2009}
Cedric Villani.
\newblock \emph{Optimal Transport, Old and New}.
\newblock Springer, 2009.

\bibitem[Wenliang(2020)]{wenliang2020blindness}
Li~K Wenliang.
\newblock Blindness of score-based methods to isolated components and mixing
  proportions.
\newblock \emph{arXiv:2008.10087}, 2020.

\bibitem[Wenzel et~al.(2018)Wenzel, Buchholz, and Mandt]{wenzel2018quasi}
Florian Wenzel, Alexander Buchholz, and Stephan Mandt.
\newblock {Q}uasi-{M}onte {C}arlo flows.
\newblock In \emph{Proceedings of the 3rd Workshop on Bayesian Deep Learning},
  2018.

\bibitem[Wolf et~al.(2016)Wolf, Karl, and van~der Smagt]{wolf2016variational}
Christopher Wolf, Maximilian Karl, and Patrick van~der Smagt.
\newblock Variational inference with {H}amiltonian {M}onte {C}arlo.
\newblock \emph{arXiv:1609.08203}, 2016.

\end{thebibliography}

\newpage
\appendix
\section*{Supplement}

This supplement is structured as follows:

\begin{itemize}
    \item \Cref{sec: proofs} contains the proofs of the theory developed in the main text: \Cref{prop: gradients} (\Cref{app: grad proof}), \Cref{prop: transport map properties} (\Cref{app: T prop}), \Cref{theorem: convergence theorem} (\Cref{app: thm pf}) and \Cref{prop: ReLU DNN} (\Cref{app: DNN}).
    \item \Cref{sec: computational details} contains the computational details and extensions to the experiments detailed in \Cref{sec:experiments}. \Cref{subsec: performance metrics} explains how the Wasserstein-1 distance was computed as our performance metric. \Cref{subsec: details of toy models} provides the full details of the synthetic test-bed experiments used in \Cref{subsec: toy examples}. Similarly, \Cref{subsec: details of biochemical model} and \Cref{subsec: details of lotka volterra} provides the full details of the biochemical oxygen demand experiment of \Cref{subsec: biochemical oxygen} and \Cref{subsec: lotka-volterra} respectively.
    \item \Cref{sec: investigations} contains further investigations into the methods presented in the main paper. \Cref{subsec: initialise} discusses the sensitivity to initialisation and explains how the transport maps were initialised for the experiments of \Cref{sec:experiments}.  \Cref{subsec: stochastic optimisation} explores variations on the stochastic optimisation method. \Cref{subsec: QMC investigation} explores the use of quasi Monte Carlo in the stochastic approximation of gradients for KSD. In \Cref{subsec: other Q}, we investigate the effect of the changing the reference distribution. In \Cref{subsec: misspecified lengthscale}, we explore the effect of the length-scale parameter $\ell$ on KSD-based measure transport. \Cref{subsec: relu investigation} explores the effect of the input dimension in the ReLU network transport map. In \Cref{subsec: U vs V}, we investigate KSD-based measure transport using a (biased) V-statistic estimator of KSD against the unbiased U-statistic estimator that was used for the experiments presented in the main text. Finally, in \Cref{subsec: pathologies}, we document certain pathological behaviours experienced when using KSD for measure transport and offer potential remedies.
\end{itemize}

\section{Proof of Theoretical Results} \label{sec: proofs}

This section contains the proofs for all novel theoretical results stated in the main text.
In \Cref{app: grad proof} we present the proof of \Cref{prop: gradients}; in \Cref{app: T prop} we present the proof of \Cref{prop: transport map properties}; the proof of \Cref{theorem: convergence theorem} is contained in \Cref{app: thm pf}, and finally, the proof of \Cref{prop: ReLU DNN} is in \Cref{app: DNN}.

\subsection{Proof of \Cref{prop: gradients}} \label{app: grad proof}

The argument involved in the proof of \Cref{prop: gradients} requires differentiation under an integral.
The measure-theoretic calculus result that we exploit to justify the interchange of differentiation and integration (\Cref{lem: interchange} below) requires the following mathematical concepts:

\begin{definition}
Let $\Omega$ be a measurable space and let $\Theta$ be a topological space.
A function $f : \Theta \times \Omega \rightarrow \mathbb{R}$ is a \emph{Carath\'{e}odory function} if for each $\theta \in \Theta$ the map $\omega \mapsto f(\theta,\omega)$ is measurable and for each $\omega \in \Omega$ the map $\theta \mapsto f(\theta,\omega)$ is continuous.
\end{definition}

\begin{definition} \label{def: LUIB}
Let $\Omega$ be a measurable space equipped with a measure $\mu$ and let $\Theta$ be a topological space.
A function $f : \Theta \times \Omega \rightarrow \mathbb{R}$ is \emph{locally uniformly integrably bounded} if for every $\theta \in \Theta$ there is a non-negative measurable function $h_\theta : \Omega \rightarrow \mathbb{R}$ such that $\int_\Omega h_\theta(\omega) \mathrm{d}\mu(\omega) < \infty$, and there exists a neighbourhood $U_\theta$ of $\theta$ such that for all $\vartheta \in U_\theta$ we have $|f(\vartheta,\omega)| \leq h_\theta(\omega)$.
\end{definition}

The following sufficient condition for a function to be locally uniformly integrably bounded will be used:

\begin{lemma} \label{lem: LUIB}
In the setting of \Cref{def: LUIB}, let $\Theta \subseteq \mathbb{R}^d$ be an open set and assume further that, for each $\omega \in \Omega$, the function $\theta \mapsto f(\theta,\omega)$ is continuous and that, for each $\theta \in \Theta$, the integral $\int_\Omega |f(\theta,\omega)| \mathrm{d}\mu(\omega) < \infty$ exists.
Then $f$ is locally uniformly integrably bounded.
\end{lemma}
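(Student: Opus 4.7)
My plan is to take $h_\theta$ to be the pointwise-in-$\omega$ supremum of $|f(\vartheta,\omega)|$ as $\vartheta$ ranges over a closed ball around $\theta$ contained in $\Theta$, and to verify in turn its measurability and integrability. Openness of $\Theta \subseteq \mathbb{R}^d$ lets me pick $r > 0$ with $K := \bar{B}(\theta,r) \subseteq \Theta$; I then take $U_\theta := B(\theta,r)$ and set
\[
h_\theta(\omega) \;:=\; \sup_{\vartheta \in K} |f(\vartheta,\omega)|,
\]
which is finite for every $\omega$ by continuity of $\vartheta \mapsto f(\vartheta,\omega)$ on the compact set $K$, and which trivially satisfies $|f(\vartheta,\omega)| \leq h_\theta(\omega)$ for all $\vartheta \in U_\theta$.

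Measurability will come from a standard separability argument. Since $K \cap \mathbb{Q}^d$ is a countable dense subset of $K$ and $\vartheta \mapsto f(\vartheta,\omega)$ is continuous, the supremum over $K$ agrees with the supremum over $K \cap \mathbb{Q}^d$. Hence $h_\theta$ is a countable supremum of the measurable maps $\omega \mapsto |f(\vartheta,\omega)|$ (measurability inherited from the Carath\'{e}odory setting underlying Definition 2), and is itself measurable. The open neighbourhood $U_\theta$ has already been identified, so once integrability is in hand the definition of LUIB is verified.

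The main step, and where I expect the main obstacle to lie, is establishing that $\int h_\theta \,\mathrm{d}\mu < \infty$. Here the plan is to combine the hypothesis that $f(\vartheta,\cdot)$ is integrable for every $\vartheta \in K$ with continuity of $f(\cdot,\omega)$, shrinking $r$ if necessary. The key observation is that as $r$ decreases, $h_\theta$ decreases pointwise to the integrable function $|f(\theta,\cdot)|$; a finite subcover of $K$ by balls centred at points of $K \cap \mathbb{Q}^d$ should then let me dominate $h_\theta$ by a finite sum of integrable functions plus a small-oscillation remainder. The delicate aspect is that monotone convergence from above does not apply unless one term in the decreasing sequence is already integrable, so the integrability step cannot be obtained by a single invocation of dominated convergence; instead it will require a compactness-and-continuity bookkeeping argument that trades off the radius $r$ against the uniform control of $|f(\vartheta,\omega)|$ over $\vartheta \in K$.
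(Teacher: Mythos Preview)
Your proposal correctly identifies the crux: integrability of $h_\theta(\omega) = \sup_{\vartheta \in K} |f(\vartheta,\omega)|$ is the only nontrivial step, and you rightly flag that monotone convergence from above cannot be invoked without an integrable dominator already in hand. Unfortunately the ``compactness-and-continuity bookkeeping'' you gesture at cannot close this gap, because the lemma as stated is false. Take $\Omega = \mathbb{N}$ with counting measure, $\Theta = (-1,1) \subset \mathbb{R}$, and
\[
f(\theta,n) \;=\; \max\bigl(0,\; 1 - n\,|\theta - \tfrac{1}{n}|\bigr).
\]
Each $f(\cdot,n)$ is a continuous tent of height $1$ centred at $\theta = 1/n$ with support of width $2/n$. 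For $\theta \leq 0$ every term vanishes; for $\theta > 0$ only the finitely many $n < 2/\theta$ contribute, so $\sum_n |f(\theta,n)| < \infty$ for every $\theta \in \Theta$. Yet at $\theta_0 = 0$, any neighbourhood $U \ni 0$ contains $1/n$ for all large $n$, whence $\sup_{\vartheta \in U} |f(\vartheta,n)| \geq f(1/n,n) = 1$ for those $n$, and no integrable $h_0$ can dominate $|f(\vartheta,\cdot)|$ uniformly over $\vartheta \in U$. Thus no choice of radius $r$ rescues your $h_\theta$, and no finite-subcover or small-oscillation argument can manufacture the missing integrability from the stated hypotheses alone.

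For comparison, the paper's own proof fails in the dual direction: it takes $h_\theta(\omega) = |f(\theta,\omega)| + 1$, which is certainly integrable, but the neighbourhood $U_\theta$ on which $f(\vartheta,\omega) \leq f(\theta,\omega) + 1$ is produced by continuity \emph{at a fixed $\omega$} and therefore depends on $\omega$, contrary to what the definition requires. Your approach fixes the neighbourhood first and then seeks integrability; the paper fixes integrability first and then seeks a uniform neighbourhood; neither can succeed, because the conclusion does not follow from the hypotheses. To obtain a true statement one needs an additional assumption (e.g.\ joint continuity together with local compactness, or an \emph{a priori} integrable envelope on compacta), which is what the downstream application in the paper effectively supplies via (A2)--(A4).
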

\begin{proof}
Fix $\theta \in \Theta$ and $\omega \in \Omega$.
Since $f(\theta,\omega)$ is continuous in $\theta$ and $\Theta$ is open, we can find a neighbourhood $U_\theta$ of $\theta$ on which $f(\vartheta,\omega) \leq f(\theta,\omega) + 1$ for all $\vartheta \in U_\theta$.
Take $h_\theta(\omega) := |f(\theta,\omega)| + 1$, recalling that the absolute value of a measurable function is measurable and sums of measurable functions are measurable.
Then $\int_\Omega h_\theta(\omega) \mathrm{d}\mu(\omega) = \int_\Omega |f(\theta,\omega)| \mathrm{d}\mu(\omega) + 1 < \infty$ and $|f(\theta,\omega)| \leq h_\theta(\omega)$, as required.
\end{proof}

\begin{lemma}[Differentiate under the integral] \label{lem: interchange}
Let $\Omega$ be a measurable space equipped with a measure $\mu$, let $\Theta \subseteq \mathbb{R}^d$ be an open set and let $f : \Theta \times \Omega \rightarrow \mathbb{R}$ be a Carath\'{e}odory function.
Assume further that $f$ is locally uniformly integrably bounded and that, for each $i$ and each $\omega$, the function $\theta \mapsto \partial_{\theta_i} f(\theta,\omega)$ is locally uniformly integrably bounded.
Then the function $g : \Theta \rightarrow \mathbb{R}$ defined by
$$
g(\theta) := \int_\Omega f(\theta,\omega) \mathrm{d}\mu(\omega)
$$
is continuously differentiable and
$$
\nabla_\theta g(\theta) = \int_\Omega \nabla_\theta f(\theta,\omega) \mathrm{d}\mu(\omega).
$$
\end{lemma}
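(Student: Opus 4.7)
The plan is to reduce the claim to Lebesgue's dominated convergence theorem (DCT), one coordinate at a time, using the one-dimensional mean value theorem to produce the required dominating function. Fix $\theta_0 \in \Theta$ and an index $i \in \{1, \ldots, d\}$. Since $\Theta$ is open in $\mathbb{R}^d$, for any null sequence of non-zero reals $h_n \to 0$ we may assume $\theta_0 + h_n e_i \in \Theta$ for all $n$, where $e_i$ is the $i$-th standard basis vector. Form the difference quotients
\[
\phi_n(\omega) := \frac{f(\theta_0 + h_n e_i, \omega) - f(\theta_0, \omega)}{h_n} .
\]
These converge pointwise to $\partial_{\theta_i} f(\theta_0, \omega)$ as $n \to \infty$ by definition of the partial derivative, whose existence is implicit in the locally uniformly integrably bounded (LUIB) hypothesis on $\partial_{\theta_i} f$.

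For the dominating function, apply the one-variable mean value theorem to the restriction $t \mapsto f(\theta_0 + t e_i, \omega)$: for each $\omega$ there exists $\xi_n(\omega)$ strictly between $0$ and $h_n$ with $\phi_n(\omega) = \partial_{\theta_i} f(\theta_0 + \xi_n(\omega) e_i, \omega)$. The LUIB assumption on $\partial_{\theta_i} f$ then supplies a neighbourhood $U_{\theta_0}$ of $\theta_0$ and an integrable $h_{\theta_0}$ with $|\partial_{\theta_i} f(\vartheta, \omega)| \leq h_{\theta_0}(\omega)$ for all $\vartheta \in U_{\theta_0}$. For $|h_n|$ small enough, $\theta_0 + \xi_n(\omega) e_i \in U_{\theta_0}$ uniformly in $\omega$, so $|\phi_n| \leq h_{\theta_0}$ pointwise. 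DCT then yields
\[
\int_{\Omega} \phi_n \, \mathrm{d}\mu \;\longrightarrow\; \int_{\Omega} \partial_{\theta_i} f(\theta_0, \omega) \, \mathrm{d}\mu(\omega) ,
\]
whose left-hand side is exactly $h_n^{-1}[g(\theta_0 + h_n e_i) - g(\theta_0)]$. Hence $\partial_{\theta_i} g(\theta_0)$ exists and matches the claimed integral formula.

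For continuous differentiability of $g$ one needs the map $\theta \mapsto \partial_{\theta_i} g(\theta)$ to be continuous on $\Theta$. The natural route is a second DCT application: given $\theta_n \to \theta_0$ in $\Theta$, the LUIB dominator from $U_{\theta_0}$ handles the bound, and pointwise convergence $\partial_{\theta_i} f(\theta_n, \omega) \to \partial_{\theta_i} f(\theta_0, \omega)$ gives the pointwise limit, so that $\partial_{\theta_i} g(\theta_n) \to \partial_{\theta_i} g(\theta_0)$. I expect this continuity step to be the main obstacle, since LUIB is purely a boundedness condition and does not by itself enforce continuity of $\theta \mapsto \partial_{\theta_i} f(\theta, \omega)$. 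The resolution is to read the hypothesis as also requiring $\partial_{\theta_i} f$ to be a Carath\'{e}odory function (continuous in $\theta$, measurable in $\omega$); this reading is consistent with how the lemma is actually invoked, for example in the proof of Proposition~\ref{prop: gradients}, where assumption (A4) explicitly posits continuity of $\theta \mapsto \nabla_\theta u_p(T^\theta(x), T^\theta(x'))$ for every $(x,x')$.
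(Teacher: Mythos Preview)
The paper does not actually prove this lemma; it simply cites it as a standard result from \cite{aliprantis1998principles}, \cite{billingsley1979probability} and \cite{border2016differentiating}. Your argument via the mean value theorem and dominated convergence is precisely the textbook proof contained in those references, so in that sense the approaches coincide.

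Your observation about the $C^1$ conclusion is on target and worth flagging. As stated, the lemma only assumes that each $\partial_{\theta_i} f$ is locally uniformly integrably bounded, which is a pure size condition and does not force $\theta \mapsto \partial_{\theta_i} f(\theta,\omega)$ to be continuous; without that continuity a second application of DCT cannot deliver continuity of $\partial_{\theta_i} g$, and a trivial counterexample with $\Omega$ a singleton shows the conclusion can fail. The cited sources (e.g.\ Border's formulation) include a Carath\'{e}odory-type hypothesis on the partial derivatives, and in the paper's only use of the lemma (the proof of Proposition~\ref{prop: gradients}) this continuity is supplied by assumption (A4). So your reading of the hypothesis and your resolution are both correct; the issue is a minor imprecision in the lemma's statement rather than a flaw in your argument.
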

\begin{proof}
This standard result can be found, for example, in \citet[][Theorem 24.5, p.193]{aliprantis1998principles}, \citet[][Theorem 16.8, pp.181–182]{billingsley1979probability}, with the statement here based on the account in \cite{border2016differentiating}.
\end{proof}

The proof of \Cref{prop: gradients} can now be presented:

\begin{proof}[Proof of \Cref{prop: gradients}]
Using \eqref{eq:KSD compute} and the reparametrisation trick \citep{glynn1986stochastic,ecuyer1995,kingma2013}:
\begin{align}
    \nabla_\theta \left[ \mathcal{D}_{\textsc{S}}(P,T_\#^\theta Q)^2 \right] &= \nabla_\theta \E_{Y,Y' \sim T_\#^\theta Q} \left[ u_p(Y,Y') \right] \nonumber \\
    &= \nabla_\theta \E_{X,X' \sim Q} \left[ u_p(T^\theta(X),T^\theta(X')) \right]  \label{eq: waiting to change}
\end{align}
From \Cref{lem: interchange}, the preconditions of \Cref{prop: gradients} justify the interchange of the derivative and the expectation in \eqref{eq: waiting to change}.
Indeed, in the setting of \Cref{lem: interchange} we identify $\Omega = \mathcal{X} \times \mathcal{X}$, $\mu = Q \times Q$ and $f(\theta,\omega) = u_p(T^\theta(x),T^\theta(x'))$ where $\omega = (x,x')$.
That $f$ is a Carath\'{e}odory function follows from (A1) and (A4) of \Cref{prop: gradients}, where we note that (A4) implies $\theta \mapsto \nabla_\theta u_p(T^\theta(x),T^\theta(x'))$ is continuous.
That $f$ is locally uniformly integrably bounded follows from assumptions (A2) and (A4) together with \Cref{lem: LUIB}.
Similarly, that $\partial_{\theta_i} f$ is locally uniformly integrably bounded follows from assumptions (A3) and (A4) together with \Cref{lem: LUIB}.
Thus the preconditions of \Cref{lem: interchange} hold.

Interchanging the derivative with the expectation gives that
\begin{align*}
    \nabla_\theta \left[ \mathcal{D}_{\textsc{S}}(P,T_\#^\theta Q)^2 \right]  &= \E_{X,X' \sim Q} \left[ \nabla_\theta u_p(T^\theta(X),T^\theta(X')) \right] \\
    &= \E\left[ \frac{1}{n(n-1)} \sum_{i \neq j}^n \nabla_\theta u_p(T^\theta(x_i),T^\theta(x_j)) \right] ,
\end{align*}
as claimed.
\end{proof}

Note that we presented stronger conditions in \Cref{prop: gradients} than are required.
This was to control the length of the main text, but it is immediately clear from the proof of \Cref{prop: gradients} that these conditions can be weakened to those that are required for \Cref{lem: interchange} to hold.

\subsection{Proof of \Cref{prop: transport map properties}} \label{app: T prop}

First we present an existence result in \Cref{prop:transport map existence}, before considering regularity of the associated transport map.
Recall that in this paper all measurable spaces $\mathcal{X}$ and $\mathcal{Y}$ are equipped with their respective Borel $\sigma$-algebras $\Sigma_{\mathcal{X}}$ and $\Sigma_{\mathcal{Y}}$.
A separable complete metric space equipped with its Borel $\sigma$-algebra is called a \emph{standard Borel space}.
A measure $Q \in \mathcal{P}(\mathcal{X})$ is said to be \emph{continuous} if $Q(\{x\}) = 0$ for all $x \in \mathcal{X}$.
A map $f : \mathcal{X} \rightarrow \mathcal{Y}$ is called a \emph{Borel isomorphism} if $f$ is a bijection and both $f$ and $f^{-1}$ are Borel measurable.
A fundamental result that we will exploit is known as the \emph{isomorphism theorem for measures}:

\begin{theorem}[Isomorphism Theorem] \label{thm: iso}
Let $\mathcal{X}$ be a standard Borel space and $Q\in\mathcal{P}(\mathcal{X})$ be continuous. 
Then there is a Borel isomorphism $f:\mathcal{X}\rightarrow[0,1]$ with $f_\#Q = m |_{[0,1]}$, where $m |_{[0,1]}$ is the Lebesgue measure restricted to $[0,1]$.
\end{theorem}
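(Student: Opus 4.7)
The plan is to prove the statement in two stages: first reduce to the unit interval via the classical Kuratowski--Borel isomorphism theorem, and then build the measure-transforming map on $[0,1]$ itself using the cumulative distribution function. For the reduction, I would invoke Kuratowski's theorem, which asserts that any uncountable standard Borel space is Borel isomorphic to $[0,1]$. Continuity of $Q$ forces $\mathcal{X}$ to be uncountable, since any countable standard Borel space supports only purely atomic probability measures. Fix such a Borel isomorphism $g:\mathcal{X}\to[0,1]$ and set $\tilde Q := g_\# Q$, a continuous Borel probability measure on $[0,1]$. It then suffices to exhibit a Borel isomorphism $h:[0,1]\to[0,1]$ with $h_\# \tilde Q = m|_{[0,1]}$, after which $f := h \circ g$ finishes the job.

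For the transport step, I would use the cumulative distribution function $F(t) := \tilde Q([0,t])$. Atomlessness of $\tilde Q$ makes $F$ continuous and non-decreasing with $F(1)=1$; a routine calculation on quantile intervals $[0,s]$ gives $F_\# \tilde Q = m|_{[0,1]}$. The difficulty is that $F$ is in general only measure-preserving, not bijective: on the open set $U \subseteq [0,1]$ where $F$ is locally constant, $F$ collapses each component interval (of zero $\tilde Q$-mass) to a single point. Let $A \subseteq [0,1]\setminus U$ denote the Borel set obtained after removing the at-most-countable collection of plateau endpoints that break injectivity. Then $\tilde Q([0,1]\setminus A)=0$, and on $A$ the restriction $F|_A$ is continuous and strictly increasing, hence a Borel isomorphism onto its Borel image $B := F(A)$, which satisfies $m([0,1]\setminus B)=0$.

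The main obstacle is upgrading the measure-preserving Borel map $F|_A$ to a genuine Borel isomorphism of $[0,1]$ onto itself, which amounts to bijecting the residual null sets $N := [0,1]\setminus A$ and $N' := [0,1]\setminus B$ in a Borel way. By Kuratowski's theorem, any two standard Borel spaces of the same cardinality are Borel isomorphic (handled uniformly in the uncountable case, trivially in the countable case), so whenever $|N|=|N'|$ there exists a Borel isomorphism $\phi:N\to N'$. If the cardinalities disagree, I would correct this beforehand by selecting an uncountable Borel $\tilde Q$-null subset $C\subseteq A$ -- which exists because $A$ is an uncountable standard Borel space -- and transferring $C$ into $N$ while transferring its image $F(C)\subseteq B$ (which is Borel by continuity and injectivity of $F|_A$, with $m(F(C))=\tilde Q(C)=0$) into $N'$; iterating this adjustment enforces $|N|=|N'|$. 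Defining $h := F$ on $A$ and $h := \phi$ on $N$ yields a Borel isomorphism $h:[0,1]\to[0,1]$, and because every modification occurs on a $\tilde Q$-null set the identity $h_\# \tilde Q = m|_{[0,1]}$ is preserved. The composition $f := h \circ g$ is then the required Borel isomorphism.
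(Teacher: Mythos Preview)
Your argument is essentially correct and follows the classical route to this result (reduce to $[0,1]$ via the Borel isomorphism theorem, then use the cumulative distribution function and patch the residual null sets). The paper, however, does not prove the theorem at all: it simply cites Theorem~17.41 of Kechris's \emph{Classical Descriptive Set Theory}. So you have supplied considerably more than the paper does.

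Two minor remarks on the patching step. First, the existence of an uncountable Borel $\tilde Q$-null subset $C\subseteq A$ is not a consequence of $A$ being an uncountable standard Borel space alone; it uses that $\tilde Q$ is atomless (a Cantor-scheme construction inside $A$, splitting intervals so that the $\tilde Q$-mass halves at each stage, produces such a $C$ directly). Second, no iteration is needed: a single transfer of $C$ and $F(C)$ already forces both residual sets to be uncountable, after which Kuratowski applies. Indeed, with your convention of keeping one endpoint per plateau one always has $B = F(A) = [0,1]$ and hence $N' = \emptyset$, so whenever $F$ has a plateau the cardinality correction is genuinely required---your instinct to include it was right.
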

\begin{proof}
This result can be found as Theorem 17.41 in \cite{kechrisclassicaldescriptivesettheory}.
\end{proof}

\begin{proposition} \label{prop:transport map existence}
Suppose that $\mathcal{X}$ and $\mathcal{Y}$ are separable complete metric spaces and suppose that $Q\in\mathcal{P}(\mathcal{X})$ and $P\in\mathcal{P}(\mathcal{Y})$ are such that $Q(\{x\}) = 0$ and $P(\{y\}) = 0$ for all $x\in\mathcal{X}$ and $y\in\mathcal{Y}$. Then there exists a measurable function $T : \mathcal{X} \rightarrow \mathcal{Y}$ such that $T_\# Q = P$.
\end{proposition}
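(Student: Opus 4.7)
The plan is to leverage the Isomorphism Theorem (\Cref{thm: iso}) to reduce the general existence question to the trivial case where both the reference and target are Lebesgue measure on $[0,1]$. The hypotheses of \Cref{prop:transport map existence} are precisely what is required to apply \Cref{thm: iso} to each of $(\mathcal{X},Q)$ and $(\mathcal{Y},P)$: separability plus completeness makes each space a standard Borel space, and continuity of the measures is given.

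First I would apply \Cref{thm: iso} twice. This yields a Borel isomorphism $f : \mathcal{X} \rightarrow [0,1]$ with $f_\# Q = m|_{[0,1]}$, and a Borel isomorphism $g : \mathcal{Y} \rightarrow [0,1]$ with $g_\# P = m|_{[0,1]}$. Since $g$ is a Borel isomorphism, its inverse $g^{-1} : [0,1] \rightarrow \mathcal{Y}$ is Borel measurable. I would then define the candidate transport map as $T := g^{-1} \circ f : \mathcal{X} \rightarrow \mathcal{Y}$, which is Borel measurable as the composition of two Borel measurable maps.

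The remaining step is to verify $T_\# Q = P$. For any $A \in \Sigma_\mathcal{Y}$, bijectivity of $g$ gives $T^{-1}(A) = f^{-1}(g(A))$, and since $g$ is a Borel isomorphism, $g(A) \in \Sigma_{[0,1]}$. Then
\begin{equation*}
(T_\# Q)(A) \;=\; Q(f^{-1}(g(A))) \;=\; (f_\# Q)(g(A)) \;=\; m(g(A)) \;=\; (g_\# P)(g(A)) \;=\; P(g^{-1}(g(A))) \;=\; P(A),
\end{equation*}
where the last equality uses the bijectivity of $g$. This establishes the desired identity.

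There is no real obstacle here beyond carefully tracking how the pushforward interacts with composition and inversion; the heavy lifting is done by \Cref{thm: iso}. The result of \Cref{prop: transport map properties} in the main text would then go further and establish $L^2$ regularity of $T$, which I expect requires additional work tailored to the Euclidean target $\mathbb{R}^d$ and the moment condition $P \in \mathcal{P}_2(\mathbb{R}^d)$; but for the existence statement of \Cref{prop:transport map existence} alone, the two-step isomorphism argument above suffices.
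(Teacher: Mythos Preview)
Your proposal is correct and follows exactly the same approach as the paper: apply \Cref{thm: iso} to both $(\mathcal{X},Q)$ and $(\mathcal{Y},P)$ to obtain Borel isomorphisms $f$ and $g$ onto $([0,1],m|_{[0,1]})$, and set $T := g^{-1}\circ f$. The paper's version is terser, simply asserting that $T$ is measurable with $T_\# Q = P$, whereas you spell out the verification of the pushforward identity explicitly.
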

\begin{proof}
Our assumptions imply that $\mathcal{X}$, $\mathcal{Y}$ are standard Borel spaces and $Q \in \mathcal{P}(\mathcal{X})$, $P \in \mathcal{P}(\mathcal{Y})$ are continuous.
Thus from \Cref{thm: iso}, there exists a Borel isomorphism $f:\mathcal{X}\rightarrow [0,1]$ such that $f_\# Q = m|_{[0,1]}$ and a Borel isomorphism $g:\mathcal{Y}\rightarrow [0,1]$ such that $g_\# P = m|_{[0,1]}$. 
Then $T := g^{-1} \circ f : \mathcal{X} \rightarrow \mathcal{Y}$ is measurable and satisfies $T_\# Q = P$, as required.
%In the following, we use the property that if $f$ is a Borel isomorphism, then $(f^{-1})_\#(f_\# Q)(A) = Q(f^{-1}(f(A))) = Q(A)$ for all measurable sets $A$.
%Then, since $g$ is a Borel isomorphism, we have $(g^{-1})_\# m|_{[0,1]} = P$ and thus $(g^{-1})_\#(f_\# Q) = P$.
\end{proof}

Now we can present the proof of \Cref{prop: transport map properties}.

\begin{proof}[Proof of \Cref{prop: transport map properties}]

\Cref{assumption reference} ensures that $\mathcal{X}$ is a separable complete metric space and $Q(\{x\}) = 0$ for all $x \in \mathcal{X}$.
\Cref{assumption target} restricts attention to $\mathcal{Y} = \mathbb{R}^d$, meaning that $\mathcal{Y}$ is a separable complete metric space, and requires $P$ to admit a density on $\mathcal{Y}$, meaning that $P(\{y\}) = 0$ for all $y \in \mathcal{Y}$.
Thus the existence of a transport map $T$ from $Q$ to $P$ is guaranteed by \Cref{prop:transport map existence}. 

It remains to show that, for \emph{any} such transport map, $T \in \prod_{i=1}^d L^2(Q)$.
To this end, we have that
\begin{align*}
    \|T\|_{\prod_{i=1}^d L^2(Q)}^2 = \sum_{i=1}^d \|T_i\|_{L^2(Q)}^2 = \sum_{i=1}^d \int_{\mathcal{X}} T_i(x)^2 \,\mathrm{d}Q(x)
    &= \int_{\mathcal{X}} \|T(x)\|^2 \,\mathrm{d}Q(x) \\
    &\stackrel{(*)}{=} \int_{\mathbb{R}^d} \|x\|^2 \,\mathrm{d}(T_\#Q)(x) 
    = \int_{\mathbb{R}^d} \|x\|^2 \,\mathrm{d}P(x) < \infty,
\end{align*}
where a change of variables was used at $(*)$ and the final inequality follows from the assumption that $P\in\mathcal{P}_2(\mathbb{R}^d)$ in \Cref{assumption target}. 
\end{proof}

\subsection{Proof of \Cref{theorem: convergence theorem}} \label{app: thm pf}

Recall that for $P,P' \in \mathcal{P}_1(\mathbb{R}^d)$ the (first) \emph{Wasserstein  distance} is defined as \citep[][Remark 6.5]{villani_2009}
\begin{equation}
    W_1(P,P') := \sup_{f \in \mathcal{F}}\left|\E_{Y\sim P}[f(Y)] - \E_{Y\sim P'}[f(Y)]\right|, \label{eq: W1}
\end{equation}
where $\mathcal{F} := \left\{f:\mathbb{R}^d\rightarrow\mathbb{R}\,|\, \| f \|_L\leq 1\right\}$ and $\|f\|_L := \sup_{x\neq y} \frac{|f(x)-f(y)|}{\|x-y\|}$ is the Lipschitz seminorm on $\mathbb{R}^d$.
Our proof of \Cref{theorem: convergence theorem} is based on the following result that relates convergence in $W_1$ to convergence in KSD:

\begin{proposition}[Wasserstein Controls KSD] \label{prop: gorham wasserstein controls KSD}
Let $k : \mathbb{R}^d \times \mathbb{R}^d \rightarrow \mathbb{R}$ be symmetric positive definite with $(x,y) \mapsto k(x,y)$, $(x,y) \mapsto \partial_{x_i} \partial_{y_i} k(x,y)$ and $(x,y) \mapsto \partial_{x_i} \partial_{x_j} \partial_{y_i} \partial_{y_j} k(x,y)$ continuous and bounded for all $i,j \in \{1,\dots,d\}$.
Let $P \in \mathcal{P}(\mathbb{R}^d)$ admit a density function $p$ such that $\nabla\log p$ is Lipschitz with $\E_{X\sim P}[\|\nabla\log p(X)\|_2^2] < \infty$. 
Let $\mathcal{D}_{\textsc{S}}$ denote the KSD based on $P$ and $k$, as defined in \eqref{eq: KSD final}.
Then a sequence $(Q_n)_{n\in\mathbb{N}} \subset \mathcal{P}(\mathbb{R}^d)$ satisfies $\mathcal{D}_{\textsc{S}}(P,Q_n) \rightarrow 0$ whenever $W_1(P,Q_n)\rightarrow 0$.
\end{proposition}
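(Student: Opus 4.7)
The plan is to invoke the quadratic representation of KSD from \eqref{eq:KSD compute}, namely $\mathcal{D}_{\textsc{S}}(P,Q_n)^2 = \E_{Y,Y' \sim Q_n}[u_p(Y,Y')]$, and to pass to the limit by a Vitali-type convergence argument. Since $W_1(P,Q_n) \to 0$ entails weak convergence $Q_n \Rightarrow P$, the product measures also converge weakly, $Q_n \otimes Q_n \Rightarrow P \otimes P$. If one can establish (i) that $u_p$ is continuous on $\mathbb{R}^d \times \mathbb{R}^d$, and (ii) that $\{u_p(Y,Y') : (Y,Y') \sim Q_n \otimes Q_n\}$ is uniformly integrable, then the continuous mapping theorem combined with Vitali's theorem yields $\E_{Q_n \otimes Q_n}[u_p] \to \E_{P \otimes P}[u_p]$. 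The limit equals $\mathcal{D}_{\textsc{S}}(P,P)^2 = 0$ by Stein's identity (whose hypotheses are met under the stated regularity on $p$ and $k$), and the proposition follows on taking square roots.

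For (i) and the key bound used in (ii), I would appeal to the reproducing identity $\partial^\alpha_x \partial^\beta_y k(x,y) = \langle \partial^\alpha_x k(\cdot,x), \partial^\beta_y k(\cdot,y) \rangle_{\mathcal{H}_k}$ together with Cauchy--Schwarz. The diagonal formula $\|\partial^\alpha_x k(\cdot,x)\|_{\mathcal{H}_k}^2 = \partial^\alpha_x \partial^\alpha_y k(x,y)|_{y=x}$ shows that the hypotheses on $k$, $\partial_{x_i}\partial_{y_i} k$, and $\partial_{x_i}\partial_{x_j}\partial_{y_i}\partial_{y_j} k$ upgrade to \emph{uniform} boundedness on $\mathbb{R}^d \times \mathbb{R}^d$ of each of the kernel-dependent quantities $k(y,y')$, $\nabla_y k(y,y')$, $\nabla_{y'} k(y,y')$, and $\nabla_y \cdot \nabla_{y'} k(y,y')$ appearing in \eqref{eq: up formula}. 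Combined with the Lipschitz bound $\|s_p(y)\| \leq \|s_p(0)\| + L\|y\|$, this gives the product-form dominating inequality
\[
|u_p(y,y')| \leq C(1+\|y\|)(1+\|y'\|)
\]
for some constant $C > 0$; continuity of $u_p$ then follows from continuity of $s_p$ (which is Lipschitz) and of the relevant derivatives of $k$.

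The main obstacle will be verifying the uniform integrability required by Vitali. I would invoke the standard characterisation of Wasserstein-$1$ convergence: $W_1(P,Q_n) \to 0$ forces $\{\|Y\| : Y \sim Q_n\}$ to be uniformly integrable and $\sup_n \E_{Q_n}\|Y\|$ to be finite. Under $Q_n \otimes Q_n$ the marginals $Y,Y'$ are independent, so the bound
\[
\E[\|Y\|\,\|Y'\|\, \mathbf{1}\{\|Y\|\,\|Y'\| > M^2\}] \leq \E[\|Y\|\, \mathbf{1}\{\|Y\| > M\}]\, \E\|Y'\| + \E\|Y\| \cdot \E[\|Y'\|\, \mathbf{1}\{\|Y'\| > M\}]
\]
vanishes uniformly in $n$ as $M \to \infty$. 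This yields uniform integrability of $\|Y\|\|Y'\|$, and hence of $(1+\|Y\|)(1+\|Y'\|)$, under $Q_n \otimes Q_n$; via the dominating bound above this transfers to $u_p(Y,Y')$. Finiteness of $\E_{P \otimes P}[|u_p|]$ is ensured by the hypothesis $\E_P\|s_p\|^2 < \infty$, so the limiting integral is well-defined and the Vitali step closes the argument.
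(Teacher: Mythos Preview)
Your argument is essentially correct and self-contained, whereas the paper does not actually prove this proposition at all: its entire ``proof'' is the single sentence citing Proposition~9 of \cite{gorham2017measuring}. So there is no comparison of approaches to make in the usual sense; you have supplied a proof where the paper defers to the literature. Your route---the quadratic representation \eqref{eq:KSD compute}, the product bound $|u_p(y,y')|\leq C(1+\|y\|)(1+\|y'\|)$ obtained via Cauchy--Schwarz in $\mathcal{H}_k$, and then Skorokhod plus Vitali using the uniform integrability of first moments that characterises $W_1$-convergence---is precisely the natural line and is close in spirit to how the result is established in the cited reference.

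One point worth tightening: the step $\E_{P\otimes P}[u_p]=0$ is not automatic from the abstract definition of KSD; it is exactly Stein's identity $\E_{Y\sim P}[(\mathcal{A}_P f)(Y)]=0$ for all $f$ in the Stein set, and establishing it for general $P\in\mathcal{P}(\mathbb{R}^d)$ with the stated regularity requires an integration-by-parts argument in which the boundary terms are shown to vanish. This is where the combination of the Lipschitz score (hence at most linear growth of $s_p$) and $\E_P\|s_p\|^2<\infty$ is genuinely needed, together with the boundedness of $k$ and its first derivatives that you already extracted. You gesture at this (``whose hypotheses are met under the stated regularity on $p$ and $k$'') but do not carry it out; in a full write-up you would want either to do the integration by parts explicitly or to cite the relevant lemma (e.g.\ the argument underlying Proposition~1 of \cite{gorham2017measuring} or Lemma~5.1 of \cite{chwialkowski2016kernel}). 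With that addition your proof is complete.
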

\begin{proof}
This result is Proposition 9 of \cite{gorham2017measuring}.
\end{proof}

Recall that in this paper $\mathcal{X}$ is always assumed to be a Borel space.
The following result is also required:

\begin{lemma}[$L^2$ Controls Wasserstein] \label{lem: l2 was}
Let $Q \in \mathcal{P}(\mathcal{X})$ and let $S,T \in \prod_{i=1}^d L^2(Q)$.
Then we have the bound $W_1(S_\# Q, T_\# Q) \leq \|S - T\|_{\prod_{i=1}^d L^2(Q)}$.
\end{lemma}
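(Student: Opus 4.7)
The proof is essentially a chain of three elementary inequalities, so I would not expect a serious obstacle; the plan is to spell out each step carefully.

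First I would invoke the Kantorovich--Rubinstein dual representation stated in \eqref{eq: W1}, which expresses $W_1(S_\# Q, T_\# Q)$ as the supremum of $|\mathbb{E}_{Y \sim S_\# Q}[f(Y)] - \mathbb{E}_{Y \sim T_\# Q}[f(Y)]|$ over all $f : \mathbb{R}^d \rightarrow \mathbb{R}$ with $\|f\|_L \leq 1$. Note that $W_1$ is well-defined here: by the same calculation used at step $(*)$ in the proof of \Cref{prop: transport map properties}, the assumption $S, T \in \prod_{i=1}^d L^2(Q)$ combined with Jensen/Cauchy--Schwarz gives $\int \|x\| \, \mathrm{d}(S_\# Q)(x) = \int \|S(x)\| \, \mathrm{d}Q(x) < \infty$, and similarly for $T$, so $S_\# Q, T_\# Q \in \mathcal{P}_1(\mathbb{R}^d)$.

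Next I would apply the change of variables formula to rewrite each expectation under the pushforward as an expectation under $Q$:
\begin{equation*}
\mathbb{E}_{Y \sim S_\# Q}[f(Y)] - \mathbb{E}_{Y \sim T_\# Q}[f(Y)] = \mathbb{E}_{X \sim Q}\bigl[ f(S(X)) - f(T(X)) \bigr].
\end{equation*}
Bringing the absolute value inside the expectation, using the 1-Lipschitz property of $f$ pointwise to get $|f(S(x)) - f(T(x))| \leq \|S(x) - T(x)\|$, and then applying Jensen's inequality (equivalently, Cauchy--Schwarz against the constant function $1$) yields
\begin{equation*}
\bigl| \mathbb{E}_{X \sim Q}[f(S(X)) - f(T(X))] \bigr| \leq \mathbb{E}_{X \sim Q}\bigl[ \|S(X) - T(X)\| \bigr] \leq \sqrt{ \mathbb{E}_{X \sim Q}\bigl[ \|S(X) - T(X)\|^2 \bigr] }.
\end{equation*}
The right-hand side no longer depends on $f$, so taking the supremum over $f$ with $\|f\|_L \leq 1$ preserves this bound.

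Finally I would identify the right-hand side with the product $L^2(Q)$ norm: expanding the squared Euclidean norm componentwise and interchanging the sum with the integral gives
\begin{equation*}
\mathbb{E}_{X \sim Q}\bigl[ \|S(X) - T(X)\|^2 \bigr] = \sum_{i=1}^d \int_{\mathcal{X}} (S_i(x) - T_i(x))^2 \, \mathrm{d}Q(x) = \|S - T\|_{\prod_{i=1}^d L^2(Q)}^2,
\end{equation*}
which completes the proof. The only step worth pausing over is confirming finiteness of the $W_1$ distance at the outset, which as noted follows from the hypothesis that $S$ and $T$ lie in $\prod_{i=1}^d L^2(Q) \subseteq \prod_{i=1}^d L^1(Q)$.
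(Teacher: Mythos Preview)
Your proof is correct and follows essentially the same route as the paper's: dual representation of $W_1$, change of variables to pull back to $Q$, the 1-Lipschitz bound pointwise, then Jensen's inequality to pass from the $L^1$ to the $L^2$ norm. The only addition is your explicit check that $S_\# Q, T_\# Q \in \mathcal{P}_1(\mathbb{R}^d)$ so that $W_1$ is well-defined, which the paper's proof leaves implicit.
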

\begin{proof}
From the definition of the (first) Wasserstein distance, we have
\begin{align*}
    W_1(S_\# Q, T_\# Q) &= \sup_{\|f\|_L \leq 1}\left|\int_{\mathcal{X}} f(x)\,\mathrm{d}S_\#Q(x) -\int_{\mathcal{X}} f(x)\,\mathrm{d}T_\#Q(x)   \right|  \\ 
    &= \sup_{\|f\|_L \leq 1}\left|\int_{\mathcal{X}} f(S(x)) - f(T(x))\,\mathrm{d}Q(x)   \right| .
\end{align*}
If $\|f\|_L \leq 1$ then $|f(a) - f(b)|\leq \|a-b\|$ for all $a,b \in \mathbb{R}^d$, and so
\begin{align*}
     W_1(S_\# Q, T_\# Q) & \leq \int_{\mathcal{X}} \|S(x) - T(x)\|\,\mathrm{d}Q(x) \\
     & \leq \left( \int_{\mathcal{X}} \|S(x)-T(x)\|^2 \; \mathrm{d}Q(x) \right)^{1/2}
     = \|S - T\|_{\prod_{i=1}^d L^2(Q)}
\end{align*}
where the second inequality is Jensen's inequality.
%Thus, if $T_n \rightarrow T$ in $\prod_{i=1}^d L^2(Q)$, then $W_1(T_\#Q,(T_n)_\#Q)\rightarrow 0$, each in the $n \rightarrow \infty$ limit. 
\end{proof}

Our final ingredient is a basic result on the inverse multi-quadric kernel:

\begin{lemma}[Derivatives of the Inverse Multi-quadric Kernel] \label{lem: IMQ bd}
The inverse multi-quadric kernel in \eqref{eq: IMQ}, $k(x,y) = (c^2 + \|x-y\|^2)^\beta$, with $c > 0$ and $\beta \in (-1,0)$, satisfies
\begin{equation*}
    \sup_{x,y \in \mathbb{R}^d} \left| \partial_{x_1}^{\alpha_1} \dots \partial_{x_d}^{\alpha_d} \partial_{y_1}^{\alpha_1} \dots \partial_{y_d}^{\alpha_d} \; k(x,y) \right| < \infty
\end{equation*}
for all multi-indices $\alpha = (\alpha_1,\dots,\alpha_d) \in \mathbb{N}_0^d$.
\end{lemma}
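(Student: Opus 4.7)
The plan is to exploit the translation invariance of the inverse multi-quadric kernel to reduce the claim to boundedness of all partial derivatives of a single function of one variable in $\mathbb{R}^d$, and then establish this boundedness by an explicit induction on the order of the derivative.

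First I would write $k(x,y) = \phi(x-y)$ where $\phi(z) := (c^2 + \|z\|^2)^\beta$. Because $\partial_{x_i} \phi(x-y) = (\partial_i \phi)(x-y)$ and $\partial_{y_i} \phi(x-y) = -(\partial_i \phi)(x-y)$, any mixed partial of the form
\[
\partial_{x_1}^{\alpha_1}\cdots\partial_{x_d}^{\alpha_d}\partial_{y_1}^{\alpha_1}\cdots\partial_{y_d}^{\alpha_d} k(x,y)
\]
equals, up to an overall sign $(-1)^{|\alpha|}$, a single partial derivative of $\phi$ of total order $2|\alpha|$ evaluated at $x-y$. Hence it suffices to prove that every partial derivative $\partial^\gamma \phi$, $\gamma \in \mathbb{N}_0^d$, is bounded on $\mathbb{R}^d$.

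Next I would prove by induction on $|\gamma|$ that there is a polynomial $P_\gamma : \mathbb{R}^d \to \mathbb{R}$ with $\deg P_\gamma \le |\gamma|$ such that
\[
\partial^\gamma \phi(z) = P_\gamma(z)\,(c^2 + \|z\|^2)^{\beta - |\gamma|}.
\]
The base case $|\gamma|=0$ is immediate with $P_0 \equiv 1$. For the inductive step, differentiating the above expression in coordinate $i$ and factoring out $(c^2+\|z\|^2)^{\beta-|\gamma|-1}$ yields
\[
\partial_i\bigl[P_\gamma(z)(c^2+\|z\|^2)^{\beta-|\gamma|}\bigr] = \bigl[(\partial_i P_\gamma)(z)(c^2+\|z\|^2) + 2(\beta - |\gamma|) z_i P_\gamma(z)\bigr](c^2+\|z\|^2)^{\beta - |\gamma|-1},
\]
and the bracketed polynomial has degree at most $|\gamma|+1$, completing the induction.

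Finally, to extract the desired uniform bound, I would note that since $\deg P_\gamma \le |\gamma|$ there is a constant $C_\gamma$ with $|P_\gamma(z)| \le C_\gamma (c^2 + \|z\|^2)^{|\gamma|/2}$ for all $z \in \mathbb{R}^d$. Therefore
\[
|\partial^\gamma \phi(z)| \le C_\gamma (c^2+\|z\|^2)^{\beta - |\gamma|/2}.
\]
Since $\beta < 0$, the exponent $\beta - |\gamma|/2$ is strictly negative, and since $c^2 + \|z\|^2 \ge c^2 > 0$, the right-hand side is maximised at $z=0$, giving the bound $C_\gamma (c^2)^{\beta - |\gamma|/2} < \infty$ uniformly in $z$. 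This yields the required uniform bound on every mixed derivative of $k$. There is no real obstacle here: the only place calling for care is tracking the polynomial degree in the induction and ensuring the final exponent is negative so that the supremum over $\mathbb{R}^d$ is attained at $z=0$.
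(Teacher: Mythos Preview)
Your proof is correct. The core idea matches the paper's: both arguments show by induction that derivatives of $k$ have the form $(\text{polynomial of controlled degree}) \times (c^2+\|\cdot\|^2)^{\text{negative power}}$, and then read off boundedness from the degree balance.

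The execution differs in one useful way. The paper works directly in the two-variable setting, defining a class $\mathcal{F}$ of functions of the form $k(x,y)\,r_m(x-y)/(c^2+\|x-y\|^2)^m$ with $\deg r_m \le 2m$, and verifies by a somewhat lengthy product-rule expansion that $\mathcal{F}$ is closed under the paired operators $\partial_{x_i}\partial_{y_i}$. You instead first exploit translation invariance to reduce to a single function $\phi(z)=(c^2+\|z\|^2)^\beta$ of one argument, and then induct on a single derivative $\partial_i$ at a time. This halves the bookkeeping, shortens the product-rule step to one line, and in fact proves a slightly stronger statement (all partial derivatives $\partial^\gamma\phi$ are bounded, not only the symmetric combinations $\partial_x^\alpha\partial_y^\alpha k$ appearing in the lemma). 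The paper's version has the minor advantage that it stays phrased in terms of the kernel $k(x,y)$ throughout, which matches how the result is invoked later, but your reduction is the cleaner route to the same conclusion.
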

\begin{proof}
For $\alpha \in \mathbb{N}_0^d$ let $|\alpha| := \alpha_1 + \dots + \alpha_d$.
Recall that a polynomial $\prod_{|\alpha| \leq s} c_\alpha z_1^{\alpha_1} \dots z_d^{\alpha_d}$ is said to have \textit{maximal degree} $s$, where $s = |\alpha|$ is the largest integer for which $c_\alpha \neq 0$ for some $\alpha \in \mathbb{N}_0^d$.
Let 
$$
\mathcal{F} := \left\{ (x,y) \mapsto k(x,y) \frac{r_m(x-y)}{ (c^2 + \|x-y\|^2)^m } : r_m \text{ is a polynomial of maximal degree } 2m, \; m \in \mathbb{N}_0 \right\} .
$$
Then $k(x,y) \in \mathcal{F}$ and $\mathcal{F}$ is closed under the action of each of the differential operators $\partial_{x_i}\partial_{y_i}$, $i \in \{1,\dots,d\}$.
Indeed, we have from the product rule that
\begin{align*}
\textstyle \partial_{x_i} \left[ k(x,y) \frac{r_m(x-y)}{ (c^2 + \|x-y\|^2)^m } \right] & = \textstyle \frac{2 \beta (x_i - y_i) k(x,y) r_m(x-y)}{(c^2 + \|x-y\|^2)^{m+1}} + \frac{ k(x,y) \partial_{x_i} r_m(x-y)}{ (c^2 + \|x-y\|^2)^m } - \frac{2 m (x_i - y_i) k(x,y) r_m(x-y)}{ (c^2 + \|x-y\|^2)^{m+1} } 
\end{align*}
and
\begin{align}
\textstyle \partial_{x_i} \partial_{y_i} \left[ k(x,y) \frac{r_m(x-y)}{ (c^2 + \|x-y\|^2)^m } \right] & = \textstyle \left[ - \frac{2 \beta k(x,y) r_m(x-y)}{(c^2 + \|x-y\|^2)^{m+1}} - \frac{4 \beta^2 (x_i - y_i)^2 k(x,y) r_m(x-y)}{(c^2 + \|x-y\|^2)^{m+2}} \right. \nonumber \\
& \textstyle \qquad \left. + \frac{2 \beta (x_i - y_i) k(x,y) \partial_{y_i} r_m(x-y)}{(c^2 + \|x-y\|^2)^{m+1}} + \frac{4 (m+1) \beta (x_i - y_i)^2 k(x,y) r_m(x-y)}{(c^2 + \|x-y\|^2)^{m+2}} \right] \nonumber \\
& \textstyle + \left[ - \frac{ 2 \beta (x_i - y_i) k(x,y) \partial_{x_i} r_m(x-y)}{ (c^2 + \|x-y\|^2)^{m+1} } + \frac{ k(x,y) \partial_{x_i} \partial_{y_i} r_m(x-y)}{ (c^2 + \|x-y\|^2)^m } \right. \nonumber  \\
& \textstyle \qquad \left. + \frac{ 2 m (x_i - y_i) k(x,y) \partial_{x_i} r_m(x-y)}{ (c^2 + \|x-y\|^2)^{m+1} } \right] \nonumber  \\
& \textstyle + \left[ \frac{2 m k(x,y) r_m(x-y)}{ (c^2 + \|x-y\|^2)^{m+1} } + \frac{4 \beta m (x_i - y_i)^2 k(x,y) r_m(x-y)}{ (c^2 + \|x-y\|^2)^{m+2} } \right. \nonumber  \\
& \textstyle \qquad \left. - \frac{2 m (x_i - y_i) k(x,y) \partial_{y_i} r_m(x-y)}{ (c^2 + \|x-y\|^2)^{m+1} } + \frac{4 m (m+1) (x_i - y_i)^2 k(x,y) r_m(x-y)}{ (c^2 + \|x-y\|^2)^{m+2} } \right] \nonumber  \\
& = \textstyle k(x,y) \frac{r_{m+2}(x-y)}{(c^2 + \|x-y\|^2)^{m+2}} \label{eq: big expansion}
\end{align}
where $r_{m+2}(x-y)$ has been implicitly defined.
Since $\partial_{x_i} (x_i - y_i)^s = s(x_i - y_i)^{s-1}$, it follows that the terms $\partial_{x_i} r_m(x-y)$, $\partial_{y_i} r_m(x-y)$ and $\partial_{x_i} \partial_{y_i} r_m(x-y)$ appearing in \eqref{eq: big expansion} are polynomials in $x-y$ of maximal degree $2m$.
Thus, from \eqref{eq: big expansion}, $r_{m+2}(x-y)$ is a polynomial of maximal degree $2(m+2)$, showing that the set $\mathcal{F}$ is closed under the action of $\partial_{x_i} \partial_{y_i}$.

Since the differential operator $\partial_{x_1}^{\alpha_1} \dots \partial_{x_d}^{\alpha_d}$ is obtained by repeated application of operators of the form $\partial_{x_i} \partial_{y_i}$, and since it is clear that all elements of $\mathcal{F}$ are bounded on $\mathbb{R}^d \times \mathbb{R}^d$, the claim is established.
\end{proof}

Now we can prove \Cref{theorem: convergence theorem}:

\begin{proof}[Proof of \Cref{theorem: convergence theorem}]

First note that our preconditions are a superset of those required for \Cref{theorem:gorham inverse multi quadric}.
Thus the conclusion of \Cref{theorem:gorham inverse multi quadric} holds; namely, if $\mathcal{D}_{\text{S}}(P,(T_n)_\#Q)\rightarrow 0$ then $(T_n)_\#Q\Rightarrow P$. 
From \eqref{eq: main result} we have that $\mathcal{D}_{\text{S}}(P,(T_n)_\#Q)$ and $\inf_{T\in\mathcal{T}_n}\mathcal{D}_{\text{S}}(P,T_\#Q)$ agree in the $n \rightarrow \infty$ limit.
Thus it is sufficient to show that $\inf_{T\in\mathcal{T}_n}\mathcal{D}_{\text{S}}(P,T_\#Q) \rightarrow 0$ in the $n \rightarrow \infty$ limit.

Second, note that our preconditions are also a superset of those required for \Cref{prop: gorham wasserstein controls KSD}. 
Indeed, from \Cref{lem: IMQ bd} the inverse multi-quadric kernel in \eqref{eq: IMQ} is infinitely differentiable with derivatives of all orders bounded.
Thus it is sufficient to show that $\inf_{T\in\mathcal{T}_n}W_1(P,T_\#Q) \rightarrow 0$ in the $n \rightarrow \infty$ limit.

%Let $(S_n)_{n\in\mathbb{N}}$ be a sequence in $\mathcal{T}$ such that each $S_n \in \mathcal{T}_n$ that satisfies
%\begin{equation*}
%    W_1(P,(S_n)_\#Q) - \inf_{S\in\mathcal{T}_n} W_1(P,S_\#Q) \stackrel{n\rightarrow\infty}{\rightarrow} 0.
%\end{equation*}
%Since $\inf_{T \in \mathcal{T}_n} \mathbb{S}_P(T_\# Q) \leq \mathbb{S}_P((S_n)_\# Q)$ and using \Cref{prop: gorham wasserstein controls KSD}, all we need to show is that $\inf_{S\in\mathcal{T}_n} W_1(P,S_\#Q) \rightarrow 0$ as $\mathcal{T}_n \rightarrow \mathcal{T}$ so that $W_1(P,(S_n)_\#Q)\rightarrow 0$. 

%Let $\|\cdot\|_*$ be the corresponding norm of the Hilbert space $L^2(Q)^d$.

From \Cref{assumption reference,assumption target} and \Cref{prop: transport map properties}, there exists a map $T \in \prod_{i=1}^d L^2(Q)$ with $T_\# Q = P$.
From \Cref{assum: G contains T}, there is a set $\textgoth{T}$ such that $T \in \textgoth{T} \subseteq \prod_{i=1}^d L^2(Q)$ and the set $\mathcal{T}_\infty$ is dense in $\textgoth{T}$.
Thus there exists a sequence $(S_n)_{n \in \mathbb{N}} \subset \mathcal{T}_\infty$ with $S_n \rightarrow T$ in $\prod_{i=1}^d L^2(Q)$.

For each $n \in \mathbb{N}$, let $m_n \in \mathbb{N}$ denote the smallest integer $m$ for which $S_n \in \mathcal{T}_m$, which is well-defined since $S_n \in \mathcal{T}_\infty = \cup_{i \in \mathbb{N}} \mathcal{T}_i$.
Let $M_n := \max\{m_1,\dots,m_n, n\}$, so that $M_n$ is a non-decreasing sequence with $M_n \rightarrow \infty$ in the $n \rightarrow \infty$ limit. 
Note that, since $\mathcal{T}_i \subseteq \mathcal{T}_j$ for all $i \leq j$, we have $S_n \in \mathcal{T}_{M_n}$.

Thus from \Cref{lem: l2 was} we conclude that
\begin{align}
    0 \leq \inf_{T \in \mathcal{T}_{M_n}} W_1(P,T_\#Q) \leq W_1(P,(S_n)_\#Q) \leq \|T - S_n\|_{\prod_{i=1}^d L^2(Q)} \rightarrow 0 \label{eq: converge subseq}
\end{align}
in the $n \rightarrow \infty$ limit.
Again, since $\mathcal{T}_i \subseteq \mathcal{T}_j$ for $i \leq j$, the sequence $n \mapsto \inf_{T \in \mathcal{T}_n} W_1(P,T_\#Q)$ is non-increasing and, from \eqref{eq: converge subseq}, it has a subsequence that converges to 0. 
It follows that $\lim_{n \rightarrow \infty} \inf_{T \in \mathcal{T}_n} W_1(P,T_\#Q) = 0$, as required.

%We now show that there exists a sequence $(U_n)_{n\in\mathbb{N}}$ in $\mathcal{T}$ such that $W_1(P, (U_n)_\# Q) \rightarrow 0$ implies $ \inf_{S\in\mathcal{T}_n} W_1(P,S_\#Q) \rightarrow 0$.
%By assumption, there exists a transport map $T^* \in \mathcal{G}$ such that $T^*_\# Q = P$. By the definition of a dense set there exists a sequence $(U_n)_{n\in\mathbb{N}}$ in $\mathcal{T}$ such that $\|T^*-U_n\|_*\rightarrow 0$ and by our proof of (A), we must have $W_1(P,(U_n)_\#Q)\rightarrow 0$. 

%For each $n \in \mathbb{N}$ there exists $k_n \in \mathbb{N}$ such that for every $i\geq k_n$ we have $U_n \in \mathcal{T}_i$. Let $m_n = \max \{k_i\,|\,i\in\{1,\ldots,n\}\}$. Since $\mathcal{T}_i \subseteq \mathcal{T}_j$ when $i\leq j$, we must have $U_n \in \mathcal{T}_{m_n}$ since $m_n\geq k_n$. Furthermore, again since $\mathcal{T}_i \subseteq \mathcal{T}_j$ when $i\leq j$, we have $\inf_{S\in\mathcal{T}_i}W_1(P,S_\#Q)\geq \inf_{S\in\mathcal{T}_j}W_1(P,S_\#Q)$ when $i\leq j$ and thus $(a_n)_{n\in\mathbb{N}}$, where $a_n = \inf_{S\in\mathcal{T}_n}W_1(P,S_\#Q)$, forms a monotonically decreasing sequence. Let $(a_{m_n})_{n\in\mathbb{N}}$ be the subsequence of $(a_n)_{n\in\mathbb{N}}$. Then, by the squeeze theorem, since $0\leq a_{m_n} \leq W_1(P,(U_n)_\#Q)$, we must have $a_{m_n}\rightarrow 0$. 

%When a monotonic sequence has a convergent subsequence, the monotonic sequence converges to the limit of the subsequence. Thus we have $\inf_{S\in\mathcal{T}_n}W_1(P,S_\#Q) \rightarrow 0$ and we are done.

\end{proof}

\subsection{Proof of \Cref{prop: ReLU DNN}} \label{app: DNN}

Recall the \emph{rectified linear unit} activation function $\sigma(x) = \max(0,x)$, which we consider to be applied componentwise when $x \in \mathbb{R}^d$.

\begin{definition}[Deep ReLU Neural Network] \label{def: DNN}
A \emph{deep ReLU neural network} with $l$ \emph{hidden layers} from $\mathbb{R}^p$ to $\mathbb{R}^d$ is a function $f : \mathbb{R}^p \rightarrow \mathbb{R}^d$ of the form
$$
f = F_{l+1} \circ \sigma \circ F_l \circ \dots \circ F_2 \circ \sigma \circ F_1
$$
where $F_i : \mathbb{R}^{w_{i-1}} \rightarrow \mathbb{R}^{w_i}$, $i = 1,\dots,l$, is an affine transformation, $F_{l+1} : \mathbb{R}^{w_l} \rightarrow \mathbb{R}^{w_{l+1}}$ is a linear transformation, $w_0 = p$ is the \emph{input dimension}, $w_{l+1} = d$ is the \emph{output dimension}, and $w_i \in \mathbb{N}$, $i = 1,\dots,l$, is the \emph{width} of the $i$th \emph{hidden layer}.
The set of all deep ReLU neural networks with $l$ \emph{hidden layers} from $\mathbb{R}^p$ to $\mathbb{R}^d$ with maximum width $\max\{w_1,\dots,w_l\} \leq n$ is denoted $\mathcal{R}_{l,n}(\mathbb{R}^p\rightarrow\mathbb{R}^d)$ and we let $\mathcal{R}_{l,\infty}(\mathbb{R}^p\rightarrow\mathbb{R}^d) := \lim_{n \rightarrow \infty} \mathcal{R}_{l,n}(\mathbb{R}^p\rightarrow\mathbb{R}^d)$.
\end{definition}

The following, essentially trivial observation will be useful:
\begin{proposition} \label{prop: wide NN}
$\prod_{i=1}^d\mathcal{R}_{l,\infty}(\mathbb{R}^p\rightarrow\mathbb{R}) \subset \mathcal{R}_{l,\infty}(\mathbb{R}^p\rightarrow\mathbb{R}^d)$. 
\end{proposition}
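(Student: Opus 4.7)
The plan is to give a direct, constructive proof: given scalar-valued ReLU networks $f_1,\dots,f_d$, each with $l$ hidden layers, I will paste them together into a single vector-valued ReLU network with $l$ hidden layers whose layerwise dimensions are the sums of the component dimensions.

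First, I will write each $f_i \in \mathcal{R}_{l,\infty}(\mathbb{R}^p \to \mathbb{R})$ in canonical form
$f_i = F_{l+1}^{(i)} \circ \sigma \circ F_l^{(i)} \circ \sigma \circ \dots \circ \sigma \circ F_1^{(i)}$,
where $F_1^{(i)} : \mathbb{R}^p \to \mathbb{R}^{w_1^{(i)}}$ and $F_j^{(i)} : \mathbb{R}^{w_{j-1}^{(i)}} \to \mathbb{R}^{w_j^{(i)}}$ are affine for $1 \le j \le l$, while $F_{l+1}^{(i)} : \mathbb{R}^{w_l^{(i)}} \to \mathbb{R}$ is linear. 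I then define new maps $G_j$ acting on the layer of dimension $W_j := \sum_{i=1}^d w_j^{(i)}$ as follows. Let $G_1 : \mathbb{R}^p \to \mathbb{R}^{W_1}$ be the affine map $x \mapsto (F_1^{(1)}(x), \dots, F_1^{(d)}(x))$ obtained by stacking the inputs. For $2 \le j \le l$, let $G_j : \mathbb{R}^{W_{j-1}} \to \mathbb{R}^{W_j}$ be the affine map that splits its input into $d$ blocks of sizes $w_{j-1}^{(1)}, \dots, w_{j-1}^{(d)}$, applies $F_j^{(i)}$ to the $i$th block, and concatenates the outputs; this is a block-diagonal affine transformation. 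Finally let $G_{l+1} : \mathbb{R}^{W_l} \to \mathbb{R}^d$ be the analogous block-diagonal linear map whose $i$th scalar output is $F_{l+1}^{(i)}$ applied to the $i$th block.

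The key step is to verify that $f := G_{l+1} \circ \sigma \circ G_l \circ \sigma \circ \dots \circ \sigma \circ G_1$ equals $(f_1, \dots, f_d)$. This uses only the fact that the ReLU activation $\sigma$ is applied componentwise, so that for any vector partitioned into blocks $\sigma(G_j(z))$ is exactly the concatenation of the vectors $\sigma(F_j^{(i)}(z^{(i)}))$, where $z^{(i)}$ is the $i$th block of $z$. A direct induction on the layer index then shows that after $j$ hidden layers the activations of $f$ are the concatenation of the $j$th layer activations of the $f_i$, and evaluating $G_{l+1}$ yields $(f_1(x), \dots, f_d(x))$.

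To close the argument I observe that $f$ has exactly $l$ hidden layers of widths $W_1, \dots, W_l$, each finite since $W_j \le d \cdot \max_i w_j^{(i)} < \infty$; the pre-activation maps $G_1, \dots, G_l$ are affine and $G_{l+1}$ is linear, matching \Cref{def: DNN}. Hence $f \in \mathcal{R}_{l,\infty}(\mathbb{R}^p \to \mathbb{R}^d)$, proving the inclusion. There is no real obstacle: the construction is essentially a direct sum of networks, and the only point needing a line of care is that the activation being componentwise is what lets us encode all $d$ component networks in parallel within a single network of the same depth.
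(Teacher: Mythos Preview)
Your proof is correct and follows essentially the same approach as the paper: both arguments concatenate the $d$ scalar networks width-wise into a single network via block-diagonal affine maps, exploiting that $\sigma$ acts componentwise. The paper phrases this more tersely as a canonical injection $\prod_{i=1}^d \mathcal{R}_{l,n}(\mathbb{R}^p \to \mathbb{R}) \hookrightarrow \mathcal{R}_{l,nd}(\mathbb{R}^p \to \mathbb{R}^d)$, whereas you spell out the layerwise construction and the induction explicitly.
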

\begin{proof}
For fixed $n \in \mathbb{N}$, there is a canonical injection from $\prod_{i=1}^d\mathcal{R}_{l,n}(\mathbb{R}^p\rightarrow\mathbb{R})$ into $\mathcal{R}_{l,nd}(\mathbb{R}^p\rightarrow\mathbb{R}^d)$ that concatenates the $d$ neural networks width-wise, to form a single neural network with width $nd$. 
Since every element of $\prod_{i=1}^d\mathcal{R}_{l,\infty}(\mathbb{R}^p\rightarrow\mathbb{R})$ belongs to $\prod_{i=1}^d\mathcal{R}_{l,n}(\mathbb{R}^p\rightarrow\mathbb{R})$ for a sufficiently large $n \in \mathbb{N}$, the claim is established.
\end{proof}

Here we introduce the shorthand $L^2(\mathbb{R}^p)$ for $L^2(\lambda_{\mathbb{R}^p})$ where $\lambda_{\mathbb{R}^p}$ is the Lebesgue measure on $\mathbb{R}^p$.
The following result on the approximation properties of deep ReLU neural networks, which derives from the fact that the set of continuous piecewise linear functions $f : \mathbb{R} \rightarrow \mathbb{R}$ is dense in $L^2(\mathbb{R})$, will be required.

\begin{proposition} \label{prop: dense NN}
For every function $f \in \prod_{i=1}^d L^2(\mathbb{R}^p)$ and every $\epsilon > 0$, there exists a function $g \in  \mathcal{R}_{l, \infty}(\mathbb{R}^p\rightarrow\mathbb{R}^d)$ such that $\|f - g\|_{\prod_{i=1}^d L^2(\mathbb{R}^p)} < \epsilon$, where $l = \lceil \log_2(p+1) \rceil$.
\end{proposition}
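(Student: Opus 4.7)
The plan is to reduce the vector-valued approximation to a scalar-valued one via \Cref{prop: wide NN}, then combine $L^2$-density of continuous piecewise linear (CPwL) functions with an \emph{exact} representation theorem for CPwL functions by deep ReLU networks at the prescribed depth.

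First I would apply \Cref{prop: wide NN} to the decomposition $f = (f_1,\dots,f_d)$ with $f_i \in L^2(\mathbb{R}^p)$: it suffices to produce, for each $i$, a scalar-valued network $g_i \in \mathcal{R}_{l,\infty}(\mathbb{R}^p\to\mathbb{R})$ with $\|f_i - g_i\|_{L^2(\mathbb{R}^p)} < \epsilon/\sqrt{d}$, because the concatenation $g := (g_1,\dots,g_d)$ then lies in $\mathcal{R}_{l,\infty}(\mathbb{R}^p\to\mathbb{R}^d)$ by \Cref{prop: wide NN} and satisfies $\|f - g\|_{\prod_{i=1}^d L^2(\mathbb{R}^p)} < \epsilon$ by Pythagoras.

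Next I would approximate each scalar component $f_i$ in $L^2(\mathbb{R}^p)$ by a compactly supported CPwL function $h_i$. This is standard: $C_c(\mathbb{R}^p)$ is dense in $L^2(\mathbb{R}^p)$, and any compactly supported continuous function can be uniformly approximated on its support by a CPwL function (for instance, by piecewise-linear interpolation on a sufficiently fine simplicial triangulation of a bounding box, extended by zero outside). Uniform approximation on a compact set yields $L^2$ approximation, so for any $\delta > 0$ one obtains such $h_i$ with $\|f_i - h_i\|_{L^2(\mathbb{R}^p)} < \delta$.

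Finally I would invoke the exact representation theorem for ReLU networks (Arora, Basu, Mianjy and Mukherjee, \emph{Understanding Deep Neural Networks with Rectified Linear Units}), which states that every CPwL function $\mathbb{R}^p \to \mathbb{R}$ is realised \emph{exactly} by a ReLU network with at most $\lceil \log_2(p+1) \rceil$ hidden layers and some finite width. Applied to $h_i$, this produces $g_i \in \mathcal{R}_{l,\infty}(\mathbb{R}^p \to \mathbb{R})$ with $g_i \equiv h_i$; taking $\delta = \epsilon/\sqrt{d}$ componentwise then closes the argument via \Cref{prop: wide NN}.

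The main obstacle is the last step: the logarithmic depth $\lceil \log_2(p+1) \rceil$ is not obvious from first principles, as it hinges on the identity that every CPwL function on $\mathbb{R}^p$ can be written as a difference of two max-of-affine functions together with a balanced-tree implementation of $\max$ by a composition of $\sigma$'s. I would cite the Arora et al.\ construction rather than reproduce the combinatorics. The other two steps are routine applications of Hilbert-space density and of Pythagoras respectively.
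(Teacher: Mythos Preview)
Your proposal is correct and follows essentially the same route as the paper: reduce to the scalar case via \Cref{prop: wide NN} and the Pythagorean identity, then appeal to Arora et al.\ for the depth-$\lceil \log_2(p+1)\rceil$ ReLU approximation of each component in $L^2(\mathbb{R}^p)$. The only difference is expository: the paper cites Arora et al.'s $L^2$-density theorem directly for the scalar case, whereas you unpack it into CPwL density in $L^2$ plus exact ReLU realisation of CPwL functions.
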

\begin{proof}
For $d=1$, this result is a special case of Theorem 2.3 in \cite{arora2016understanding}, which derives from the fact that continuous piecewise linear functions are dense in $L^2(\mathbb{R})$.

For general $d$, we observe that for each component $f_i \in L^2(\mathbb{R}^p)$ we can find a function $g_i \in \mathcal{R}_{l,\infty}(\mathbb{R}^p\rightarrow\mathbb{R})$ with $\|f_i - g_i\|_{L^2(\mathbb{R}^p)} < \epsilon / \sqrt{d}$.
Then, letting $g = (g_1,\dots,g_d) : \mathbb{R}^p \rightarrow \mathbb{R}^d$, we have that $g \in \prod_{i=1}^d \mathcal{R}_{l,\infty}(\mathbb{R}^p\rightarrow\mathbb{R})$ and
$$
\|f - g\|_{\prod_{i=1}^d L^2(\mathbb{R}^p)} = \sqrt{ \sum_{i=1}^d \|f_i - g_i\|_{L^2(\mathbb{R}^p)}^2 } < \sqrt{ \sum_{i=1}^d \frac{\epsilon^2}{d} } = \epsilon . 
$$
Finally, we note from \Cref{prop: wide NN} that $g \in \prod_{i=1}^d\mathcal{R}_{l,\infty}(\mathbb{R}^p\rightarrow\mathbb{R}) \subset \mathcal{R}_{l,\infty}(\mathbb{R}^p\rightarrow\mathbb{R}^d)$. 
%To show this, one can construct the canonical injection from $\prod_{i=1}^d\mathcal{R}_{l,n}(\mathbb{R}^p\rightarrow\mathbb{R}) \rightarrow \mathcal{R}_{l,nd}(\mathbb{R}^p\rightarrow\mathbb{R}^d)$ by concatenating the $d$ neural networks to form one larger neural network, with width $nd$. 
%Where, for a given $h\in\prod_{i=1}^d\mathcal{R}_{l,n}(\mathbb{R}^p\rightarrow\mathbb{R})$, the $n$ hidden units of the $i$th hidden layer of the $d$th tensor product element maps to $n$ hidden units of the $i$th layer.
\end{proof}

Now we present the proof of \Cref{prop: ReLU DNN}:

\begin{proof}[Proof of \Cref{prop: ReLU DNN}]

From \Cref{assumption reference,assumption target} and \Cref{prop: transport map properties} there exists $T \in \prod_{i=1}^d L^2(Q)$ such that $T_\# Q = P$.
Let $\textgoth{T} = \prod_{i=1}^d L^2(Q)$, so that $T \in \textgoth{T}$ is satisfied.

From the statement of \Cref{prop: ReLU DNN} we have $\mathcal{T}_n := \mathcal{R}_{l,n}(\mathbb{R}^p\rightarrow \mathbb{R}^d)$ with $l = \lceil \log_2(p+1) \rceil$.
From \Cref{def: DNN} it is therefore clear that $\mathcal{T}_n \subseteq \mathcal{T}_m$ whenever $n \leq m$ and that $\mathcal{T}_\infty = \mathcal{R}_{l,\infty}(\mathbb{R}^p\rightarrow \mathbb{R}^d)$. 

Thus all parts of \Cref{assum: G contains T} have been verified except the part that requires $\mathcal{T}_\infty$ to be dense in $\textgoth{T}$; i.e. that the set $\mathcal{R}_{l,\infty}(\mathbb{R}^p\rightarrow\mathbb{R}^d)$ is dense in the Hilbert space $\prod_{i=1}^d L^2(Q)$.
%From \Cref{prop: wide NN} and an argument identical to that used in the proof of \Cref{prop: dense NN}, it suffices to consider a single coordinate and to show that $\mathcal{R}_{l,\infty}(\mathbb{R}^p\rightarrow\mathbb{R})$ is dense in $L^2(Q)$.
To establish this last part, we will make use of \Cref{prop: dense NN}:

Let $T \in \prod_{i=1}^d L^2(Q)$ and $\epsilon > 0$.
From the definition of $L^2(Q)$, there exists $c \geq 0$ such that, for each of the coordinates $i \in \{1,\dots,d\}$,
\begin{equation*}
    \int_{\mathbb{R}^p\setminus[-c,c]^p} T_i(x)^2\,\mathrm{d}Q(x) < \frac{\epsilon}{4d}.
\end{equation*}
Let 
\begin{equation*}
    f_i(x) := \begin{cases}
    T_i(x), & x\in[-c,c]^p, \\
    0, & x\in \mathbb{R}^p\setminus[-c,c]^p .
    \end{cases}
\end{equation*}
Our assumption that $Q$ admits a positive and continuous density $q$ on $\mathbb{R}^p$ ensures that $f\in \prod_{i=1}^d L^2(\mathbb{R}^p)$, since
\begin{align}
    \|f\|_{\prod_{i=1}^d L^2(\mathbb{R}^p)}^2 
    %= \int g(x)^2 \mathrm{d}x 
    = \sum_{i=1}^d \int_{[-c,c]^p} T_i(x)^2 \mathrm{d}x 
    = \sum_{i=1}^d \int_{[-c,c]^p} \frac{T_i(x)^2}{q(x)} \mathrm{d}Q(x) & \leq \left[ \sup_{x \in [-c,c]^d} \frac{1}{q(x)} \right] \sum_{i=1}^d \int_{[-c,c]^p} T_i(x)^2 \mathrm{d}Q(x) \nonumber \\
    & \leq \left[ \sup_{x \in [-c,c]^p} \frac{1}{q(x)} \right] \sum_{i=1}^d \|T_i\|_{L^2(Q)}^2 \nonumber \\
    & = \left[ \sup_{x \in [-c,c]^p} \frac{1}{q(x)} \right]  \|T\|_{\prod_{i=1}^d L^2(Q)}^2, \label{eq: L2 bound}
\end{align}
where the supremum in \eqref{eq: L2 bound} is finite, since $q^{-1}$ is well-defined and continuous on the compact set $[-c,c]^p$.
Let also $q_{\max} := \sup_{x \in \mathbb{R}^p} q(x)$, which is well-defined since we assumed $q$ to be continuous and bounded on $\mathbb{R}^p$.
Then, since $f\in \prod_{i=1}^d L^2(\mathbb{R}^p)$, we may evoke \Cref{prop: dense NN} to find a function $g \in \mathcal{R}_{l,\infty}(\mathbb{R}^p \rightarrow \mathbb{R}^d)$ such that $\|f-g\|_{\prod_{i=1}^d L^2(\mathbb{R}^p)}^2 < \epsilon / (4 q_{\max})$.
It remains to check that $g$ approximates $T$ in $\prod_{i=1}^d L^2(Q)$.
To this end, we can use the triangle inequality in $\prod_{i=1}^d L^2(Q)$ and the fact that $(a+b)^2 \leq 2(a^2 + b^2)$ to see that
\begin{align*}
    \|T - g\|_{\prod_{i=1}^d L^2(Q)}^2 & \leq 2 \|T - f\|_{\prod_{i=1}^d L^2(Q)}^2 + 2\|f - g\|_{\prod_{i=1}^d L^2(Q)}^2 \\
    & = 2 \sum_{i=1}^d \int_{\mathbb{R}^p \setminus [-c,c]^p} T_i(x)^2 \mathrm{d}Q(x) + 2 \sum_{i=1}^d \int (f_i(x) - g_i(x))^2 q(x) \mathrm{d}x  \\
    & \leq 2 \sum_{i=1}^d \int_{\mathbb{R}^p \setminus [-c,c]^p} T_i(x)^2 \mathrm{d}Q(x) + 2 q_{\max} \sum_{i=1}^d \int (f_i(x) - g_i(x))^2 \mathrm{d}x < \frac{\epsilon}{2} + \frac{\epsilon}{2} = \epsilon .
\end{align*}
Since $\epsilon > 0$ was arbitrary, this argument shows that the set $\mathcal{R}_{l,\infty}(\mathbb{R}^p \rightarrow \mathbb{R}^d)$ is dense in $\prod_{i=1}^d L^2(Q)$, as required.
\end{proof}

\section{Computational Details} \label{sec: computational details}

This section provides full details for the experiments presented in \Cref{sec:experiments}.

\subsection{Performance Metric} \label{subsec: performance metrics}

To estimate the Wasserstein-1 distance between the target distribution and approximations we computed the \emph{earth mover distance} (EMD) between two uniformly weighted empirical measures, each formed from $10^4$ samples from their respective distributions and with the Euclidean distance between the samples used to construct the cost matrix. 
The EMD was computed using an implementation in the Python Optimal Transport (\texttt{POT}) package \citep{flamary2017pot}. 
For the target distribution, independent samples were used in the synthetic test bed in \Cref{subsec: toy examples} and thinned samples from a long HMC chain were used for the real examples in \Cref{subsec: biochemical oxygen,subsec: lotka-volterra}. 
For the approximate distribution, independent samples from $T^\theta_\#Q$ were used.
%, while for the HMC benchmark, with the chain thinned by an appropriate factor to reduce to $10^4$ samples for cases where the number of chain samples computed within the gradient evaluation budget exceeded $10^4$.

\subsection{Details of the Synthetic Test Bed} \label{subsec: details of toy models}

To assess the proposed methods, we considered the following bivariate densities
\begin{align*}
    p_1(x,y) &:= \mathcal{N}(x; 0, \eta_1^2)\mathcal{N}(y; \sin(bx), \eta_2^2), \\
    p_2(x,y) &:= \mathcal{N}(x; 0, \sigma_1^2)\mathcal{N}(y; ax^2, \sigma_2^2) , \\
    p_3(x,y) &:= \textstyle \frac{1}{n}\sum_{i=1}^n \mathcal{N}(x,y; \mu_i, \sigma^2I_2) ,
\end{align*}
where $\mathcal{N}(x;\mu,\sigma^2)$ is the univariate Gaussian density with mean $\mu$ and variance $\sigma^2$, and $\mathcal{N}(x,y;\mu,K)$ is the bivariate Gaussian density with mean vector $\mu$ and covariance matrix $K$. The parameter choices for the sinusoidal experiment $p_1$ were $\eta_1^2 = 1.3^2, \eta_2^2 = 0.001^2$ and $b=1.2$. The parameter choices for the banana experiment $p_2$ were $\sigma_1^2 = 1, \sigma_2^2 = 0.1^2$ and $a = 0.5$. The parameter choices for the multi-modal experiment $p_3$ were $n = 4, \mu_1 = (1,1),\mu_2 = (1,-1),\mu_3 = (-1,-1), \mu_4 = (-1,1)$ and $\sigma^2 = 0.2^2$. The target densities can be seen in \Cref{fig: synthetic test bed densities}.

\begin{figure*}[h!] 
    \centering
    \includegraphics[width = \textwidth]{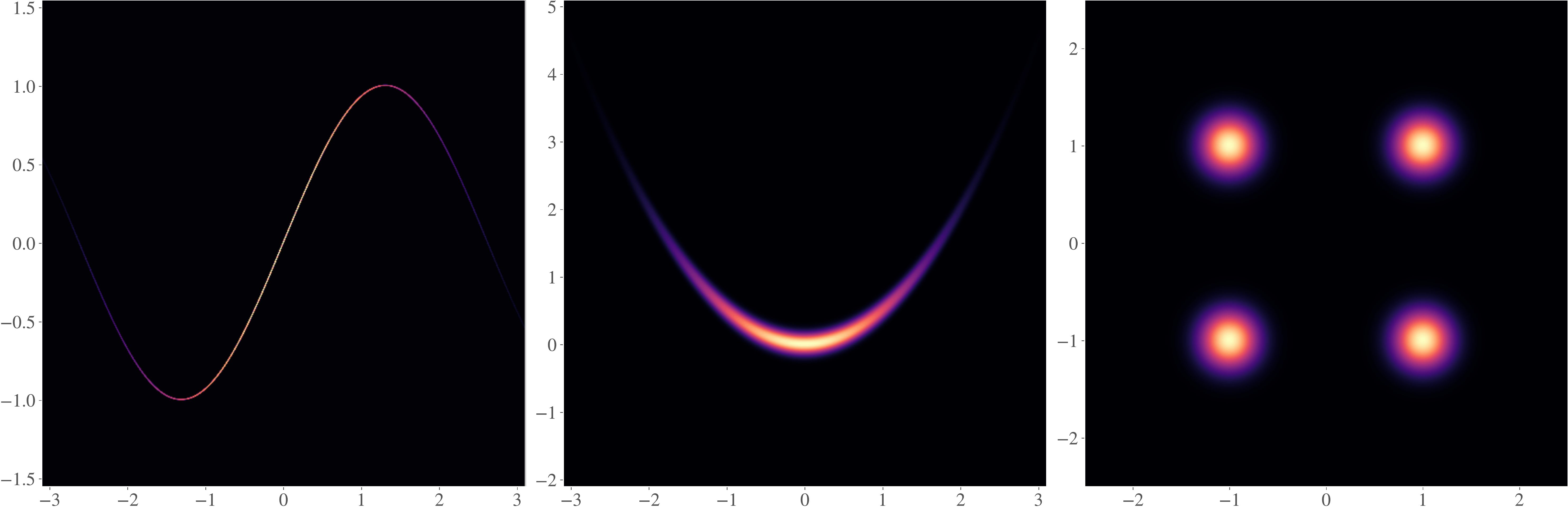}
    \caption{Contour plots of the three synthetic test densities $p_1$, $p_2$ and $p_3$ from left to right used in \Cref{subsec: toy examples}.}
    \label{fig: synthetic test bed densities}
\end{figure*}

Computational costs for each method in terms of number of target evaluations and CPU wall-clock time against performance are shown, respectively, in \Cref{fig: wass vs evaluations} and \Cref{fig: wass vs time}. %Our performance metric is Wassertein-1 distance, where we used $5\times 10^3$ samples (see \Cref{subsec: performance metrics} for more details). 
%\chris{Earlier on the same page it says we used $10^4$ samples?}
From \Cref{fig: wass vs evaluations}, there is no clear sense in which KSD or KLD out-performs the other across the different synthetic tests; this is in line with the conclusion of \Cref{subsec: toy examples}.
%it appears that when the KSD-based and KLD-based measure transport have similar outcomes, the convergence rate is similar.
For CPU wall-clock time in \Cref{fig: wass vs time}, KSD-based measure transport is approximately three to five times slower than its KLD counterpart. However, note that our implementation of KSD is not production code and further performance gains can certainly be achieved. 
%\matthew{Note quadratic nature of U-statistic estimator over linear KLD puts possible limit of performance? I also know that on a GPU, due to its performance on parallel tasks, this quadratic limit isn't a big deal. I.e. I expect (and have somewhat confirmed on the Lotka-Volterra example) that GPU wall-clock time between KSD and KLD based measure transport would be similar.} 

\begin{figure}[h!] 
   \centering
   \includegraphics[width=\textwidth]{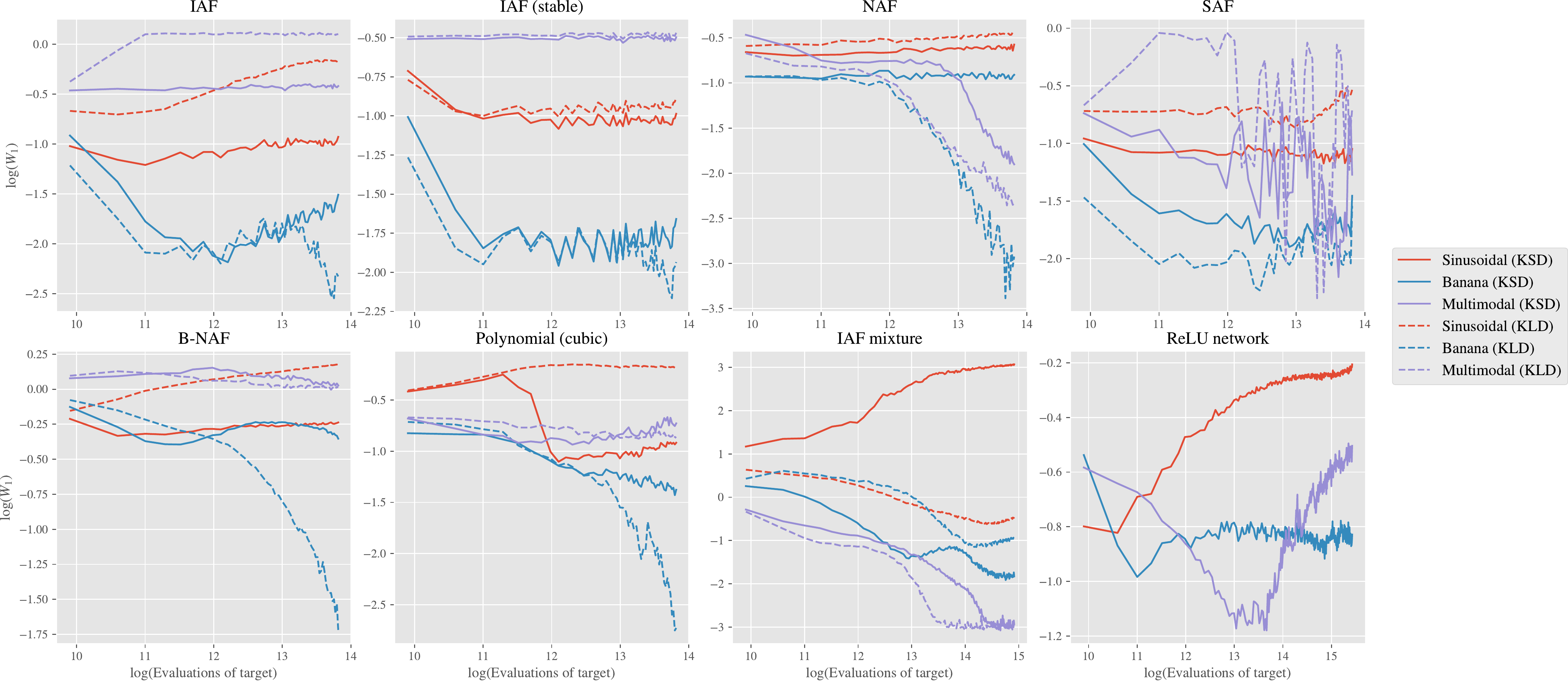}

\caption{
    Wasserstein-1 metric, $W_1$, as a function of the total number of evaluations of either $p$ or its gradient, for each synthetic test experiment.
}
\label{fig: wass vs evaluations}
\end{figure}

\begin{figure}[h!] 
   \centering
   \includegraphics[width=\textwidth]{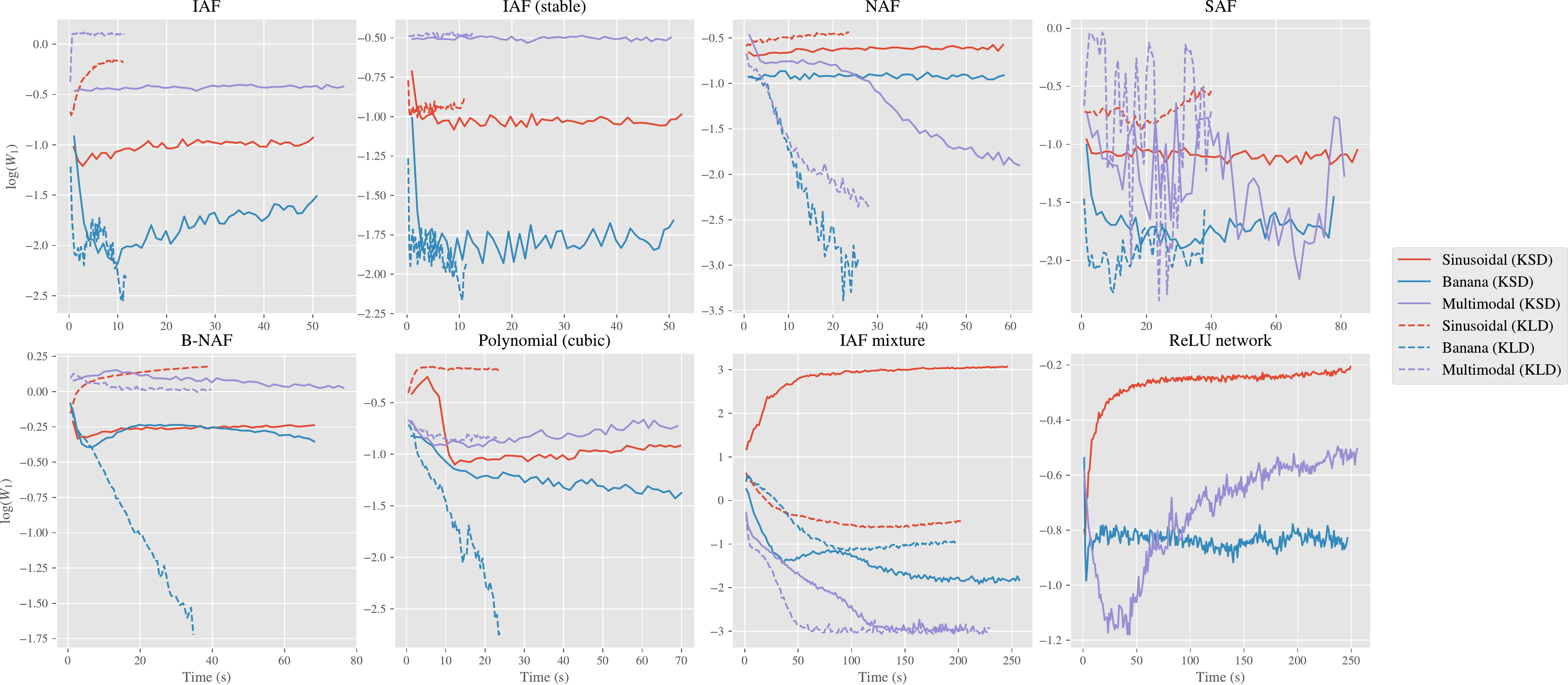}

\caption{
    Wasserstein-1 metric, $W_1$, as a function of the CPU wall-clock time for each synthetic test experiment.
}
\label{fig: wass vs time}
\end{figure}

We now discuss the implementations details of all the methods used in \Cref{subsec: toy examples}. 
In all our measure transport implementations, unless specified otherwise, we used existing implementations in Pyro \citep{bingham2018pyro}. 
Furthermore, the reference measure used for synthetic tests was the standard Gaussian on $\mathbb{R}^p$.

\paragraph{Hamiltonian Monte Carlo:}{We used an adaptive HMC algorithm in which the integrator step size was automatically adjusted using a dual-averaging algorithm in a warm-up phase to give an average acceptance statistic of 0.8 \citep{betancourt2014optimizing} and the number of integrator steps per transition was set dynamically by expanding the trajectory until a termination criterion was met \citep{hoffman2014no,betancourt2017conceptual}. We used the HMC implementations in the Python package \citep{graham2020mici}, with the dual-averaging adaptation algorithm settings following the defaults used in Stan \citep{carpenter2017stan}. Only the post-warm-up samples were included in estimates of the discrepancies and density plots.
}

In the following, the autoregressive neural networks that we specify are the Masked Autoencoders for Density Estimation (MADE) of \cite{germain2015made}; the only difference being, that there is no sigmoidal non-linearity applied to the output layer.

\paragraph{Inverse Autoregressive Flow (IAF):}{
 Recall from \Cref{subsec: transport maps section}, that an autoregressive flow $T:\mathbb{R}^d\rightarrow\mathbb{R}^d$ is of the form $T_i(x) = \tau(c_i(x_1,\ldots,x_{i-1}),x_i)$, where $T = (T_1,\ldots,T_d)$ and $x = (x_1,\ldots,x_d)$. The IAF flow of \cite{kingma} takes $c_i$ to be the $i$th output of an autoregressive neural network and $\tau$ to be an affine transform of the form
 \begin{equation*}
     \tau(c_i(x_1,\ldots,x_{i-1}), x_i) = \mu_i + \exp(s_i) x_i,
 \end{equation*}
 where $\mu_i\in\mathbb{R}$ and $s_i\in\mathbb{R}$ are outputs from $c_i(x_1,\ldots,x_{i-1})$. Note that the coefficient of $x_i$ is forced to be positive, this ensures the resulting transport map is monotonic. 

For each synthetic test problem, we used a single IAF where the dimensionality of the hidden units in the single hidden layer of the underlying autoregressive neural network was $40$. The underlying autoregressive neural network used the ReLU activation function. The IAF was initialised using the same default random initialisation in both the KSD and KLD experiments. $10,000$ iterations of Adam were used, with learning rate $0.001$.
}

\paragraph{Stable Inverse Autoregressive Flow (IAF stable):}{ 
Closely related to an IAF, a stable IAF was introduced in \cite{kingma} in order to improve numerical stability. The only difference being that the $\tau$ is of the form
 \begin{equation*}
     \tau(c_i(x_1,\ldots,x_{i-1}), x_i) = \text{sigmoid}(s_i)x_i + (1 - \text{sigmoid}(s_i))\mu_i,
 \end{equation*}
 where $\mu_i\in\mathbb{R}$ and $s_i\in\mathbb{R}$ are outputs from $c_i(x_1,\ldots,x_{i-1})$ and 
 \begin{equation*}
     \text{sigmoid}(x) = \frac{e^x}{1 + e^x},
 \end{equation*}
 where, for $x\in\mathbb{R}^d$, we consider $\text{sigmoid}$ to be applied component-wise. Since $\text{sigmoid}$ is monotonic, the resulting transport map is again monotonic. The restriction $\text{sigmoid}(s_i)\in (0,1)$ may limit the expressibility of the transport map compared to standard IAF, but this at the expense of increased numerical stability.
 
 For each synthetic test problem, we used a single stable IAF where the dimensionality of the hidden units in the single hidden layer of the underlying autoregressive neural network was $40$. The underlying autoregressive neural network used the ReLU activation function. The stable IAF was initialised using the same default random initialisation in both the KSD and KLD experiments. $10,000$ iterations of Adam were used, with learning rate $0.001$.
}

\paragraph{Neural Autoregressive Flow (NAF):}{
A NAF, introduced in \cite{huang2018neural}, is again an autoregressive flow which, compared to the preceding IAF and stable IAF, offers greater flexibility and a universality guarantee. The autoregressive conditioner $c$ is again taken as an autoregressive neural network and $\tau:\mathbb{R}\rightarrow\mathbb{R}$ takes the form of a monotonic neural network whose weights and biases are the output of the conditioner $c$. Monotonicity of $\tau$ is guaranteed by using strictly positive weights and strictly monotonic activation functions. 

The particular implementation of $\tau$ that we used in Pyro uses what is termed a \textit{deep sigmoidal flow} (DSF) in \cite{huang2018neural}. A DSF is a single layer dense neural network with a sigmoidal activation function. Furthermore, in order to increase the effective range of $\tau$, an inverse sigmoid function is taken on the output layer. However, this inverse sigmoid function has domain $(0,1)$ and thus the weights and biases of the output layer must be constrained such that this composition can be well defined. In a DSF, this is achieved by having no bias term on the output layer and constraining the output layer's weights, $w_{ij}$, to satisfy $\sum_i w_{ij} = 1$. Thus the output term of a DSF is a convex combination of the output of the hidden layer. The overall transformation of a DSF is thus of the form
\begin{equation*}
    f_\text{DSF} = \text{sigmoid}^{-1} \circ C \circ \text{sigmoid} \circ F 
\end{equation*}
where $F:\mathbb{R}\rightarrow\mathbb{R}^k$ is an affine transformation and $C:\mathbb{R}^k\rightarrow\mathbb{R}$ is a convex combination. $10,000$ iterations of Adam were used, with learning rate $0.001$.

Following \cite{huang2018neural}, for each synthetic test problem, we used a single DSF style NAF, where the dimensionality of the hidden sigmoid units in each DSF was $16$ and the dimensionality of the hidden units in the single hidden layer of the underlying autoregressive neural network was $40$. The underlying autoregressive neural network used the ReLU activation function. Note that the dimensionality of $i$th output $c_i$ of this autoregressive neural network is $48$, due to the $2\times 16$ weight terms and the $16$ bias terms in each DSF. The NAF was initialised using the same default random initialisation in both the KSD and KLD experiments.
}

\paragraph{Spline Autoregressive Flow (SAF):}{
A SAF, developed in \cite{durkan2019neural} and \cite{dolatabadi2020invertible}, is again an autoregressive flow that takes $\tau:\mathbb{R}\rightarrow\mathbb{R}$ as a piecewise monotonic rational polynomial function (a spline) on an interval $[-a,a]$ and the identity otherwise. A rational polynomial function is the ratio of two polynomials. In \cite{durkan2019neural}, the polynomial was taken as quadratic polynomial and in \cite{dolatabadi2020invertible}, the polynomial was taken as a linear polynomial. The parameters controlling each rational polynomial function, are the output of an autoregressive neural network. We also note that originally the spline transform was implemented in the context of coupling flows, rather than autoregressive flows.

For each synthetic test problem, we used a single SAF based on rational linear splines with $8$ pieces defined on the interval $[-3,3]$. The underlying autoregressive neural network had two hidden layers, each of dimension $20$. The SAF was initialised using the same default random initialisation in both the KSD and KLD experiments. $10,000$ iterations of Adam were used, with learning rate $0.001$.
}

\paragraph{Block Neural Autoregressive Flow (B-NAF):}{
A B-NAF, introduced in \cite{decao2019block}, is similar in spirit to the NAF. It is an autoregressive flow, where $\tau$ is a neural network. 
The difference is now that the weights and biases of $\tau$ are not the output of an autoregressive conditioner network $c$; instead, the parameters of the neural network are learned directly. In a B-NAF, the affine transformations $L:\mathbb{R}^{ad} \rightarrow \mathbb{R}^{bd}$, for $a,b\in\mathbb{Z}^+$, used at a given layer are always in a lower triangular block form
\begin{equation*}
    L(x) = \begin{pmatrix}
    u(B_{11}) & 0 & \ldots & 0 \\
    B_{21} & u(B_{22}) & \ldots & 0 \\
    \vdots & \vdots & \ddots & \vdots \\
    B_{d1} & B_{d1} & \ldots & u(B_{dd})
    \end{pmatrix}x + \mu,
\end{equation*}
where $u:\mathbb{R}\rightarrow\mathbb{R}^+$, each $B_{ii}\in \mathbb{R}^{a\times b}$ and $\mu \in \mathbb{R}^{bd}$ is the freely parameterised bias term . The positivity-ensuring transform $u$ enforces monotonicity. Bijectivity is further ensured by using bijective activation functions. Note that this particular form of affine transformation place restrictions on the structure of the neural network. For instance, the hidden dimensions must be a multiple of the input dimension $d$. Similarly to NAFs, B-NAFs also have a universality result.

For each synthetic test problem we used a single B-NAF with the $\tanh$ activation function and $u(x) = \exp(x)$. The B-NAF used had two hidden layers and was of the form
\begin{equation*}
    f_{BNAF} = L_3 \circ \tanh \circ L_2 \circ \tanh \circ L_1,
\end{equation*}
with lower triangular block affine transformations $L_1:\mathbb{R}^2\rightarrow\mathbb{R}^{16}$, $L_2:\mathbb{R}^{16}\rightarrow\mathbb{R}^{16}$ and $L_3:\mathbb{R}^{16}\rightarrow\mathbb{R}^2$. The B-NAF was initialised using the same default random initialisation in both the KSD and KLD experiments. $10,000$ iterations of Adam were used, with learning rate $0.001$.
}

\paragraph{Polynomial (Cubic):}{
 Polynomials were first put forward as possible parametric transport maps in measure transport \citep{Marzouk_2016, parno2018transport}. In \cite{Marzouk_2016}, each component of a polynomial transport map $T:\mathbb{R}^d\rightarrow\mathbb{R}^d$, was parameterised as a linear basis expansion of multivariate polynomials $\phi_j:\mathbb{R}^d\rightarrow\mathbb{R}$. Each $\phi_j$ is further parameterised with respect to a vector of polynomial degrees $j = (j_1,\ldots,j_d)\in\mathbb{N}^d$ as a product of $d$ univariate polynomials of the form
 \begin{equation*}
     \phi_j(x) = \prod_{k=1}^d \psi_{j_k}(x_i),
 \end{equation*}
 where each $\psi_{j_k}(x_i)$ is a univariate degree $j_k$ polynomial. These $\psi_{j_k}$ can come from orthogonal families of polynomials or simply be monomials. For instance, we could take the $\psi_{j_k}$ to be orthogonal with respect to the reference measure of the transport map. The $i$th component of $T$ can thus be written as
 \begin{equation*}
     T_i(x) = \sum_{j \in \mathcal{J}_i} \lambda_{j,i}\phi_j(x),
 \end{equation*}
 where each $\lambda_{j,i}\in\mathbb{R}$. This is a flexible parameterisation that can enforce triangularity through the choices of the $\mathcal{J}_i$. For example, a natural choice to enforce triangularity would be to take $\mathcal{J}_i$ to consist of vectors $j = (j_1,\ldots,j_i,0,\ldots,0)$ such that $\sum_{k}j_k \leq p$. The first constraint enforces triangularity and the second restraint ensures that the total degree of the resulting polynomials would be no greater than a given $p \in\mathbb{N}$. In higher dimensions, this may not be practical since the number of parameters grows quickly as the dimension increases. Thus other constraints on $\mathcal{J}_i$ were put forward, such as removing mixed terms in the basis. 
 
 An issue with this approach is that the resulting maps are not monotonic for all values of the coefficients $\lambda_{j,i}$. In \cite{Marzouk_2016} and \cite{parno2018transport}, monotonicity was constrained locally at a given set of samples $\{u_i\}_{i=1}^n$ from the reference distribution. Due to the triangular nature of the transport map, this effectively results in a finite set of linear constraints of the form $\partial_{x_i} T_i(u_k) > 0$ for $i = 1,\ldots d$ and $k = 1,\ldots, n$. In our implementation of KLD-based polynomial transport for the synthetic test bed, we found that this approach was not necessary since, in each case, the resulting Jacobian always had a positive determinant.
 
In our synthetic experiments, we used the the natural choice of the $\mathcal{J}_i$ that enforces triangularity that we previously discussed with $p=3$ and took the $\psi_i$ as simple monomials. The overall transport map was thus a multivariate cubic polynomial of the form
 \begin{equation*}
     \begin{pmatrix}
     T_1(x_1) \\
     T_2(x_1,x_2)
     \end{pmatrix} = \begin{pmatrix}
     \sum_{i=0}^3 c_i x_1^i \\
     \sum_{i=0}^3 \sum_{j=0}^{3-i} c_{ij} x_1^ix_2^{j}  \\
     \end{pmatrix}.
 \end{equation*}
 The polynomial transport map was initialised to the identity in all synthetic experiments. $10,000$ iterations of Adam were used, with learning rate $0.001$.
}

\paragraph{IAF mixture:}{
A mixture of transport maps is a distribution of the form
\begin{equation*}
    \sum_{i=1}^n w_i T_\#^{(i)} Q_i,
\end{equation*}
where the $T^{(i)}$ are each a transport map of a given form, the $Q_i$ are possibly distinct reference distributions and the mixing weights $w_i \geq 0$ satisfy $\sum_i w_i = 1$. This is a very flexible extension to using just a single transport map. Furthermore, in principle, both the number of mixing components $n$ and the mixing weights $w_i$ could be learnt. For example, the weights $w_i$ could be the output of a neural network with $\text{softmax}$ applied to the output layer\footnote{The $i$th component of the softmax function is of the form $\text{softmax}(x)_i = \frac{\exp(x_i)}{\sum_{j}\exp(x_j)}$.}, as was done in \cite{pires2020variational}.

For simplicity, in our synthetic experiments, we \textit{a priori} set $n = 4$ and further set each $w_i = 1/4$. We took each $T^{(i)}$ as a single IAF, where the dimensionality of the hidden units in the single hidden layer was $8$. Refer to our discussion of an IAF in \Cref{subsec: details of toy models} or \cite{kingma} for full details of an IAF. The $Q_i$ were initialised as Gaussians with means $(-2,2), (-2,-2), (2,-2), (2,2)$ respectively and each with identity covariance matrix. $30,000$ iterations of Adam were used, with learning rate $0.001$.
}

\paragraph{ReLU network:}{
Refer to \Cref{def: DNN} for the definition of a deep ReLU network. 

For our synthetic experiments, we implemented a deep ReLU network for each synthetic test problem. 
When using KSD, the transport map need not be a diffeomorpism and so, to illustrate this flexibility, the input dimension of the ReLU network for each experiment was taken as $4$ (while the dimension of the target was 2). For each problem, the ReLU network $f_{ReLU}:\mathbb{R}^4\rightarrow\mathbb{R}^2$ had two hidden layers and was of the form
\begin{equation*}
    f_{ReLU} =   F_3\circ \sigma \circ F_2 \circ \sigma \circ F_1,
\end{equation*}
where $F_1:\mathbb{R}^4\rightarrow\mathbb{R}^{20}$, $F_2:\mathbb{R}^{20}\rightarrow\mathbb{R}^{20}$ and $F_3:\mathbb{R}^{20}\rightarrow\mathbb{R}^2$ are affine transformations and $\sigma$ is the ReLU non-linearity, defined in \Cref{app: DNN}. 
Using the default random initialisation of these ReLU networks nearly always resulted in bad output. So, for each synthetic experiment, the ReLU network was pretrained for $10,000$ iterations of KSD-based measure transport using Adam with learning rate $0.001$, in order to approximate the reference distribution $\mathcal{N}((0,0),I_2)$. 
That is, we pretrained the ReLU network in order to initialise it close to $T_\# Q \approx \mathcal{N}((0,0),I_2)$.
After pretraining, $50,000$ further iterations of Adam were used for each synthetic test problem, with learning rate $0.001$.
}

\subsection{Details of the Biochemical Oxygen Model Experiment} \label{subsec: details of biochemical model} 

\paragraph{Derivation of the Posterior:}
Following on from \Cref{subsec: biochemical oxygen}, recall that the two-dimensional biochemical oxygen demand model is of the form
\begin{equation*}
    B(t) = \alpha_1(1 - \exp(-\alpha_2t)).
\end{equation*}
Due to the positivity constraints on $\alpha_1$ and $\alpha_2$, we perform inference on the log of the parameters and thus consider the model
\begin{equation*}
    B(t; \theta_1,\theta_2) = e^{\theta_1}(1 - \exp(-e^{\theta_2}t)).
\end{equation*}
Synthetic data $y = (y_i)_{i=1}^6$ were generated at times $t = 0,1,2,3,4,5$ with the parameter values $\theta_1 = \log(1)$ and $\theta_2 = \log(0.1)$, with observations corrupted by independent mean $0$ Gaussian errors with variance $\sigma^2 = 0.05^2$. See \Cref{fig: bod data} for a plot of $B(t;\theta_1,\theta_2)$ with these given parameter values alongside our generated synthetic data.

\begin{figure}[h!] 
   \centering
   \includegraphics[width=0.4\textwidth]{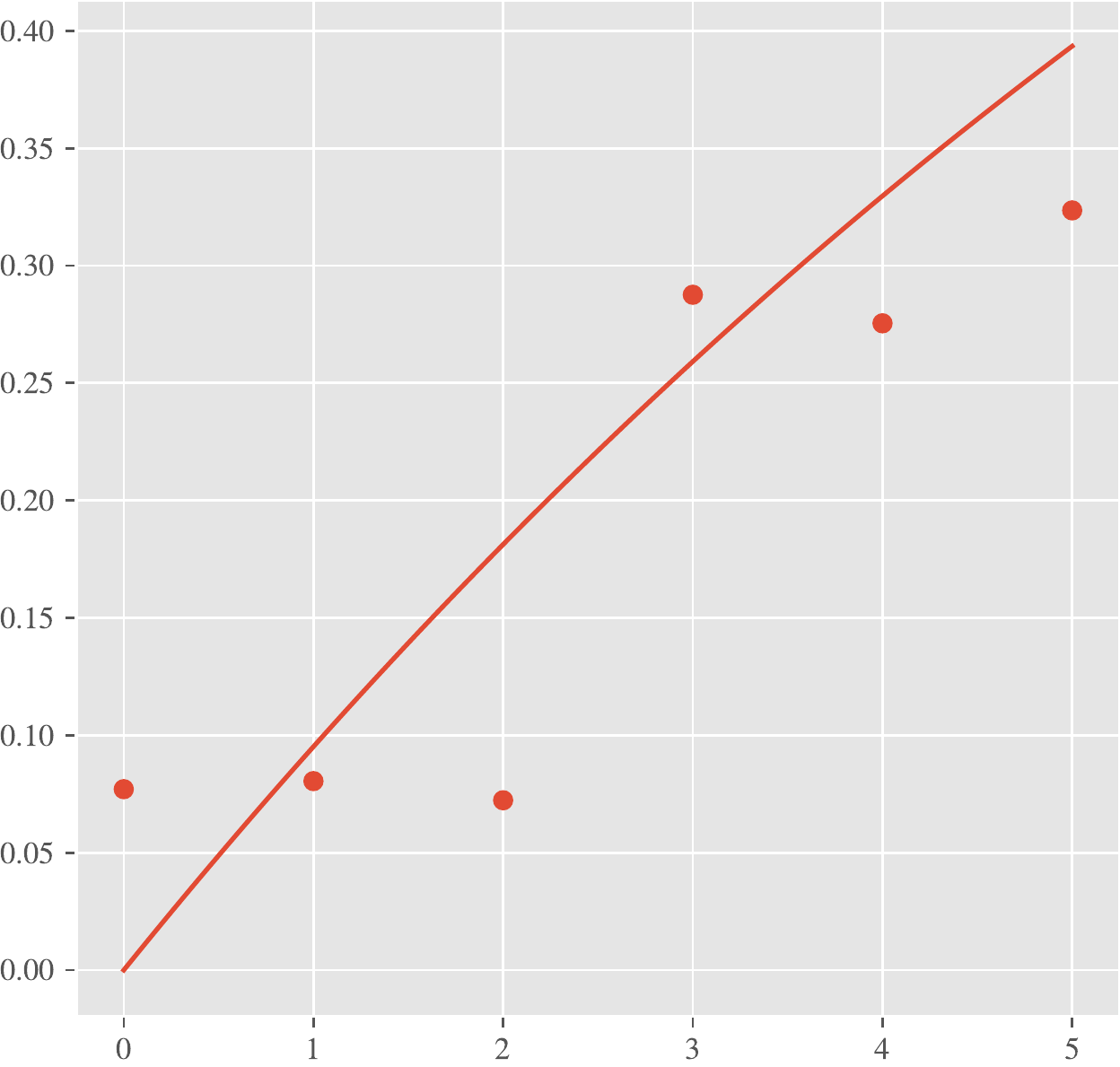}

\caption{
    Plot of $B(t; \log(1),\log(0.1))$ with the corresponding synthetic data at $t = 0,1,2,3,4,5$.
}
\label{fig: bod data}
\end{figure}
The likelihood is thus of the form
\begin{equation*}
    p(y\,|\,\theta_1,\theta_2) = \prod_{i=1}^6 \mathcal{N}(y_i; B(t_i; \theta_1,\theta_2),\sigma^2),
\end{equation*}
The prior specified for $\theta = (\theta_1,\theta_2)$ was $\theta\sim\mathcal{N}((0,0),I_2)$. The resulting posterior density is thus of the form
\begin{equation*}
    p(\theta_1,\theta_2\,|\,y) \propto \mathcal{N}(\theta_1,\theta_2;(0,0),I_2) \prod_{i=1}^6 \mathcal{N}(y_i; B(t_i; \theta_1,\theta_2),\sigma^2).
\end{equation*}
\paragraph{Methodology:}
Our choice of parametric transport map was a Block Neural Autoregressive Flow (B-NAF) of \cite{decao2019block}. We used the same B-NAF as the one used in \Cref{subsec: toy examples}, where we again used a B-NAF with two hidden layers of the form
\begin{equation*}
    f_{BNAF} =  L_3 \circ \tanh \circ L_2 \circ \tanh \circ L_1,
\end{equation*}
with lower triangular block affine transformations $L_1:\mathbb{R}^2\rightarrow\mathbb{R}^{16}$, $L_2:\mathbb{R}^{16}\rightarrow\mathbb{R}^{16}$ and $L_3:\mathbb{R}^{16}\rightarrow\mathbb{R}^{2}$. Refer to \Cref{subsec: details of toy models} or to \cite{decao2019block} for a full description of a B-NAF. The lengthscale used for KSD was $\ell = 0.1$. We again used the Adam optimiser, with default learning rate $0.001$ with $30,000$ iterations for each method.

\paragraph{Results:} See \Cref{fig: bod results} for samples obtained from each resulting transport map. The KSD-based method obtained a Wasserstein-1 distance of $0.069$ and the KLD-based method obtained a Wasserstein-1 distance of $0.015$.  Refer to \Cref{subsec: performance metrics} for details on how this was calculated. \Cref{fig: bod prior to posterior} plots $B(t;\theta_1,\theta_2)$ using samples from the prior and the approximate posterior using KSD-based measure transport.

\begin{figure}[h!] 
   \centering
   \includegraphics[width=0.8\textwidth]{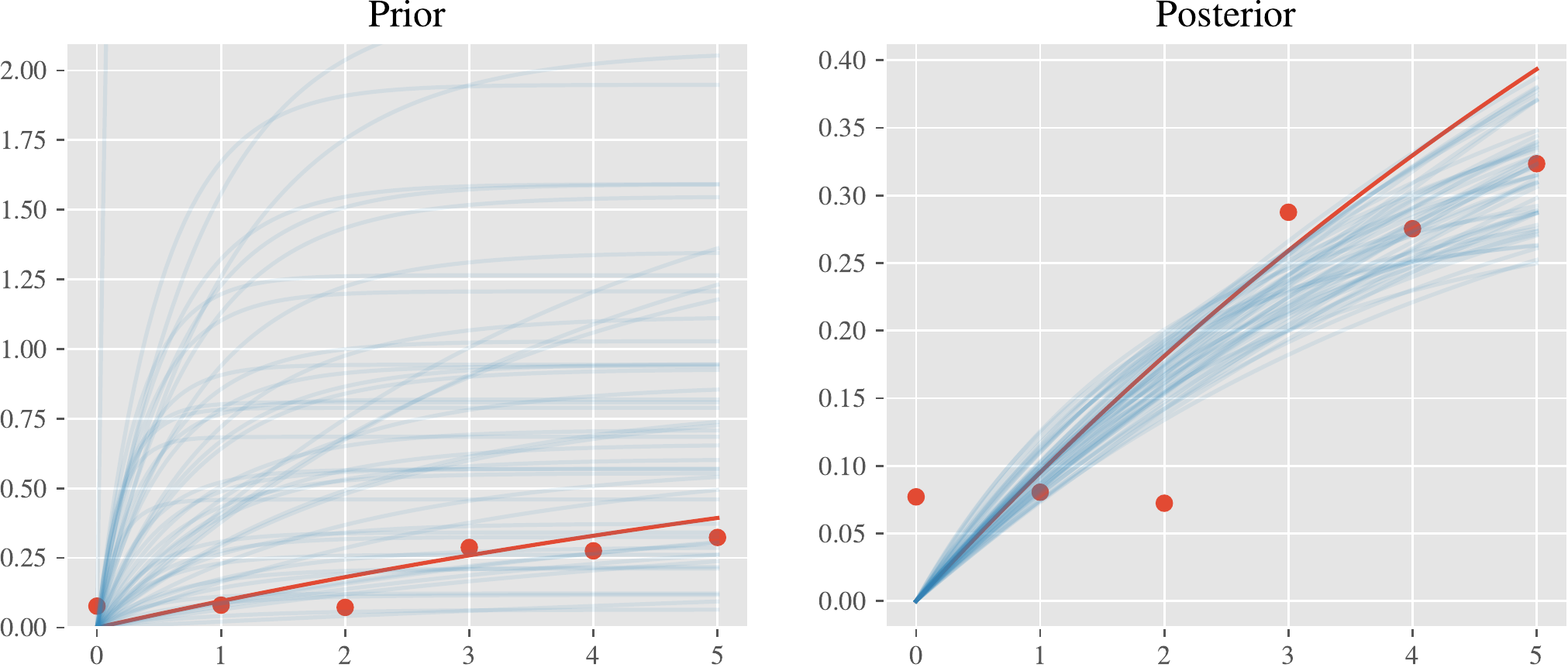}
\caption{
    Plot of $B(t;\theta_1,\theta_2)$ using $50$ samples for $\theta$ from (left) the prior and (right) the posterior, as approximated using KSD-based measure transport. The red line is $B(t;\log(1),\log(0.1))$.
}
\label{fig: bod prior to posterior}
\end{figure}

\subsection{Details of the Generalised Lotka--Volterra Model Experiment}  \label{subsec: details of lotka volterra}

\paragraph{Prior Specification:}

Recall that, from \Cref{subsec: lotka-volterra}, the generalised Lotka--Volterra model we considered was of the form
\begin{align*} 
    \frac{\mathrm{d}p}{\mathrm{d}t}(t) &=  rp(t)\left(1 - \frac{p(t)}{k}\right) - s\frac{p(t)q(t)}{a + p(t)} \\
    \frac{\mathrm{d}q}{\mathrm{d}t}(t) &= u\frac{p(t)q(t)}{a + p(t)} - vq(t),
\end{align*}
with parameters $p_0,q_0,r,k,s,a,u,v > 0$.
These parameters are physical quantities, see \cite{rockwood2015ecology} for their full meaning. 
Due to the positivity constraints on these parameter values, similar to our biochemical oxygen demand experiment \Cref{subsec: biochemical oxygen}, we again perform inference on the log of the parameters. We thus consider the model
\begin{align*} 
    \frac{\mathrm{d}p}{\mathrm{d}t}(t) &=  e^{r'} p(t)\left(1 - \frac{p(t)}{e^{k'}}\right) - e^{s'}\frac{p(t)q(t)}{e^{a'} + p(t)} \\
    \frac{\mathrm{d}q}{\mathrm{d}t}(t) &= e^{u'}\frac{p(t)q(t)}{e^{a'} + p(t)} - e^{v'}q(t),
\end{align*}
where we perform inference on the parameter $\theta = (p_0',q_0',r',k',s',a',u',v') \in \mathbb{R}^8$. 

After an investigation of the sensitivities of the solutions of the ODE with respect to the parameter values, we specified the following independent prior 
\begin{alignat*}{2}
    p_0' &\sim \mathcal{N}(\log 45, 0.2^2), q_0' &&\sim \mathcal{N}(\log 7, 0.3^2) \\
    r' &\sim \mathcal{N}(\log 0.5, 0.3^2), k' &&\sim \mathcal{N}(\log 80, 0.15^2) \\
    s' &\sim \mathcal{N}(\log 1.3, 0.2^2), a' &&\sim \mathcal{N}(\log 30, 0.1^2) \\
    u' &\sim \mathcal{N}(\log 0.6, 0.1^2), v' &&\sim \mathcal{N}(\log 0.28, 0.07^2).
\end{alignat*}
Letting $\mu \in \mathbb{R}^8$ be the vector of these given mean values and $K$ the diagonal matrix with these given variances on the diagonal, we have $\theta \sim \mathcal{N}(\mu,K)$. This specification results in log-normal priors on the exponentiated parameters.

Synthetic data $y = (p_i,q_i)_{i=1}^6$ were generated at times $t = 0,10,20,30,40,50$ with parameter values $(p_0',q_0',r',k',s',a',u',v') = (\log 50,\log 5, \log 0.6,\log 90, \log 1.2,\log 25, \log 0.5, \log 0.3)$; these data were perturbed by independent mean $0$ Gaussian errors with variance $\sigma^2 = 40$. 
See \Cref{fig: pred prey data} for a plot of the solution of the generalised Lotka-Volterra model with these given parameters alongside our generated synthetic data. 

\begin{figure}[h!] 
   \centering
   \includegraphics[width=0.4\textwidth]{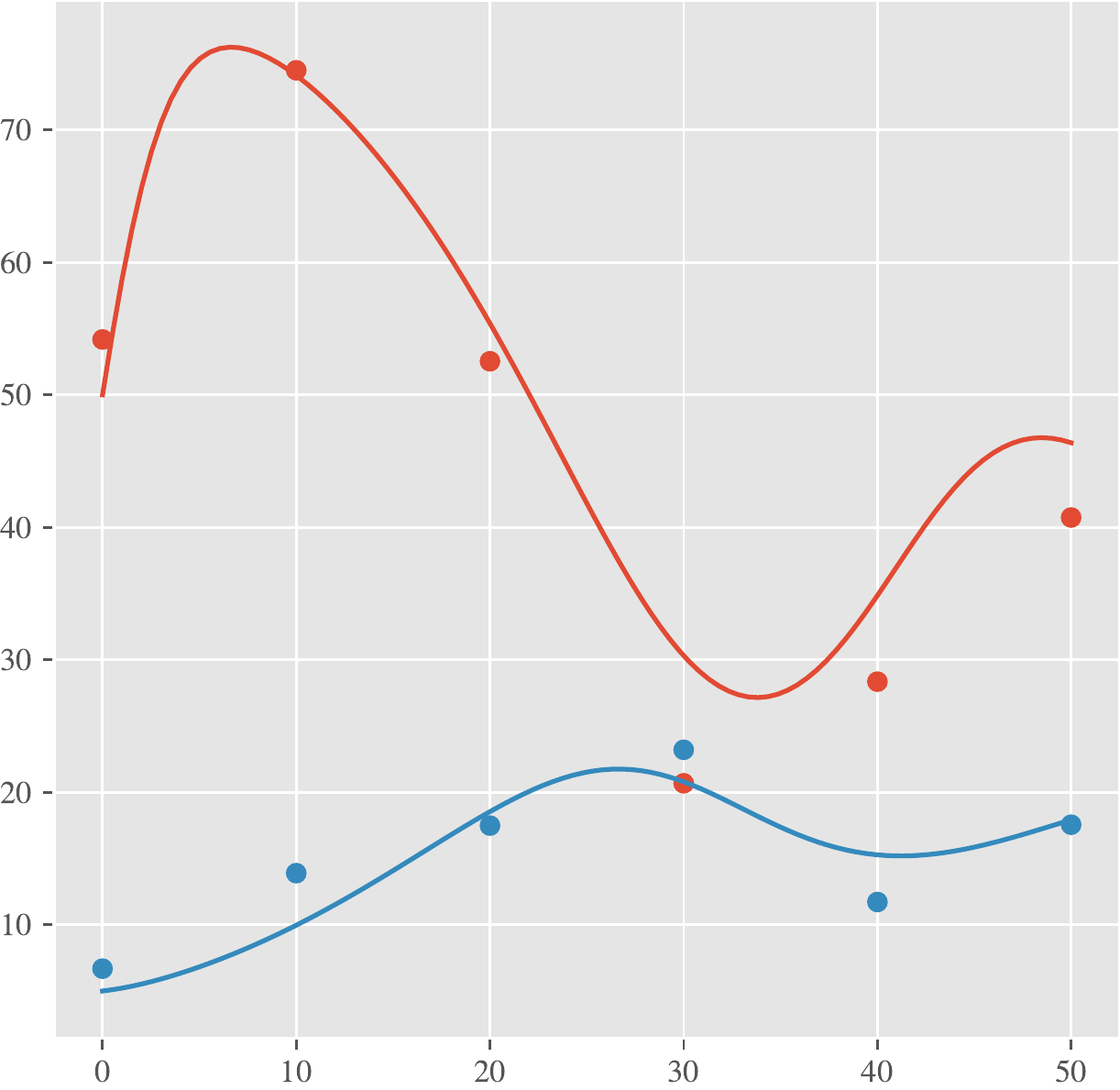}
\caption{
    Solution of the generalised Lotka--Volterra model and the data that were provided at $t = 0,10,20,30,40,50$.
}
\label{fig: pred prey data}
\end{figure}

The likelihood is thus of the form
\begin{equation*}
    p(y\,|\,\theta) = \prod_{i=1}^6 \mathcal{N}(p_i,q_i; (p_\theta(t_i),q_\theta(t_i)), \sigma^2 I_2),
\end{equation*}
where $p_\theta(t),q_\theta(t)$ are the solutions of generalised Lotka-Volterra ODE \eqref{eq: LV} with given parameter $\theta$. The resulting posterior density is thus of the form
\begin{equation*}
    p(\theta\,|\,y) \propto \mathcal{N}(\theta;\mu,K)\prod_{i=1}^6 \mathcal{N}(p_i,q_i;(p_\theta(t_i),q_\theta(t_i)), \sigma^2 I_2).
\end{equation*}

\paragraph{Methodology:} We used the \texttt{torchdiffeq} Python library \citep{chen2018neural} in order to numerically solve the Lotka--Volterra model and further utilised Pytorch's automatic differentiation capabilities to propagate gradients through the solver. In our implementation, we used the default Dormand-Prince Runge-Kutta method. The reference measure used was the prior. Our choice of parametric transport map was a B-NAF with $u(x) = \exp(x)$, of the form
\begin{equation*}
    f_{BNAF} =  L_3 \circ \tanh \circ L_2 \circ \tanh \circ L_1,
\end{equation*}
with lower triangular block affine transformations $L_1:\mathbb{R}^8\rightarrow\mathbb{R}^{64}$, $L_2:\mathbb{R}^{64}\rightarrow\mathbb{R}^{64}$ and $L_3:\mathbb{R}^{64}\rightarrow\mathbb{R}^{8}$. Refer to \Cref{subsec: details of toy models} or to \cite{decao2019block} for a full description of a B-NAF.

For both the KSD and KLD experiments, we used the same random initialisation of the B-NAF and pretrained on $10,000$ iterations of KLD-based measure transport on the prior (the reference measure), to ensure that the initial pushforward of samples through the B-NAF resulted in non-degenerate solutions of the Lotka--Volterra model.

The lengthscale used for KSD was $\ell = 0.1$ and we again used the Adam optimiser, with default learning rate $0.001$ with $50,000$ iterations for each method.

\paragraph{Results:} The KSD-based method obtained a Wasserstein-1 distance of $0.130$, whereas the KLD-based method acheived a Wasserstein-1 distance of $0.110$. Refer to \Cref{subsec: performance metrics} for details on how this was calculated. The resulting approximating distributions for both the KSD and KLD methods are plotted in \Cref{fig: lotka volterra marginals}.

\begin{figure*}[h!] 
    \centering
    \begin{subfigure}[b]{0.495\textwidth}
        \centering
        \includegraphics[width = \textwidth]{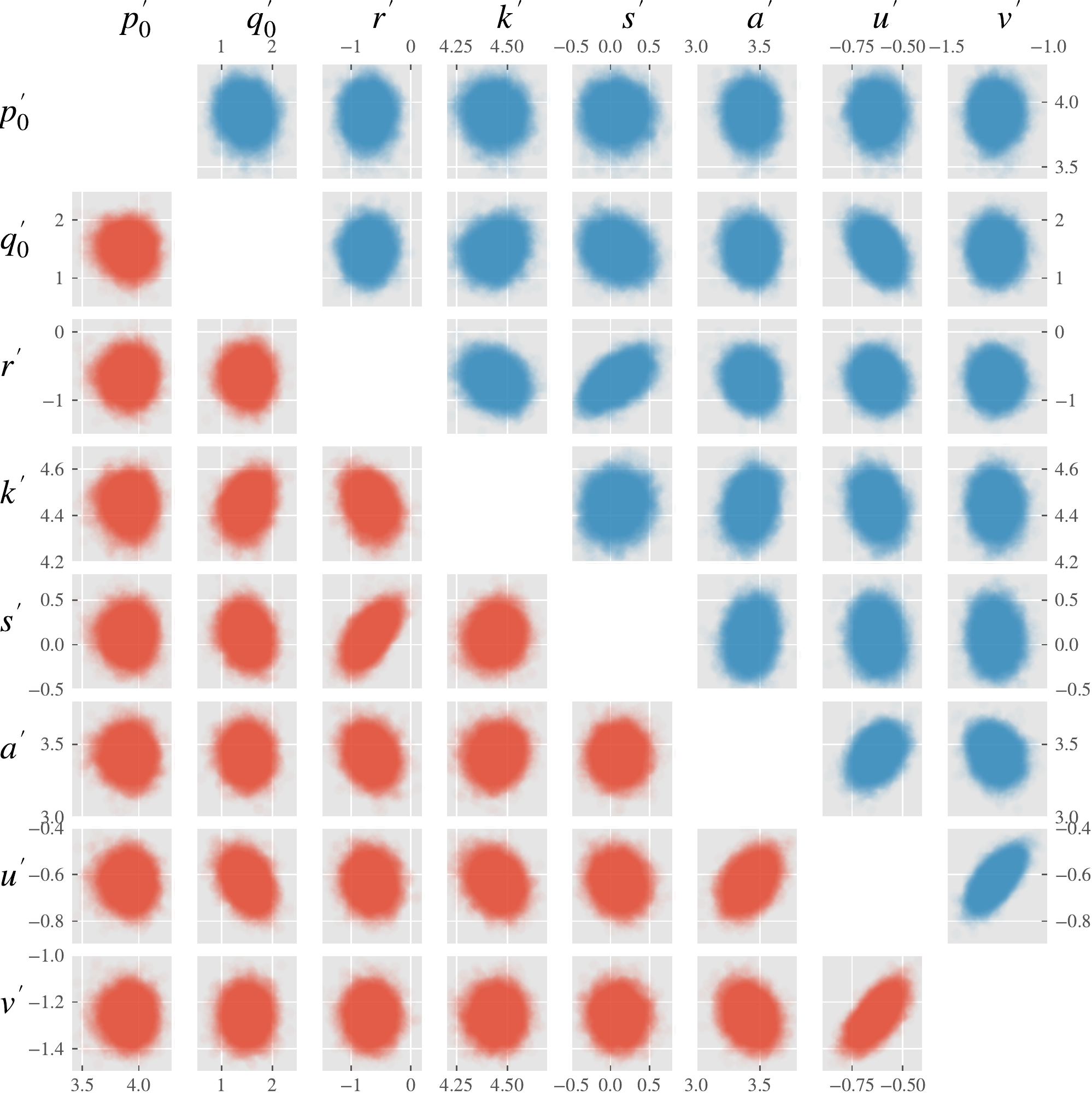}
        \caption{KSD}
        \label{subfig: LV KSD}
    \end{subfigure}%
    \hfill
    \begin{subfigure}[b]{0.495\textwidth}
        \centering
        \includegraphics[width = \textwidth]{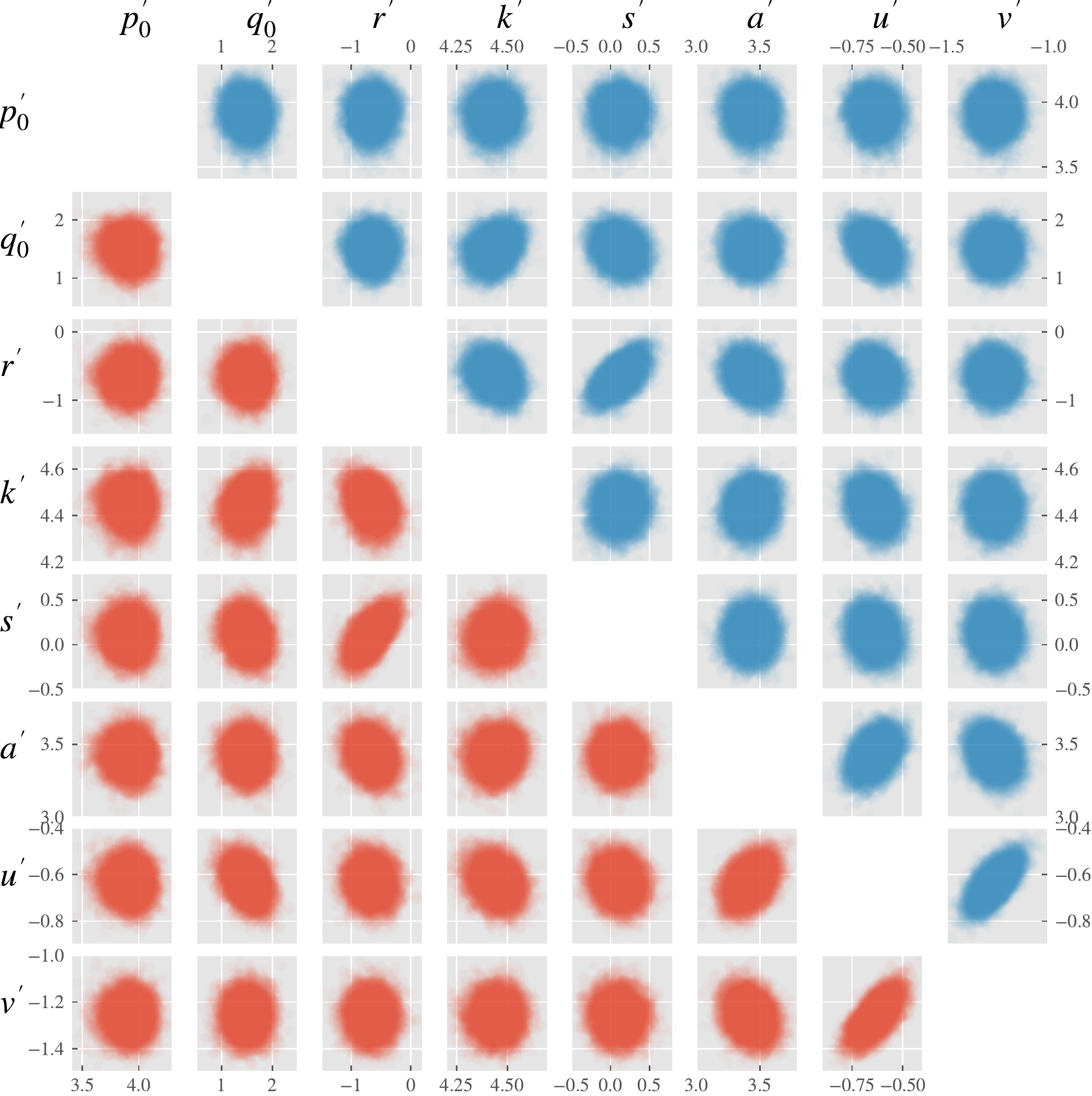}
        \caption{KLD}
        \label{subfig: LV KLD}
    \end{subfigure} 
    \caption{Two-dimensional projections of samples from (a) the KSD-based method and (b) the KLD-based method. In both plots, the lower triangular subplots are the two-dimensional projections of samples from the gold-standard HMC and the upper triangular subplots are the two-dimensional projections of samples from the two methods.}
    \label{fig: lotka volterra marginals}
\end{figure*}

\section{Further Investigations} \label{sec: investigations}

Here we report a series of further investigations, that explore specific aspects of KSD-based measure transport in more detail.

\subsection{Initialisation of Parameters in the Transport Map} \label{subsec: initialise}
Both KLD and KSD-based measure transport can be sensitive to the initialisation of the parameters in a given transport map. 
This is, for instance, evidenced in \Cref{fig: multimodal failure}. In our experiments we generally used a random initialisation as specified by their implementations in Pyro \citep{bingham2018pyro}. In \Cref{sec: computational details}, we specify for each experiment in the main paper what initialisation was used and whether we pretrained on the reference distribution $Q$. %Initialisation was mainly an issue for the ReLU network transport map, since default random initialisations result in initial samples from the approximating distribution being over concentrated. This thus resulted in the output after optimisation being over concentrated. \matthew{I could put in a figure here?}

A general remedy for poor initialisation is either to pretrain on the reference distribution. This was done in our applied examples in \Cref{subsec: biochemical oxygen} and \Cref{subsec: lotka-volterra}. 
Alternatively, in a Bayesian inference context one could pretrain on the prior distribution instead.
The latter approach may be advantageous since we are guaranteed that the target's support is contained within the prior's support.
%\matthew{I don't know what else to say! It's a bit annoying, since in the main text I think we refer to this section as the section where we explain how we initialised our transport maps, but I had already explained how we did this in \Cref{sec: computational details}.}

\subsection{Investigating the Choice of Stochastic Optimisation Method} \label{subsec: stochastic optimisation}
In all our experiments in \Cref{sec:experiments}, we used the Adam optimiser of \cite{kingma2014adam} with a fixed batch size of $100$ and with a varying number of iterations. In this section we explore how the output of KSD-based measure transport, with a fixed number of iterations of stochastic optimisation, interacts with the batch size as well as the stochastic optimisation method used. We fixed the target distribution as the multimodal problem and considered only B-NAF as our transport map with standard Gaussian reference distribution.  Results are shown in \Cref{table: other optimisation results}, where we report the Wasserstein-1 distance using $10^4$ samples (see \Cref{subsec: performance metrics} for more details). We pretrained the B-NAF on the reference distribution using KLD as our loss for $5000$ iterations of Adam with learning rate $l = 0.001$, hence the discrepancy with the main results reported in \Cref{table: toy density results}. 

\begin{table*}[h!] 
    \newcolumntype{Y}{>{\centering\arraybackslash}X}

\begin{tabularx}{\textwidth}{@{}lYYYY@{}} \toprule
          & \multicolumn{4}{c}{Batch size}  \\
      \cmidrule(lr){2-5}
     {Optimisation method} & {$25$} & {$50$} & {$100$} & {$200$} \\ \midrule
     Adam ($l = 0.001$)          & $0.594$  & $0.546$ & $0.121$ & $\bf{0.0827}$ \\
     Adam ($l = 0.01$)       & $0.663$ & $0.768$ & $0.726$  & $\bf{0.242}$ \\ \midrule
     Adagrad ($l = 0.001$)              & $0.630$ & $0.620$ & $0.617$  & $\bf{0.616}$ \\ 
     Adagrad ($l = 0.01$)                  & $0.612$ & $0.608$ & $0.606$  & $0.602$ \\  \midrule
     RMSprop ($l = 0.001$)              & $0.590$ & $0.504$ & $0.144$  & $\bf{0.0856}$ \\ 
     RMSprop ($l = 0.01$)                  & $1.00$ & $0.699$ & $\bf{0.0653}$ & $0.125$ \\  \midrule
     SGD ($l = 0.001$)              & $0.902$ & $\bf{0.194}$ & $0.402$  & $0.433$ \\ 
     SGD ($l = 0.005$)                  & $1.24$ & $0.265$ & $\bf{0.143}$  & $0.241$  \\  \midrule
     ASGD ($l = 0.001$)              & $0.701$ & $\bf{0.34}$ & $0.401$  & $0.432$ \\ 
     ASGD ($l = 0.005$)                  & N/A & $\bf{0.166}$ & $0.478$  & $0.391$\\  \bottomrule
\end{tabularx}

   \caption{The $W_1$ distance to the multimodal target, varying the stochastic optimisation method. We used $10,000$ iterations for each experiment. The $l$ value next to the name of each optimisation method was the learning rate used. Bold values indicate which of the batch sizes obtained the best Wasserstein-1 distance for each given optimisation method. N/A values indicate when the optimisation method failed. %\matthew{I wanted to describe what each optimisation method is, but I would have to provide references too! Should I be reporting final KSD instead?}
   }
\label{table: other optimisation results}
\end{table*}

From \Cref{table: other optimisation results}, the most consistent and best performing optimisation methods were Adam and RMSprop, where the smaller learning rate of $l = 0.001$ seemed to perform best. For stochastic gradient descent (SGD) and averaged stochastic gradient descent (ASGD), since the learning rate is non-adaptive, if the initial learning rate $l$ is too large, the optimiser can fail to converge. This is evidenced by ASGD failing at a batch size of $25$ with $l = 0.005$ in \Cref{table: other optimisation results}. For each of the optimisation methods detailed in \Cref{table: other optimisation results}, all parameters other than the learning rate were set to their default values as specified in Pytorch.

\subsection{Investigating the Effect of Quasi-Monte Carlo Sampling} \label{subsec: QMC investigation}

To reduce the variance of the Monte-Carlo based gradient estimators, it was put forward in \cite{alex2018quasimonte} and \cite{wenzel2018quasi}, to instead use randomised Quasi-Monte Carlo (QMC) in constructing an unbiased estimator of the gradient. 
This is achieved by simply replacing the Monte-Carlo samples from the base distribution with samples from a (randomised) QMC sequence in a principled manner. This may be especially useful in our setting, where the variance of the U-statistic estimator of KSD is often quite large. In this section, we explore the replacement of the Monte-Carlo based U-statistic estimator in \Cref{prop: gradients} with a QMC-based estimator empirically. 
We first, however, briefly outline the rudimentary idea. Refer to \cite{alex2018quasimonte} and \cite{wenzel2018quasi} for the full detail.

A low-discrepancy sequence or a QMC sequence of a given length on $[0,1]^d$, roughly speaking, allocates points such that the number of points in a given measurable subset of $[0,1]^d$ is proportional to its volume. %Many QMC sequences, such as the Halton sequences or the Sobol sequences, are deterministic. This is inappropriate in our setting, since, for a fixed batch size, we would always be using the same set of samples at each iteration. Introducing randomness in an appropriate manner overcomes this issue, whilst retaining the valuable low-discrepancy property. 
A prototypical example of a randomised QMC estimator is the \textit{random shift modulo 1}, where the sequence is generated by first specifying a grid of values $x_i$ over $[0,1]^d$ and then sampling a $u\sim U([0,1]^d)$, the resulting QMC sequence is the set of points $x_i + u\mod{1}$. 
Using an appropriate measurable function $S:[0,1]^d\rightarrow\mathbb{R}^d$, one can use QMC to integrate with respect to a given distribution $P$, as long as $S_\#U([0,1]^d) = P$, by pushing forward the QMC sequence through $S$. 
\Cref{fig: QMC samples} plots a randomly shifted (modulo 1) grid in two-dimensions, along with its pushforward on to the standard Gaussian against a uniform sample.

Results are shown in \Cref{table: QMC results}, where we compare this prototypical QMC method with Monte Carlo; the convergence is shown in \Cref{fig: QMC vs MC convergence}. 
The transport map chosen was a NAF and of the same form as the NAF used in \Cref{subsec: toy examples} and specified in \Cref{subsec: details of toy models}. We used $8000$ iterations of Adam with learning rate $0.001$.
It appears that this QMC sequence generally performed worse than standard Monte-Carlo. 
These negative findings dissuaded us from further exploring QMC in this work.
However, it remains to be seen whether more advanced randomised QMC sequences, such as the scrambled Sobol sequence that was used in \cite{wenzel2018quasi}, provide performance gains relative to standard Monte Carlo.

%  An example of this is the \textit{scrambled Sobol sequence}, that was used in \cite{alex2018quasimonte} and \cite{wenzel2018quasi}. 

\begin{figure*}[h!] 
    \centering
    \begin{subfigure}[b]{0.48\textwidth}
        \centering
        \includegraphics[width = \textwidth]{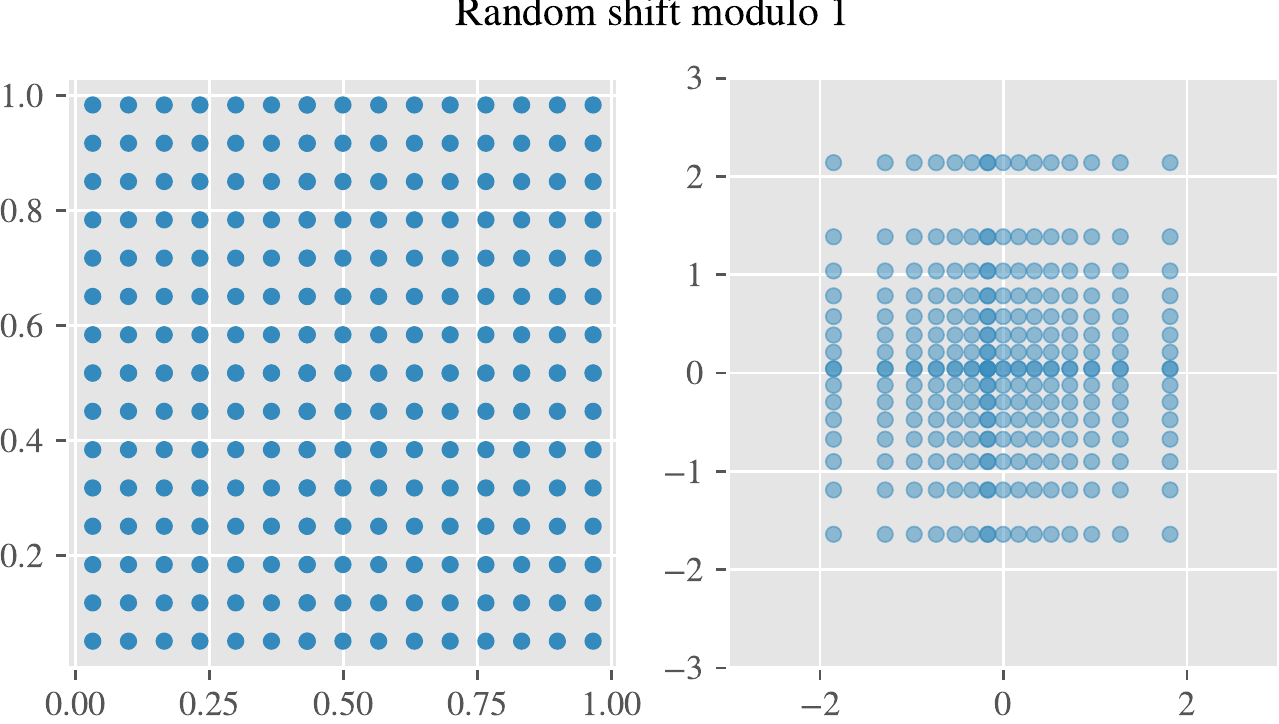}
        \caption{(Prototypical) quasi Monte Carlo}
    \end{subfigure}%
    \hfill
    \begin{subfigure}[b]{0.48\textwidth}
        \centering
        \includegraphics[width = \textwidth]{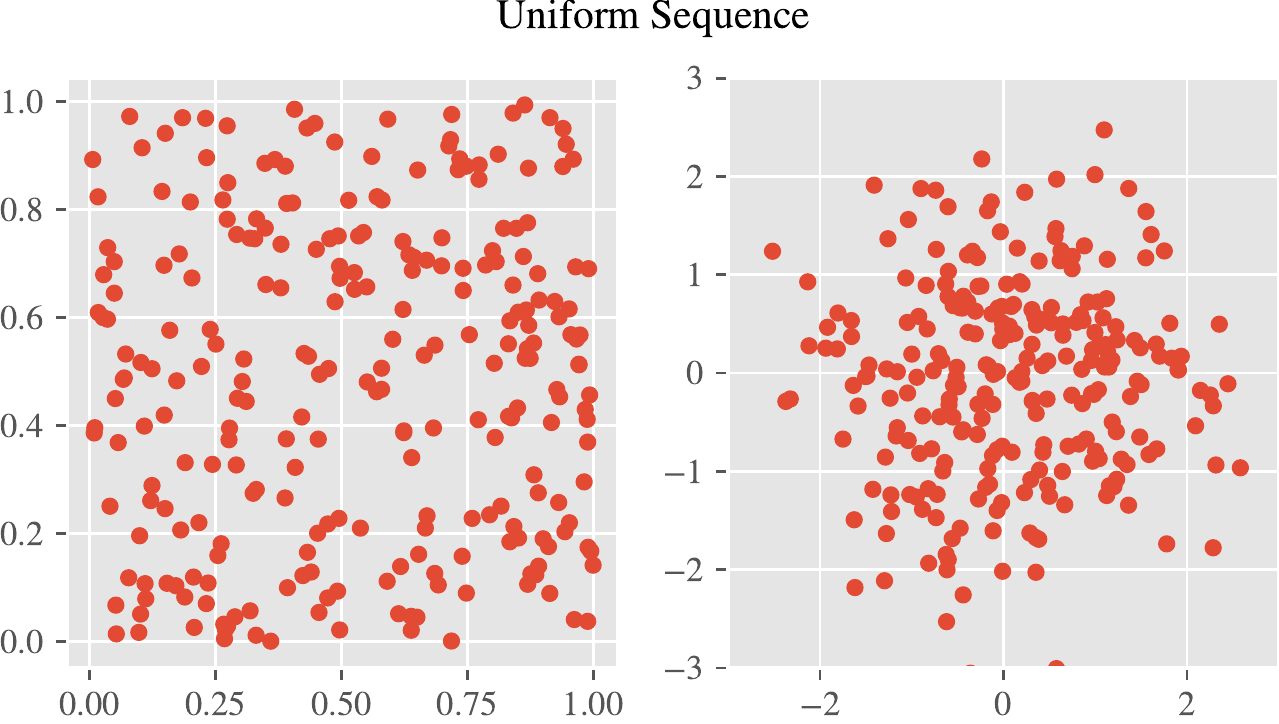}
        \caption{Monte Carlo}
    \end{subfigure} 
    \caption{A size $256$ quasi Monte Carlo point set (left) against a $256$ length uniform sequence (right), each alongside their pushforwards to the standard Gaussian.}
    \label{fig: QMC samples}
\end{figure*}

% We will use the scrambled Sobol sequence as our random QMC method of choice, using the \texttt{SobolEngine} function of Pytorch. We compare QMC against the standard Monte-Carlo for KSD based measure transport using the NAF and the ReLU neural network transport maps on the synthetic densities encountered in \Cref{subsec: toy examples}. We use the same initialisation and forms of these transport maps as detailed in \Cref{subsec: details of toy models}.

\begin{table*}[h!] 
    \newcolumntype{Y}{>{\centering\arraybackslash}X}

\begin{tabularx}{\textwidth}{@{}lYYY@{}} \toprule
     {Sampling Method} & {Sinusoidal} & {Banana} & {Multimodal}  \\ \midrule
     Random shifted (modulo 1) grid       & $0.59$  & $0.53$ & $0.22$  \\ 
     Monte Carlo      & $\bf{0.55}$  & $\bf{0.39}$ & $\bf{0.16}$  \\ \bottomrule
\end{tabularx}

   \caption{The resulting $W_1$ metrics from sampling using either quasi Monte Carlo or standard Monte Carlo. The transport map used was a NAF with $8000$ iterations of Adam. Bold values indicate which sampling rule performed best on each target.
   }
\label{table: QMC results}
\end{table*}

\begin{figure}[h!] 
   \centering
   \includegraphics[width=1\textwidth]{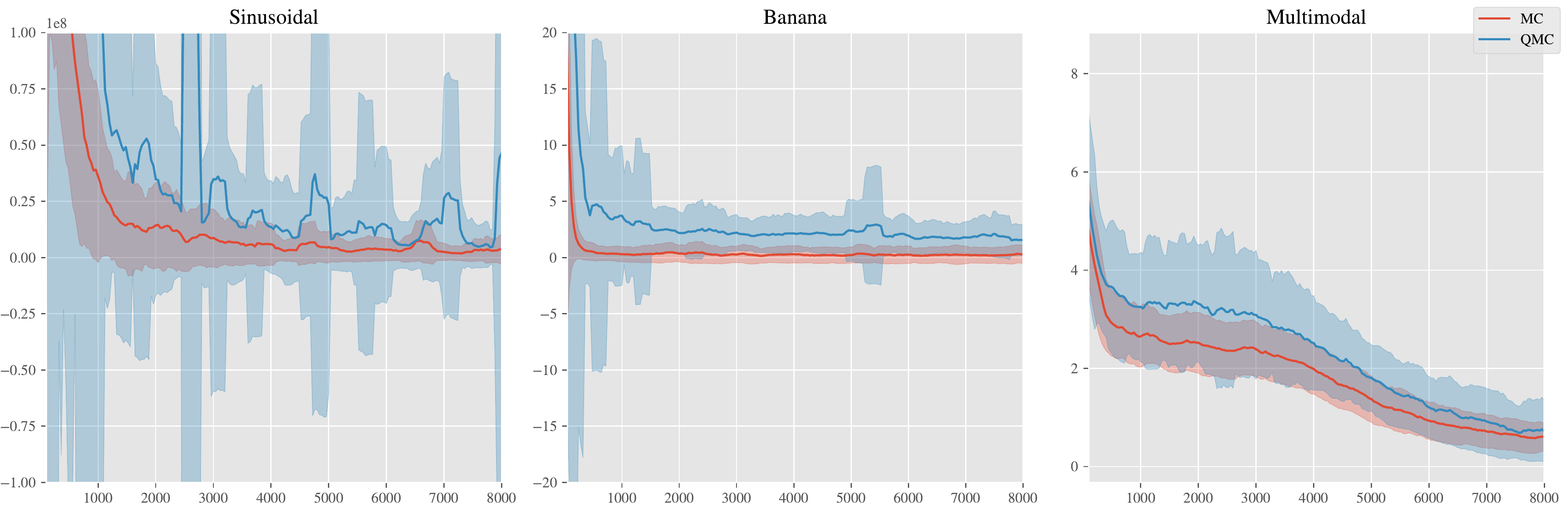}
\caption{
    Loss function against number of iterations of Adam for standard Monte Carlo (MC) and quasi Monte Carlo (QMC).
}
\label{fig: QMC vs MC convergence}
\end{figure}

\subsection{Investigating the Choice of Reference Distribution} \label{subsec: other Q}

All the experiments in the main text used a Gaussian distribution as the reference distribution $Q$.
However, different reference distributions could potentially offer some advantages, for instance in capturing thicker tails or multimodality \citep{izmailov2019semisupervised}. 
To investigate, we used the IAF as our transport map and compared the following reference distributions: a mixture of two Gaussians, a symmetric multivariate Laplace distribution, and the standard Gaussian used in \Cref{sec:experiments}. 
The mixture of two Gaussians reference distribution was of the form $\frac{1}{2}\mathcal{N}((0,-3),I_2) + \frac{1}{2}\mathcal{N}((0,3),I_2)$. The multivariate Laplace reference distribution was of the form $\text{Laplace}((0,0),I_2)$. The IAF we employed was the same one used in \Cref{subsec: toy examples} and fully specified in \Cref{subsec: details of toy models}. For each experiment we used the Adam optimiser with learning $0.001$ with $10,000$ iterations of Adam.
The target distributions were the synthetic distributions used in \Cref{subsec: toy examples} and fully specified in \Cref{subsec: details of toy models}. Results are shown in \Cref{table: different reference results} and notable output is shown in \Cref{fig: different reference}. 

%We used the random shift modulo 1 QMC rule and compared against standard MC in KSD based measure transport. We considered the synthetic densities specified in \Cref{subsec: toy examples}. 

\begin{table*}[h!] 
    \newcolumntype{Y}{>{\centering\arraybackslash}X}

\begin{tabularx}{\textwidth}{@{}lYYY@{}} \toprule
     {Reference Distribution $Q$} & {Sinusoidal} & {Banana} & {Multimodal}  \\ \midrule
     Laplace         & $0.45$  & $0.25$ & $1.4$  \\ 
     Gaussian Mixture          & \textbf{$0.23$}  & $0.25$ & \textbf{$0.46$}  \\
     Gaussian & $0.38$ & \textbf{$0.20$} & $0.67$ \\ \bottomrule
\end{tabularx}

   \caption{The resulting $W_1$ metrics from using different reference distributions $Q$ in KSD-based measure transport for the targets in \Cref{subsec: toy examples}. The transport map used for each experiment was an IAF and was optimised with $10,000$ iterations of Adam.
   }
\label{table: different reference results}
\end{table*}

\begin{figure}[h!] 
   \centering
   \includegraphics[width=1\textwidth]{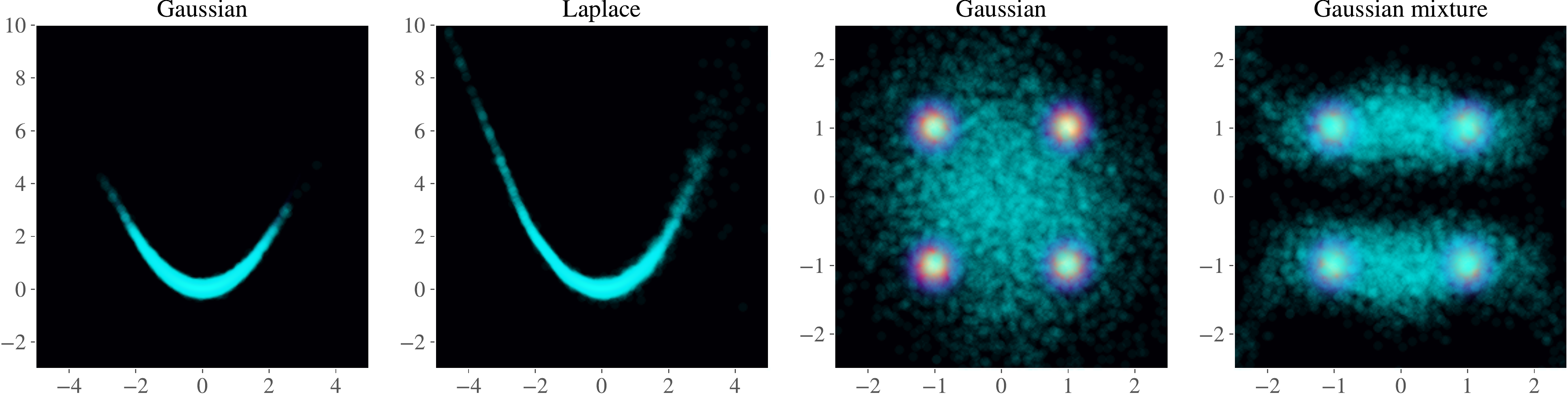}
\caption{
    Output of KSD-based measure transport using different reference distributions on the banana and multimodal targets. The reference distribution used in each experiment is indicated in the titles of the corresponding subplots.
}
\label{fig: different reference}
\end{figure}

Looking at \Cref{fig: different reference}, the heavier tails of the Laplace reference distribution resulted in heavier tailed output. Furthermore, the Gaussian mixture reference allowed the IAF two capture two modes, however it was unable to capture all four modes of the multimodal target.
Results in \Cref{table: different reference results} suggest it may be useful to consider the choice of $Q$ as part of the optimisation problem to be solved, although we did not attempt to do so in this work.

\subsection{Investigating the Choice of Lengthscale} \label{subsec: misspecified lengthscale}

In this section, we investigate how the choice of lengthscale $\ell$ of the inverse multi-quadric kernel (see \Cref{theorem:gorham inverse multi quadric}) can affect the output of KSD-based measure transport. We will see that, relative to the target distribution, if $\ell$ is too small the resulting output can be too focused if the target distribution has a relatively large dispersion, and on the other hand, if $\ell$ is too large, the resulting output can exhibit pathologies. To demonstrate, we will focus on the NAF transport map and consider simple Gaussian targets with covariance matrices $10I_2$ and $0.1I_2$. We will demonstrate the pathologies also occur in a more complex example of the multimodal problem encountered in \Cref{subsec: toy examples}. Output is shown in \Cref{fig: effect of lengthscale}. 

The NAF used is the same one used in the experiments in \Cref{subsec: toy examples} and specified in \Cref{subsec: details of toy models}. We used the Adam optimiser for $10,000$ iterations with learning rate $0.001$.

\begin{figure*}[h!] 
    \centering
    \begin{subfigure}[b]{0.33\textwidth}
        \centering
        \includegraphics[width = \textwidth]{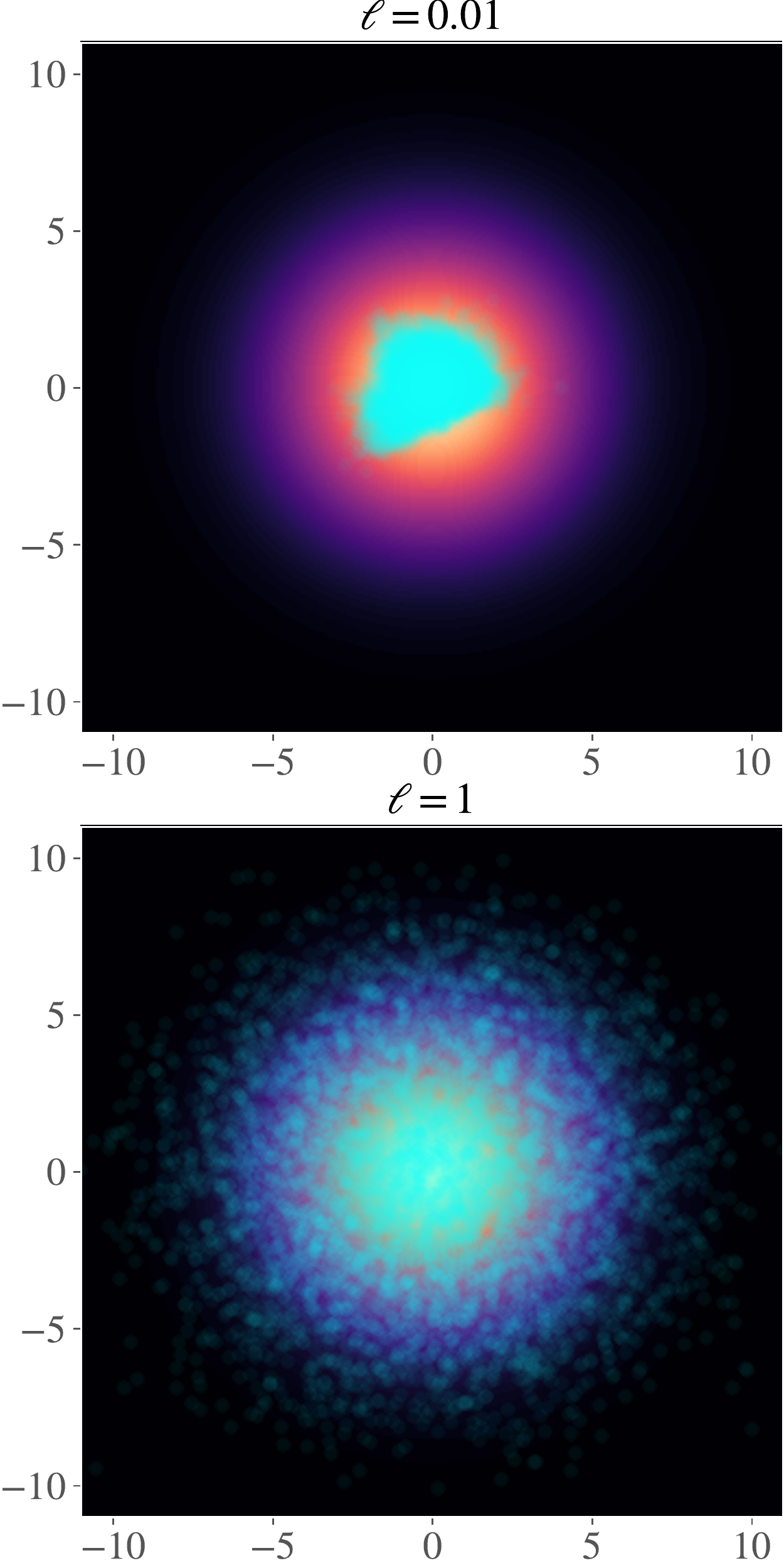}
        \caption{Gaussian $10I_2$}
    \end{subfigure}%
    \hfill
    \begin{subfigure}[b]{0.33\textwidth}
        \centering
        \includegraphics[width = \textwidth]{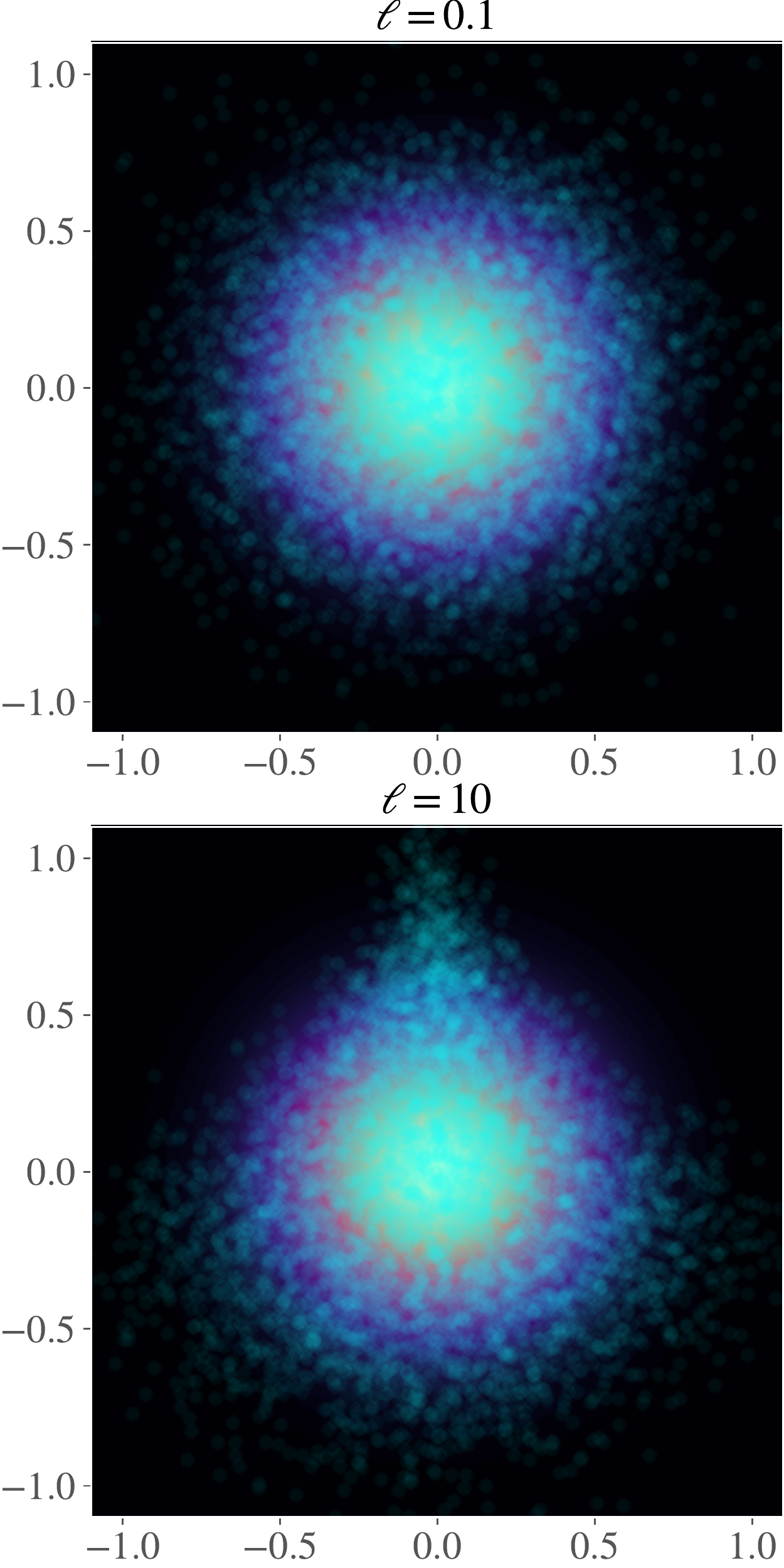}
        \caption{Gaussian $0.1I_2$}
    \end{subfigure} 
    \hfill
    \begin{subfigure}[b]{0.33\textwidth}
        \centering
        \includegraphics[width = \textwidth]{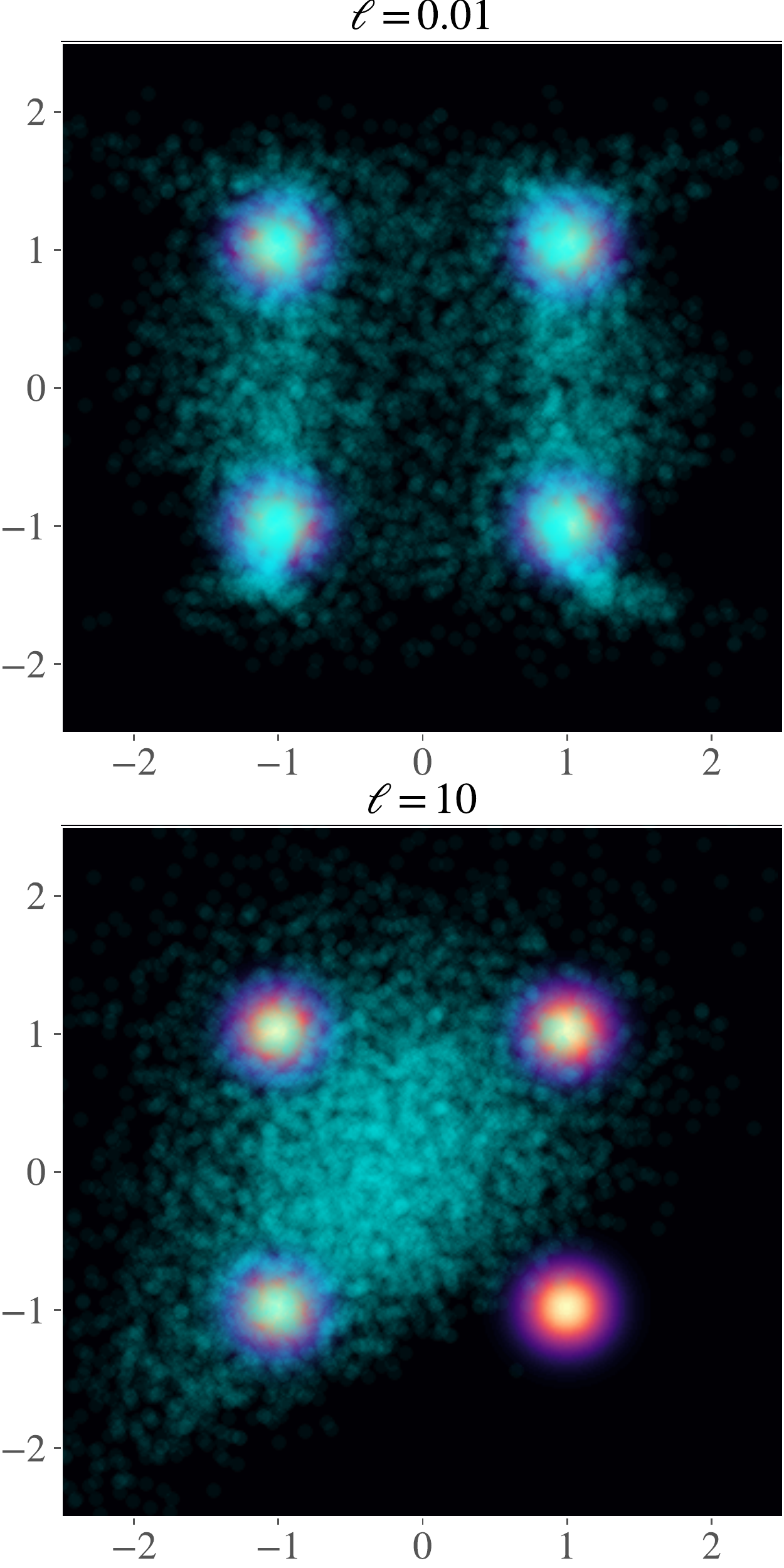}
        \caption{Multimodal}
    \end{subfigure} 
    \caption{KSD-based measure transport with varying choices of lengthscale $\ell$ for (a) a mean zero Gaussian target with covariance matrix $10I_2$, (b) a mean zero Gaussian target with covariance matrix $0.1I_2$ and (c) the multimodal problem encountered in \Cref{subsec: toy examples}. The lengthscale $\ell$ used in each experiment is indicated in the titles of the corresponding subplots.}
    \label{fig: effect of lengthscale}
\end{figure*}

\subsection{Investigating the effect of input dimension in the ReLU network transport map} \label{subsec: relu investigation}
In this section we investigate how changing input dimension of the ReLU network transport map can effect output. In order to isolate the input dimension as our variable of investigation, we fix the topology of the ReLU and consider ReLU networks of the form
\begin{equation*}
    f_{ReLU} =   F_3\circ \sigma \circ F_2 \circ \sigma \circ F_1,
\end{equation*}
where $F_1:\mathbb{R}^p\rightarrow\mathbb{R}^{20}$, $F_2:\mathbb{R}^{20}\rightarrow\mathbb{R}^{20}$ and $F_3:\mathbb{R}^{20}\rightarrow\mathbb{R}^2$ are affine transformations and $\sigma$ is the ReLU non-linearity, defined in \Cref{app: DNN}. We consider input dimensions $p = 1,2,3,4,5$. The target we considered is the multimodal target of \Cref{subsec: toy examples}. For each experiment, we pretrained each ReLU network on the reference distribution $\mathcal{N}((0,0),I_2)$ for $10,000$ iterations of Adam with learning rate $0.001$. We then trained on the multimodal target for $20,000$ iterations of Adam, again with learning rate $0.001$. Due to the random effects of initialisation (see \Cref{fig: multimodal failure}), we ran the experiments for $10$ different initialisations and report the best ones, see \Cref{table: relu input dimension}. The resulting transport maps are shown in \Cref{fig: relu input dimension output}. As we can see, the transport map struggles with the multiple modes when the input dimension was $p=1$ or $p=2$. 
\begin{table*}[h!] 
    \newcolumntype{Y}{>{\centering\arraybackslash}X}

\begin{tabularx}{\textwidth}{@{}lYYYYY@{}} \toprule
     Input dimension & {$p=1$} & {$p=2$} & {$p=3$} & {$p=4$} & $p=5$  \\ \midrule
     $W_1$ & $0.88$ & $0.39$ & $0.19$ & $0.27$ & $0.31$ \\ \bottomrule
\end{tabularx}

   \caption{The $W_1$ distance to the target $P$, as a function of the dimension $p$ of the reference distribution $Q$, using the ReLU neural network transport map. 
   %\matthew{This is a pretty small table... Maybe I should just report in text?}
   }
\label{table: relu input dimension}
\end{table*}

\begin{figure}[h!] 
   \centering
   \includegraphics[width=1\textwidth]{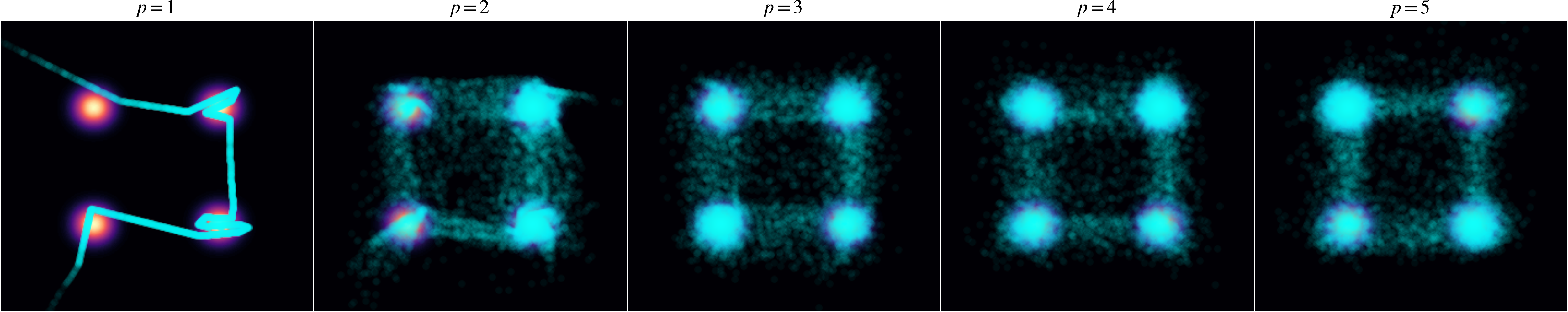}
\caption{
    Output of KSD-based measure transport using the ReLU network transport with varying input dimension. The input dimension used in each experiment is indicated in the titles of the corresponding subplots.
}
\label{fig: relu input dimension output}
\end{figure}

\subsection{Investigating the Effect of the U-statistic estimator vs. the V-statistic estimator} \label{subsec: U vs V}
Recall from \Cref{eq:KSD compute}, that the square of KSD is of the form
\begin{equation*}
    \mathcal{D}_S(P,P')^2 = \E_{Y,Y'\sim P'}\left[u_p(Y,Y')\right],
\end{equation*}
where $p$ is the density of $P$. For a given I.I.D. sample $\{y_i\}_{i=1}^n$ from $P'$, there are two natural estimators of $\mathcal{D}_S(P,P')^2$. The first is the \textit{V-statistic} (KSD-V),
\begin{equation*}
    \hat{\mathcal{D}}_S^V(P,P') = \frac{1}{n^2}\sum_{i=1}^n\sum_{j=1}^n u_p(y_i,y_j),
\end{equation*}
and the second is the \textit{U-statistic} (KSD-U),
\begin{equation*}
    \hat{\mathcal{D}}_S^U(P,P') = \frac{1}{n(n-1)}\sum_{1\leq i\neq j\leq n} u_p(y_i,y_j),
\end{equation*}
which is simply the V-statistic with the diagonal $i=j$ elements removed. The advantage of U-statistic is that it is unbiased and, for any given sample, provides the minimum-variance unbiased estimator (MVUE) \citep{liu2016kernelized}. On the other hand, the V-statistic provides a non-negative estimator, due to the positive-definiteness of $u_p$. 

To explore the differences between KSD-U and KSD-V as the objective, we re-ran the synthetic test bed experiments along with the majority of the transport maps in \Cref{subsec: toy examples}. Results are reported in \Cref{table: ksd u vs v}. It appears that, although KSD-U and KSD-V often have very close outcomes, KSD-U seems to be strictly better than KSD-V. The transport maps and their initialisation were the same as was used in \Cref{subsec: toy examples} and fully specified in \Cref{subsec: details of toy models}. For each experiment we used $10,000$ iterations of Adam with learning rate $0.001$.

\begin{table*}[t!] 
    \newcolumntype{Y}{>{\centering\arraybackslash}X}

\begin{tabularx}{\textwidth}{@{}llYYYYYY@{}} \toprule
        &  & \multicolumn{2}{c}{Sinusoidal}  & \multicolumn{2}{c}{Banana} & \multicolumn{2}{c}{Multimodal} \\
      \cmidrule(lr){3-4} \cmidrule(l){5-6} \cmidrule(l){7-8} 
     {Transport Map} & {$N$} & {KSD-U} & {KSD-V} & {KSD-U} & {KSD-V} &  {KSD-U} & {KSD-V} \\ \midrule
     IAF  & $10^4$                & $\textbf{0.38}$  & $0.39$ & $\bf{0.20}$ & 0.25  & 0.67 & $\bf{0.61}$ \\
     IAF (stable) & $10^4$        & $\textbf{0.35}$ & 0.36 & $\bf{0.16}$  &  0.19 & \textbf{0.61}  & $\bf{0.61}$ \\ 
     NAF   & $10^4$               & $\textbf{0.55}$ & 0.58  & $\bf{0.39}$  & 0.43 &  \textbf{0.095} & 0.12  \\ 
     SAF  & $10^4$                & $\textbf{0.23}$ & 0.27  & $\bf{0.20}$  & 0.48 & \textbf{0.30}  & 1.2 \\ 
     B-NAF  & $10^4$                & $\textbf{0.78}$ & 0.85  & $\bf{0.70}$  & $\bf{0.70}$ & $\bf{1.0}$   & $\bf{1.0}$  \\ 
     Polynomial (cubic)  & $10^4$ & $\textbf{0.40}$ & 0.61 & $\bf{0.25}$  & $0.41$ & 0.51  & $\bf{0.40}$ \\ 
    IAF mixture  & $3{\times}10^4$ & $\bf{1.3}$ & $\bf{1.3}$ & $\bf{0.19}$  & 0.39  & $\bf{0.037}$  & 0.040 \\ 
     ReLU network & $5{\times} 10^4$& $\textbf{0.71}$ & 0.96 & \textbf{0.43}  & 0.53  & \textbf{0.22}  &  1.2 \\\bottomrule
\end{tabularx}

   \caption{Results from the synthetic test-bed using either the U-statistic (KSD-U) or V-statistic (KSD-V) form of KSD as our objective. The first column indicates with parametric class of transport maps was used; full details for each class can be found in \Cref{subsec: details of toy models}. A map-dependent number of iterations of stochastic optimisation, $N$, are reported - this is to ensure all the optimisers approximately converged. The table reports the Wasserstein-1 metric between the approximation $T_\#Q$ and the target $P$. Bold values indicate which of KSD-U or KSD-V performed best for each transport map.
   }
\label{table: ksd u vs v}
\end{table*}

\subsection{Pathologies of KSD for Measure Transport} \label{subsec: pathologies}

Since KSD is a score-based method, it may exhibit similar pathologies to other score-based methods; see e.g. \cite{wenliang2020blindness}. In this section, we detail certain pathologies of KSD-based measure transport that we found experimentally and, if available, offer potential mitigation strategies.

\paragraph{Point Convergence:}{For small batch sizes (e.g. 25) in the sinusoidal synthetic experiment of \Cref{subsec: toy examples}, it was observed (albeit rarely) that, when using the ReLU transport map, the transport map converged to a limit in which all inputs were mapped to the origin. 
The origin is the mode of the sinusoidal synthetic density. This only occurred when using a degenerate initialisation and for small batch sizes.
%\matthew{I was actually unable to recreate this issue!}. 
Due to the tightness of the sinusoidal target and the resulting large score values at points $x$ diverging away slightly from the support of the target, the KSD value of output for all the transport maps used in \Cref{subsec: toy examples} was generally of the order $10^8$. However, for the approximating transport map maps everything to the origin, the resulting KSD score was $200$ with $\ell = 0.1$. 
Thus, from the perspective of KSD, a transport map that maps everything to the origin and thus having poor Wasserstein distance, was considered better than transport maps that obtain smaller Wasserstein distances. 
%\matthew{I have worded this pretty badly...} 

This problem was mitigated by using better initialisations, larger batch sizes and pretraining on the reference distribution. 
}

\paragraph{Multimodal Failure:}{It was found that, particularly with ReLU neural network transport map, the inferred transport map for the multimodal synthetic distribution in \Cref{subsec: toy examples} could fail to find all four high density regions in the target. 
The outcome was highly dependent on the initialisation and even persisted when pretraining on the reference distribution.  For example, in \Cref{fig: multimodal failure} we plot three random initialisations of the ReLU transport map used in \Cref{subsec: toy examples}. Each transport map was pretrained on their reference distribution for $10,000$ iterations of Adam. The KSD estimates with $\ell = 0.1$ using $10,000$ samples from the approximating distributions were $0.194$, $0.205$ and $0.147$ from left to right respectively. The three outputs achieved broadly similar KSD scores, indicating that KSD was not able to differentiate between these outcomes. This is problem is further exacerbated as the number of modes of the target increases.
For discussion of remedies, see \cite{wenliang2020blindness}.

\begin{figure}[h!] 
   \centering
   \includegraphics[width=1\textwidth]{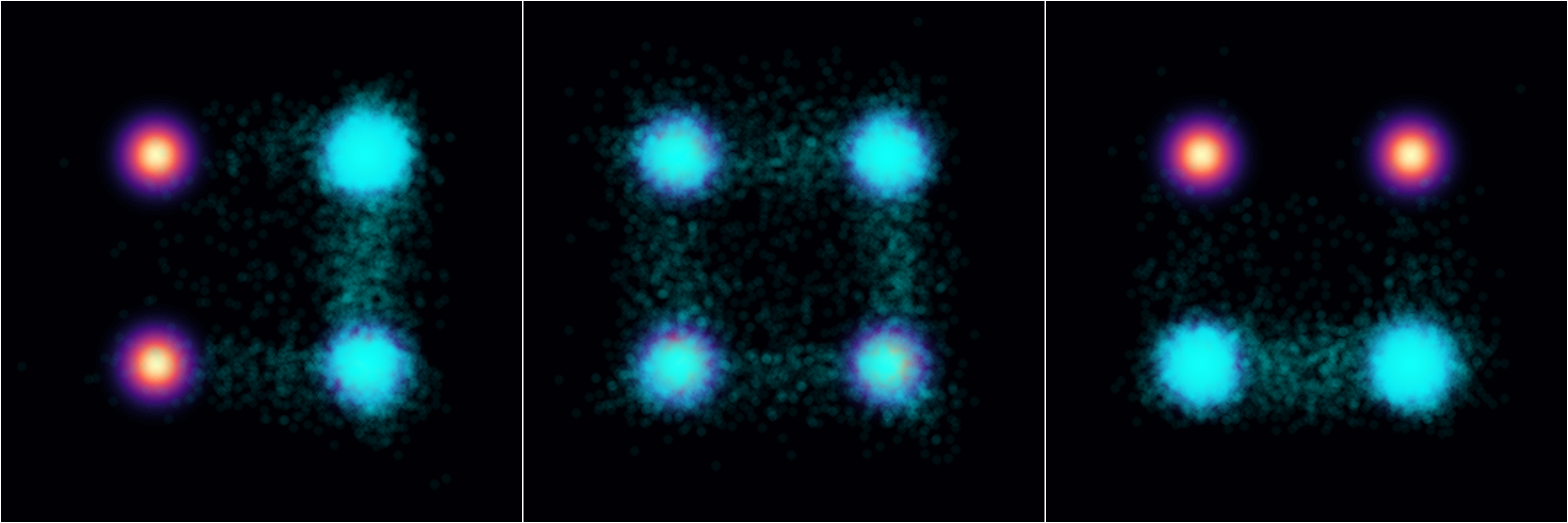}
\caption{
    Output of KSD based measure transport with three random initialisations of the ReLU transport map. Each transport map was pretrained on their reference distribution for $10,000$ iterations of Adam. The KSD estimates with $\ell = 0.1$ using $10,000$ samples from the approximating distributions were $0.194$, $0.205$ and $0.147$ from left to right respectively.
}
\label{fig: multimodal failure}
\end{figure}
}

\paragraph{Poor Choice of Lengthscale:}
Finally, as we have seen in \Cref{subsec: misspecified lengthscale}, if the choice of $\ell$ is poor, the resulting output of KSD based measure transport can exhibit pathologies. This is demonstrated, for instance, in \Cref{fig: effect of lengthscale}.
This issue is remedied by, for example, employing the median heuristic \cite{garreau_2018} to set the length-scale parameter in the kernel.

%There may exist further pathologies of KSD-based measure transport that we did not uncover in this work. 
%For instance, the performance in very high dimensions $d$ of $\mathcal{Y}$ was not investigated.

\end{document}